\newcommand{\dfn}{:=}
\renewcommand{\phi}{\varphi}
\renewcommand{\emptyset}{\varnothing}
\newcommand{\eg}{e.g.}
\newcommand{\IH}{\mathit{IH}}
\renewcommand{\vec}[1]{\mathbf{#1}}
\newtheorem{theorem}{Theorem}
\newtheorem{observation}[theorem]{Observation}
\newtheorem{question}[theorem]{Question}
\newtheorem{corollary}[theorem]{Corollary}
\newtheorem{proposition}[theorem]{Proposition}
\newtheorem{lemma}[theorem]{Lemma}
\newtheorem*{question*}{Question}
\theoremstyle{definition}
\newtheorem{definition}[theorem]{Definition}
\newtheorem{example}[theorem]{Example}
\newtheorem{notation}[theorem]{Notation}
\newcommand{\cfal}{\mathsf{f}}
\newcommand{\ctru}{\mathsf{t}}
\newcommand{\cnot}[1]{\overline{#1}}
\newcommand{\cor}{\vee}
\newcommand{\cand}{\wedge}
\newcommand{\lneg}[1]{\cnot{#1}}
\newcommand{\limp}{ \multimap}
\newcommand{\plimp}{\limp^+}
\newcommand{\seqar}{\vdash}
\renewcommand{\models}{\vDash}
\newcommand{\notmodels}{\nvDash}
\newcommand{\MALL}{\mathsf{MALL}}
\newcommand{\aMALL}{\MALL\mathsf{w}}
\newcommand{\focMALL}{\mathsf{F}\MALL}
\newcommand{\afocMALL}{\mathsf{F}\aMALL}
\newcommand{\Var}{\mathsf{Var}}
\newcommand{\Nat}{\mathbb{N}}
\newcommand{\cpl}{\mathsf{CPL}}
\newcommand{\qcpl}{\cpl2}
\newcommand{\sat}{\mathsf{SAT}}
\newcommand{\Prov}[1]{#1\text{-}\mathsf{Prov}}
\newcommand{\dec}{D}
\newcommand{\codec}{\bar{\dec}}
\newcommand{\rel}{R}
\newcommand{\corel}{\bar{\rel}}
\newcommand{\id}{\mathit{id}}
\newcommand{\wkid}{\mathit{wid}}
\newcommand{\wk}{\mathit{w}}
\newcommand{\lefrul}[1]{#1_l}
\newcommand{\rigrul}[1]{#1_r}
\newcommand{\taut}{\mathit{tr}}
\newcommand{\cut}{\mathit{cut}}
\newcommand{\negtrans}[1]{#1^{-}}
\newcommand{\postrans}[1]{#1^+}
\newcommand{\qltrans}[1]{[#1]}
\newcommand{\lqtrans}[1]{\langle #1 \rangle}
\newcommand{\ndcomp}{\lceil \sigma\rceil}
\newcommand{\condcomp}{\lceil\pi \rceil}
\newcommand{\ptime}{\mathbf{P}}
\newcommand{\nptime}{\mathbf{NP}}
\newcommand{\conptime}{\mathbf{coNP}}
\newcommand{\pspace}{\mathbf{PSPACE}}
\newcommand{\pth}{\mathbf{PH}}
\def\@settitle{\begin{center}%
		\baselineskip14\p@\relax
		\bfseries
		\uppercasenonmath\@title
		\@title
		\ifx\@subtitle\@empty\else
		\\[2ex]\uppercasenonmath\@subtitle
		\footnotesize\mdseries\@subtitle
		\fi
	\end{center}%
}
\def\subtitle#1{\gdef\@subtitle{#1}}
\def\@subtitle{}
\begin{document}

\title{From QBFs to MALL and back via focussing }
\subtitle{Fragments of multiplicative additive linear logic for\\ each level of the polynomial hierarchy}

\author{Anupam Das}
\address{University of Birmingham}
\email{a.das@bham.ac.uk}
\thanks{The author was supported by a Marie Sk\l{}odowska-Curie fellowship, \href{http://cordis.europa.eu/project/rcn/209401_en.html}{ERC project 753431}, while conducting this research.}
\date{\today}

\maketitle

\begin{abstract}
In this work we investigate how to extract alternating time bounds from `focussed' proof systems.
Our main result is the obtention of fragments of $\aMALL$ ($\MALL$ with weakening) complete for each level of the polynomial hierarchy. 
In one direction we encode QBF satisfiability and in the other we encode focussed proof search, and we show that the composition of the two encodings preserves quantifier alternation, yielding the required result. 
By carefully composing with well-known embeddings of $\aMALL$ into $\MALL$, we obtain a similar delineation of $\MALL$ formulas, again carving out fragments complete for each level of the polynomial hierarchy.
This refines the well-known results that both $\aMALL$ and $\MALL$ are $\pspace$-complete.

A key insight is that we have to refine the usual presentation of focussing to account for \emph{deterministic} computations in proof search, which correspond to invertible rules that do not branch. 
This is so that we may more faithfully associate phases of focussed proof search to their alternating time complexity.
This presentation seems to uncover further dualities at the level of proof search than usual presentations, so could be of further proof theoretic interest in its own right.
\end{abstract}

	\section{Introduction and motivation}
%
%
%
%
%
Proof search is one of the most general ways of deciding formulas of expressive logics, both automatically and interactively.
In particular,
proof systems can often be found to yield optimal decision algorithms, in terms of asymptotic complexity.
To this end, we now know how to extract bounds for proof search in terms of various properties of the proof system at hand.
For instance we may establish:
\begin{itemize}
	\item nondeterministic time bounds via \emph{proof complexity}, \eg\  \cite{CookReck:79:The-Rela:mf,Krajicek:1996:BAP:225488,Cook:2010:LFP:1734064};
	\item (non)deterministic space bounds via the \emph{depth} of proofs or search spaces, and \emph{loop-checking}, \eg\ \cite{buss2003depth,heuerding1996efficient,ono1998proof}; 
	\item deterministic or co-nondeterministic time bounds via systems of \emph{invertible rules}, see \eg\ \cite{TroeSchw:96:Basic-Pr:mi,negri2008structural}.
\end{itemize}


\noindent
However, despite considerable progress in the field, there still remains a gap between the obtention of (co-)nondeterministic time bounds, such as $\nptime$ or $\conptime$, and space bounds such as $\pspace$.
Phrased differently, while we have many logics we know to be $\pspace$-complete (intuitionistic propositional logic, various modal logics, etc.), we have very little understanding of their fragments corresponding to subclasses of $\pspace$.

An alternative view of space complexity is in terms of \emph{alternating time complexity}, where a Turing machine may have both existential (i.e.\ nondeterminstic) and universal (i.e.\ co-nondeterministic) branching states.
In this way $\pspace $ is known to be equivalent to {alternating polynomial time} \cite{Chandra81:alternation}.
This naturally yields a hierarchy of classes delineated by the number of alternations permitted in an accepting run, known as the \emph{polynomial hierarchy} ($\pth$) \cite{Stockmeyer76}, of which both $\nptime$ and $\conptime$ are special cases. 
An almost exact instantiation of this (in a non-uniform setting) is the QBF hierarchy, where formulae are distinguished by their number of quantifier alternations in prefix notation.
This raises the following open-ended question:

\begin{question}
	How do we identify natural fragments of $\pspace$-complete logics complete for levels of the polynomial hierarchy? In particular, can proof theoretic methods help?
\end{question}

In previous work, \cite{Das:17:focussing}, we considered this question for intuitionistic propositional logic, obtaining partial answers for certain expressive fragments.	
In this work we consider the case of \emph{multiplicative additive linear logic} ($\MALL$) \cite{Girard87}, and its affine variant which admits weakening ($\aMALL$); both of these are often seen as the prototypical systems for $\pspace$ since their inference rules constitute the abstract templates of terminating proof search.
Indeed, both $\MALL$ and $\aMALL$ are well-known to be $\pspace$-complete \cite{LincolnMSS90,LincolnMSS92}, results that are subsumed by this work.\footnote{The result for $\aMALL$ is not explicitly stated in \cite{LincolnMSS90,LincolnMSS92}, but it is a folklore result that follows from those methods.}
By considering a `focussed' presentation of $\MALL(\mathsf{w})$, we analyse proof search to identify classes of theorems belonging to each level of $\pth$. 
To demonstrate the accuracy of this method, we also show that these classes are, in fact, \emph{complete} for their respective levels, via encodings from true quantified Boolean formulas (QBFs) of appropriate quantifier complexity, cf.~\cite{Chandra81:alternation}.

The notion of \emph{focussing} is a relatively recent development in structural proof theory that has emerged over the last 20-30 years, \eg\ \cite{Andreoli92,laurent:tel-00007884,liang2009focusing}.
Focussed systems elegantly delineate the phases of invertible and non-invertible inferences in proofs, allowing the natural obtention of {alternating time} bounds for a logic.
Furthermore, they significantly constrain the number of local choices available, resulting in reduced nondeterminism during proof search, while remaining complete.
This result is known as the `focussing' or `focalisation' theorem.
Such systems thus serve as a natural starting point for 
identifying fragments of $\pspace$-complete logics complete for levels of $\pth$.


One shortfall of focussed systems is that, in their usual form, they do not make adequate consideration for \emph{deterministic} computations, which correspond to invertible rules that do not branch, and so the natural measure of complexity there (`decide depth') can considerably overestimate the alternating time complexity of a theorem. 
In the worst case this can lead to rather degenerate bounds, exemplified in \cite{Das:17:focussing} where an encoding of $\sat$ in intuitionistic logic requires a linear decide depth, despite being $\nptime$-complete.\footnote{Indeed, a similar gross overestimation presents in this work if we had used decide depth as our measure of complexity, cf.~Fig.~\ref{fig:ql-der-exists}.}
To deal with this issue \cite{Das:17:focussing} proposed a more controlled form of focussing called \emph{over-focussing}, which allows deterministic steps within synchronous phases, but as noted there this method is not available in $\MALL$ due to the context-splitting $\vlte$ rule.
Instead, in this work we retain the classical abstract notion of focussing, but split the usual invertible, or `asynchronous', phase into a `deterministic' phase, with non-branching invertible rules, and a `co-nondeterministic' phase, with branching invertible rules.
In this way, when expressing proof search as an alternating predicate, a $\forall$ quantifier needs only be introduced in a co-nondeterministic phase.
It turns out that this adaptation suffices to obtain the tight bounds we are after.

This is an extended version of the conference paper \emph{Focussing, $\MALL$ and the polynomial hierarchy} \cite{Das18:ijcar} presented at \emph{IJCAR '18}.
The main differences in this work are the following:
\begin{itemize}
	\item More proof details are provided throughout, in particular for the various intermediate results of Sects.~\ref{sect:qltrans}, \ref{sect:proof-search-predicates} and \ref{sect:lqtrans}.
	\item A whole new section, Sect.~\ref{sect:mallph}, is included which extends the main results of \cite{Das18:ijcar} to pure $\MALL$, i.e.\ \emph{without} weakening.
	\item The exposition is generally expanded, with further commentary and insights throughout.
\end{itemize}
In general, Sects.~2-6 of \cite{Das18:ijcar} cover the same content as their respective sections in this work, although theorem numbers are different.
\smallskip 

This paper is structured as follows. 
In Sect.~\ref{sect:prelims-logic-complexity} we present preliminaries on QBFs and alternating time complexity, and in Sect.~\ref{sect:prelims-ll-proof-search} we present preliminaries on $\MALL(\mathsf w)$ and focussing.
In Sect.~\ref{sect:qltrans} we present an encoding of true QBFs into $\aMALL$, tracking the association between quantifier complexity and alternation complexity of focussed proof search.
In Sect.~\ref{sect:proof-search-predicates} we explain how provability predicates for focussed systems may be obtained as QBFs, with quantifier complexity calibrated appropriately with alternation complexity (the `focussing hierarchy').
In Sect.~\ref{sect:lqtrans} we show how this measure of complexity can be feasibly approximated to yield a bona fide encoding of $\aMALL$ back into true QBFs. Furthermore, we show that the composition of the two encodings preserves quantifier complexity, thus yielding fragments of $\aMALL$ complete for each level of the polynomial hierarchy.
Sect.~\ref{sect:mallph} extends this approach to pure $\MALL$ via carefully composing with a certain encoding of $\aMALL$ into $\MALL$.
Finally, 
in Sect.~\ref{sect:concs} we give some concluding remarks and further perspectives on our presentation of focussing.

%
%
%

%
%
%
%
%

\section{Preliminaries on logic and computational complexity}
\label{sect:prelims-logic-complexity}
In this section we will recall some basic theory of Boolean logic, and its connections to alternating time complexity.
%
%

This section follows Sect.~2 of \cite{Das18:ijcar}, except that we include constants (or `units') for generality here, and we also include a presentation of `Boolean Truth Trees' in Sect.~\ref{sect:sect:BTTs}.

\subsection{Second-order Boolean logic}
\emph{Quantified Boolean formulas} (QBFs) are obtained from the language of classical propositional logic by adding `second-order' quantifiers, varying over propositions.
Formally,
let us fix some set $\Var$ of propositional \emph{variables}, written $x,y$ etc.
QBFs, written $\phi, \psi$ etc., are generated as follows:
\[
\phi
\quad ::= \quad
\cfal \ | \ \ctru \ | \ 
x \  |\ \cnot x \ | \ (\phi \cor \phi) \ | \  (\phi \cand \phi) \ | \ \exists x . \phi \ | \ \forall x . \phi
\]
We write $\cfal$ and $\ctru$ for the classical truth constants \emph{false} and \emph{true} respectively, so that they are not confused with the units from linear logic later.

The formula $\cnot x$ stands for the \emph{negation} of $x$, and all formulas we deal with will be in \emph{De Morgan normal form}, i.e.\ with negation restricted to variables as in the grammar above.
Nonetheless, we may sometimes write $\cnot \phi$ to denote the De Morgan \emph{dual} of $\phi$, generated by the following identities:
\[
\cnot {\cnot x }\ \dfn \ x
\qquad
\begin{array}{rcl}
\cnot \cfal \ &\  \dfn\  &\  \ctru\\
\noalign{\smallskip}
\cnot \ctru \  &\  \dfn \ &\  \cfal
\end{array}
\qquad
\begin{array}{rcl}
\cnot{(\phi \cor \psi)} \ &\ \dfn\  &\ \cnot\phi \cand \cnot\psi
\\
\noalign{\smallskip}
\cnot {(\phi \cand \psi)} \ &\ \dfn\  &\ \cnot\phi \cor \cnot\psi 
\end{array} 
\qquad
\begin{array}{rcl}
\cnot{\exists x . \phi} \ &\ \dfn \ &\ \forall x . \cnot\phi
\\
\noalign{\smallskip}
\cnot{\forall x . \phi} \ &\ \dfn \ &\ \exists x . \cnot\phi
\end{array}
\]
\noindent
A formula is \emph{closed} (or a \emph{sentence}) if all its variables are bound by a quantifier ($\exists$ or $\forall$).
We write $|\phi|$ for the number of occurrences of literals (i.e.\ $x$ or $\cnot x$) in $\phi$.

\medskip

An \emph{assignment} is a function $\alpha : \Var \to \{0,1\}$, here construed as a set $\alpha \subseteq \Var$ in the usual way. We define the \emph{satisfaction} relation between an assignment $\alpha$ and a formula $\phi$, written $\alpha \models \phi$, in the usual way:

\medskip

\noindent\begin{minipage}[t]{.4\textwidth}
	\begin{itemize}
		\item $\alpha \notmodels \cfal$.
		\item $\alpha \models \ctru$.
		
		\smallskip
		
		\item $\alpha \models x$ if $x \in \alpha$.
		\item $\alpha \models \cnot x $ if $x \notin \alpha$.
	\end{itemize}
\end{minipage}
\noindent\begin{minipage}[t]{.55\textwidth}
	\begin{itemize}
		\item $\alpha \models \phi \cor \psi $ if $\alpha \models \phi$ or $\alpha \models \psi$.
		\item $\alpha \models \phi \cand \psi $ if $\alpha \models \phi$ and $\alpha \models \psi$.
		\smallskip
		\item $\alpha \models \exists x . \phi$ if $\alpha \setminus \{x\} \models \phi$ or $\alpha \cup \{x \}\models \phi $.
		\item $\alpha \models \forall  x . \phi$ if $\alpha \setminus \{x\} \models \phi$ and $\alpha \cup \{x \}\models \phi $.
	\end{itemize}
\end{minipage}


\begin{definition}
	[Second-order Boolean logic]
	A QBF $\phi$ is \emph{satisfiable} if there is some assignment $\alpha \subseteq \Var $ such that $\alpha \models \phi$.
	It is \emph{valid} if $\alpha \models \phi$ for every assignment $\alpha \subseteq \Var$.
	If $\phi$ is closed, then we may simply say that it is \emph{true}, written $\models \phi$, when it is satisfiable and/or valid.
	
	\emph{Second-order Boolean logic} ($\qcpl$) is the set of true QBFs.
\end{definition}

In practice, when dealing with a given formula $\phi$, we will only need to consider assignments $\alpha$ that contain variables occurring in $\phi$.
We will assume this later when we discuss predicates (or `languages') computed by open QBFs.

We point out that, from the logical point of view, it suffices to work with only closed QBFs, with satisfiability recovered by prenexing $\exists$ quantifiers and validity recovered by prenexing $\forall$ quantifiers.

\begin{definition}
	[QBF hierarchy]
	For $k \geq 0$ we define the following classes:
	\begin{itemize}
		\item $\Sigma^q_0 = \Pi^q_0$ is the set of quantifier-free QBFs.
		\item $\Sigma^q_{k+1} \supseteq \Pi^q_k$ and, if $\phi \in \Sigma^q_{k+1}$, then so is $\exists x. \phi$. 
		\item $\Pi^q_{k+1} \supseteq \Sigma^q_k$ and, if $\phi \in \Pi^q_{k+1}$, then so is $\forall x. \phi$. 
	\end{itemize}
\end{definition}
Notice that $\phi \in \Sigma^q_k$ if and only if $\cnot \phi \in \Pi^q_k$, by the definition of De Morgan duality.

\smallskip

We have only defined the classes above for `prenexed' QBFs, i.e.\ with all quantifiers at the front. It is well known that any QBF is equivalent to such a formula.
For this reason we will henceforth assume that any QBF we deal with is in prenex normal form. 
In this case we call its quantifier-free part, i.e.\ its largest quantifier-free subformula, the \emph{matrix}.

\subsection{Boolean Truth Trees}
\label{sect:sect:BTTs}

In this work we will not need to formally deal with any deduction system for $\qcpl$, although we point out that there is a simple system whose proof search dynamics closely match quantifier complexity, e.g.\ studied in \cite{Letz02}.
We will briefly present a simplified system in order to exemplify the connection with alternating time complexity.

\emph{Boolean Truth Trees} (BTTs) are a proof system whose lines are closed prenexed QBFs.
Its inference rules are as follows,
\[
\vlinf{\taut}{}{\tau}{}
\qquad
\vlinf{\exists}{}{\exists x . \phi}{\phi[\cfal/x]}
\qquad
\vlinf{\exists}{}{\exists x . \phi}{\phi[\ctru/x]}
\qquad
\vliinf{\forall}{}{\forall x . \phi}{\phi[\cfal/x]}{\phi[\ctru/x]}
\]
where $\tau $ varies over true quantifier-free sentences, i.e.\ true $(\cor,\cand)$-combinations of $\cfal $ and $\ctru$.
Note that we could have further broken down the $ \taut$ rule into several local computation rules, but that is independent of the current analysis.

\begin{example}
	Temporarily write $\vlor$ for the exclusive-or function, i.e.\ $x\vlor y $ is true if either $x$ is true or $y$ is true but not both. 
	The following is a BTT proving $\forall x . \exists y . (x \vlor y)$:
	\[
	\vlderivation{
		\vliin{\forall}{}{\forall x . \exists y . (x \vlor y)}{
			\vlin{\exists}{}{\exists y . (\cfal \vlor y)}{
				\vlin{\taut}{}{\cfal \vlor \ctru}{\vlhy{}}
			}
		}{
			\vlin{\exists}{}{\exists y . (\ctru \vlor y)}{
				\vlin{\taut}{}{\ctru \vlor \cfal}{\vlhy{}}
			}
		}
	}
	\]
\end{example}

Notice that a $\forall$ step is invertible, i.e.\ its conclusion is true just if \emph{every} premiss is true.
On the other hand, an existential formula is true just if \emph{some} $\exists$ step applies.
In this way we can describe the proof search process itself by some `alternating' predicate whose matrix is just a truth-checker for quantifier-free sentences, a deterministic computation.
It is not hard to see that the alternations between $\forall$ and $\exists$ in such a predicate will, in this case, match the quantifier complexity of the input formula, by inspection of the rules.
In order to make all of this more precise, we will need to speak more formally about alternating predicates and alternating complexity.

%

\subsection{Alternating time complexity}
In computation we are used to the distinction between \emph{deterministic} and \emph{nondeterministic} computation.
Intuitively, \emph{co-nondeterminism} is just the `dual' of nondeterminism: at the machine level it is captured by `nondeterministic' Turing machines where \emph{every} run is accepting, not just \emph{some} run as in the case of usual nondeterminism.  
From here \emph{alternating} Turing machines generalise both the nondeterministic and co-nondeterministic models by allowing both universally branching states and existentially branching states.

Intuitions aside, we will now introduce the necessary concepts assuming only a familiarity with deterministic and nondeterministic Turing machines and their complexity measures.
The reader may find a comprehensive introduction to relevant machine models and complexity classes in \cite{papadimitriou94}.

\smallskip

For a language $L$ of strings over some finite alphabet, we write $\nptime (L)$ for the class of languages accepted in polynomial time by some nondeterministic Turing machine which may, at any point, query in constant time whether some word is in $L$ or not.
We extend this to classes of languages $\mathcal C$, writing $\nptime (\mathcal C)$ for $\bigcup\limits_{L \in \mathcal C} \nptime (L)$.
We also write $\mathbf{co}\mathcal C$ for the class of languages whose complements are in $\mathcal C$.
\begin{definition}
	[Polynomial hierarchy, \cite{Stockmeyer76}]
	\label{dfn:ph}
	We define the following classes:
	\begin{itemize}
		\item $\Sigma^p_0 = \Pi^p_0 \dfn \ptime$.
		\item $\Sigma^p_{k+1} \dfn \nptime (\Sigma^p_k)$.
		\item $\Pi^p_{k+1} \dfn \mathbf{co}\Sigma^p_{k+1}$.
	\end{itemize}
	The \emph{polynomial hierarchy ($\pth$)} is $\bigcup\limits_{k = 0}^\infty \Sigma^p_k = \bigcup\limits_{k = 0}^\infty \Pi^p_k$.
\end{definition}


We may more naturally view the polynomial hierarchy as the bounded-quantifier-alternation fragments of QBFs we introduced earlier.
%
%
%
For this we construe $\Sigma^q_k$ and $\Pi^q_k$ as classes of finite \emph{languages}, by associating with a QBF $\phi(x_1, \dots, x_n)$ (with all free variables indicated) the class of (finite) assignments $\alpha \subseteq \{x_1, \dots , x_n \}$ satisfying it. These assignments may themselves may be seen as binary strings of length $n$ which encode their characteristic functions in the usual way.

\begin{definition}
	[Evaluation problems]
	Let $\mathcal C $ be a set of QBFs.
	\emph{$\mathcal C$-evaluation} is the problem of deciding, given a formula $\phi (\vec x) \in \mathcal C$, with all free variables indicated, and an assignment $\alpha \subseteq \vec x$, whether $\alpha \models \phi (\vec x)$.
\end{definition}

\begin{theorem}
	[cf.~\cite{Chandra81:alternation}]
	\label{thm:qbf-complexity}
	For $k \geq 1$, we have the following:
	\begin{enumerate}
		\item $\Sigma^q_k $-evaluation is $\Sigma^p_k$-complete.
		\item $\Pi^q_k$-evaluation is $\Pi^p_k$-complete.
	\end{enumerate}
\end{theorem}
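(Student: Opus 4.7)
The plan is to prove both items simultaneously by induction on $k \geq 1$, using throughout the standard Cook--Levin style encoding of polynomial-time predicates as propositional formulas. Note that (1) and (2) are in fact interreducible via De Morgan duality: a $\Sigma^q_k$ formula $\phi$ satisfies $\alpha \models \phi$ iff $\alpha \not\models \cnot\phi$ and $\cnot \phi \in \Pi^q_k$. Since $\Pi^p_k = \mathbf{co}\Sigma^p_k$, a reduction between the corresponding evaluation problems falls out of this syntactic duality, so throughout I only need to handle (1).

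For the base case $k=1$, a $\Sigma^q_1$ formula has the shape $\exists \vec y. \psi(\vec x, \vec y)$ with $\psi$ quantifier-free. Given an assignment $\alpha \subseteq \vec x$, evaluating this is just deciding satisfiability of $\psi[\alpha]$, which is $\nptime = \Sigma^p_1$-complete by Cook--Levin (with $\vec x$ empty one recovers plain SAT).

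For the inductive step I would reduce to the characterization $\Sigma^p_{k+1} = \nptime(\Sigma^p_k)$ from Definition~\ref{dfn:ph}. Membership in $\Sigma^p_k$ of $\Sigma^q_k$-evaluation is the easy direction: given $\phi(\vec x) = \exists \vec y . \psi(\vec x , \vec y)$ with $\psi \in \Pi^q_{k-1}$ and an assignment $\alpha$, a nondeterministic polynomial-time machine guesses $\beta \subseteq \vec y$ and then queries an oracle for the complement of $\Pi^q_{k-1}$-evaluation, i.e.\ for $\Sigma^q_{k-1}$-evaluation, on $(\cnot\psi, \alpha\cup\beta)$. By the inductive hypothesis this oracle is $\Sigma^p_{k-1}$-complete, so the algorithm runs in $\nptime(\Sigma^p_{k-1}) = \Sigma^p_k$.

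The main work is the hardness direction. Fix $L \in \Sigma^p_k$, so $L$ is decided in polynomial time by some nondeterministic machine $M$ with oracle access to a $\Sigma^p_{k-1}$-complete language, which by the inductive hypothesis we may take to be $\Sigma^q_{k-1}$-evaluation itself. On input $x$, by a Cook--Levin tableau encoding I build a polynomial-size propositional formula $\mathit{Acc}_x(\vec c, \vec q_1, a_1, \dots, \vec q_m, a_m, \vec t)$ asserting that the nondeterministic choices $\vec c$, queries $\vec q_i$, answer bits $a_i$, and tape configuration $\vec t$ describe an accepting computation of $M$ on $x$ \emph{consistent with} the claimed answers. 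I then need to additionally force each $a_i$ to be the correct oracle answer, which I do by conjoining, for each $i$, a formula expressing $(a_i \to \vec q_i \in \Sigma^q_{k-1}\text{-eval}) \wedge (\cnot a_i \to \vec q_i \notin \Sigma^q_{k-1}\text{-eval})$. Substituting the inductive encoding makes the first conjunct a $\Sigma^q_{k-1}$ formula with bits of $\vec q_i$ plugged in as free variables, and the second a $\Pi^q_{k-1}$ formula. Prenexing the entire expression $\exists \vec c \, \vec q_i \, a_i \, \vec t . (\mathit{Acc}_x \wedge \bigwedge_i (\ldots))$ yields a $\Sigma^q_k$ formula whose free variables encode $x$, completing the reduction.

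The step I expect to be the most delicate is the prenexing and bookkeeping in this last reduction: one must check that pulling all the outer existentials into a single block (and all the universals from the $\Pi^q_{k-1}$ subformulas into the next block) does not inflate the alternation depth beyond $k$, and that the propositional Cook--Levin piece (which is $\Sigma^q_0 \subseteq \Pi^q_{k-1}$) can be absorbed into the $\Pi^q_{k-1}$ layer without harm. Everything else is bookkeeping on the inductive structure of $\Sigma^q_k / \Pi^q_k$.
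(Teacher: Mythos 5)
First, a point of comparison: the paper does not prove this theorem at all --- it is imported from the literature (hence the ``cf.'' citation; the result is the classical Meyer--Stockmeyer/Wrathall-style characterisation of $\pth$ by quantifier-alternation classes of QBFs), so there is no in-paper argument to match yours against. Your plan follows the standard route: De Morgan duality to collapse item (2) into item (1), Cook--Levin for the base case, guess-and-query for membership, and a relativised tableau construction for inductive hardness. The duality reduction, the membership direction (an $\nptime$ machine guessing the outer block and querying $\Sigma^q_{k-1}$-evaluation on the negated remainder, using $\Sigma^p_k = \nptime(\Sigma^p_{k-1})$), and the prenexing bookkeeping you single out as delicate are all fine: the outer existentials merge with the leading blocks of the $\Sigma^q_{k-1}$ conjuncts, the $\Pi^q_{k-1}$ conjuncts contribute their universal blocks one level in, and since the bound variables of distinct conjuncts are disjoint the prenexing laws cause no inflation beyond $k$ alternations.

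The genuine gap is elsewhere, in the sentence ``substituting the inductive encoding makes the first conjunct a $\Sigma^q_{k-1}$ formula with bits of $\vec q_i$ plugged in as free variables.'' Your inductive hypothesis gives completeness of $\Sigma^q_{k-1}$-evaluation under polynomial-time many-one reductions; but a reduction is an arbitrary polynomial-time function of a \emph{concrete} input string, whereas the query $\vec q_i$ is \emph{symbolic} --- a vector of propositional variables depending on the guessed computation --- so you cannot apply the reduction to it, and propositionalising the reduction's own computation to get around this is circular. What the hardness step actually needs is a strengthened, uniform induction hypothesis: for every $L \in \Sigma^p_{k-1}$ (or at least for the oracle language) and every length $n$, a polynomial-time constructible $\Sigma^q_{k-1}$ formula $\Phi_n(\vec w)$ whose satisfying assignments of $\vec w$ are exactly the length-$n$ members of $L$. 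Proving this uniform statement contains the second classical subtlety your sketch omits: the innermost polynomial-time matrix must be propositionalised by a Cook--Levin tableau whose variables live in the \emph{innermost} quantifier block, and when that block is universal (depending on the parity of $k-1$) one must use the dual encoding $\forall \vec t\,(T(\vec t) \cimp \mathrm{acc})$ rather than $\exists \vec t\,(T(\vec t) \cand \mathrm{acc})$ --- sound precisely because the deterministic tableau is unique. With the induction so strengthened your construction goes through; note that this uniform formulation is also exactly the form in which the paper later invokes the theorem, in Cor.~\ref{cor:prov-pred-qbfs}.
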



\begin{corollary}
	\label{cor:qbf-complete}
	For $k\geq 1$, we have the following:
	\begin{enumerate}
		\item $\{ \phi \in \Sigma^q_k : \text{$\phi$ is closed and true} \}$ is $\Sigma^p_k$-complete.
		\item $\{ \phi \in \Pi^q_k : \text{$\phi$ is closed and true} \}$ is $\Pi^p_k$-complete.
	\end{enumerate}
	
\end{corollary}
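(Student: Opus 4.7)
The plan is to derive Corollary \ref{cor:qbf-complete} directly from Theorem \ref{thm:qbf-complexity} by showing that the truth problem for closed $\Sigma^q_k$ (resp.\ $\Pi^q_k$) sentences is interreducible in polynomial time with $\Sigma^q_k$-evaluation (resp.\ $\Pi^q_k$-evaluation).

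For \emph{membership}, I would observe that any closed QBF $\phi$ may be viewed as a formula with empty free-variable list paired with the (unique) empty assignment. In this way, the set of closed true formulas in $\Sigma^q_k$ is a trivially padded subproblem of $\Sigma^q_k$-evaluation, hence lies in $\Sigma^p_k$ by Theorem \ref{thm:qbf-complexity}(1). The same observation gives $\Pi^p_k$ membership for the closed true $\Pi^q_k$ formulas.

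For \emph{hardness}, I would reduce $\Sigma^q_k$-evaluation to the truth problem for closed $\Sigma^q_k$ sentences: given $\phi(x_1,\dots,x_n) \in \Sigma^q_k$ together with an assignment $\alpha \subseteq \{x_1,\dots,x_n\}$, define the closed formula $\phi[\alpha]$ obtained by replacing each occurrence of $x_i$ by $\ctru$ if $x_i \in \alpha$ and by $\cfal$ otherwise (and dually for $\cnot{x_i}$). A routine induction on $\phi$ using the satisfaction clauses shows $\alpha \models \phi$ iff $\models \phi[\alpha]$. The substitution is computable in polynomial time, and since it only affects literals (not quantifiers or connectives), it preserves membership in $\Sigma^q_k$ in prenex normal form. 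Combined with the $\Sigma^p_k$-hardness of $\Sigma^q_k$-evaluation from Theorem \ref{thm:qbf-complexity}(1), this yields $\Sigma^p_k$-hardness of the closed true $\Sigma^q_k$ formulas. Part (2) is entirely symmetric, using Theorem \ref{thm:qbf-complexity}(2) and the fact that De Morgan duality maps $\Sigma^q_k$ to $\Pi^q_k$ and preserves prenex form.

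There is no real obstacle here: the only point that needs care is verifying that the substitution $\phi \mapsto \phi[\alpha]$ stays within the same class $\Sigma^q_k$ (resp.\ $\Pi^q_k$), which is immediate from the definition since the classes are closed under replacing literals by truth constants, and the prenex quantifier prefix is untouched.
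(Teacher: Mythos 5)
Your proposal is correct and follows essentially the same route as the paper: membership by evaluating under the empty assignment via Theorem~\ref{thm:qbf-complexity}, and hardness by substituting truth constants for the free variables according to $\alpha$, which is a polynomial-time reduction preserving the prenex class. The only cosmetic difference is that the paper additionally applies the rewrite rules of Eqn.~\eqref{eqn:simplifying-qbf} to eliminate the introduced constants (useful later, when QBFs are assumed constant-free), whereas you correctly observe this is unnecessary for the corollary itself since $\cfal,\ctru$ belong to the QBF grammar.
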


\begin{proof}
	Membership is immediate from Thm.~\ref{thm:qbf-complexity}, evaluating under the assignment $\emptyset$. For hardness, notice that we may always simplify a QBF under an assignment $\alpha$ to a closed formula as follows: first, replace all free variable occurrences $x$ with $\ctru$ if $x \in \alpha$ and $\cfal $ otherwise. Now simply apply the following rewrite rules,
	\begin{equation}
	\label{eqn:simplifying-qbf}
	\begin{array}{rccclcrcccl}
	\cfal \cor \phi &\rightarrow &\phi &\leftarrow & \phi \cor \cfal &\qquad &\cfal \cand \phi &\rightarrow &\cfal &\leftarrow &\phi \cand \cfal\\
	\noalign{\smallskip}
	\ctru \cor \phi &\rightarrow &\ctru &\leftarrow &\phi \cor \ctru &\qquad & \ctru \cand \phi &\rightarrow &\phi &\leftarrow &\phi \cand \ctru
	\end{array}
	\qedhere
	\end{equation}
\end{proof}

\section{Linear logic and proof search}
\label{sect:prelims-ll-proof-search}

\emph{Linear logic} was introduced by Girard \cite{Girard87} to decompose the mechanics of cut-elimination by means of different connectives.
It naturally subsumes both classical and intuitionistic logic by various embeddings, and has furthermore been influential in the theoretical foundations of \emph{logic programming} via the study of \emph{focussing}, which constrains the level of nondeterminism in proof search, cf.~\cite{Andreoli92,DelandeMS10,ChaudhuriMS08}.
In this work we only consider the fragment \emph{multiplicative additive linear logic} ($\MALL$) and its version with `weakening'($\aMALL$).

This section mostly follows Sect.~3 of \cite{Das18:ijcar}, 
mainly differing in that we here include units in the formulation of $\MALL$ and $\aMALL$, for generality, and give some further proof details.

\subsection{Multiplicative additive linear logic}


For convenience, we work with the same set $\Var$ of variables that we used for QBFs.
To distinguish them from QBFs, we use the metavariables $A,B, $ etc.\ for $\MALL(\mathsf w)$ formulas, generated as follows:
\[
A \quad ::= \quad
\bot \ | \ 0 \ | \ 1 \ | \ \top\ |\
x \ | \ \lneg x \ | \ (A \vlpa B) \ | \ (A \vlor B) \ | \ (A \vlte B) \ | \  (A \vlan B)
\]


\noindent
$
\bot, 1 , 
\vlpa, \vlte$ are called \emph{multiplicative} connectives, and $
0, \top, 
\vlor, \vlan$ are called \emph{additive} connectives.
%
Like for QBFs, we have restricted negation to the variables, thanks to De Morgan duality in $\MALL$.
Again, we may write $\lneg A$ for the De Morgan dual of $A$, which is generated similarly to the case of QBFs:
\[
\cnot {\cnot x }\ \dfn \ x
\quad
\begin{array}{rcl}
\cnot \bot  &  \dfn  &  1\\
\noalign{\smallskip}
\cnot 1   &  \dfn &  \bot\\
\noalign{\medskip}
\cnot 0  &  \dfn  &  \top \\
\noalign{\smallskip}
\cnot \top   &  \dfn &  0
\end{array}
\quad
\begin{array}{rcl}
\cnot{(A \vlpa B)}  & \dfn & \cnot A \vlte \cnot B
\\
\noalign{\smallskip}
\cnot {(A \vlte B)}  & \dfn  & \cnot A \vlpa \cnot B
\end{array} 
\quad
\begin{array}{rcl}
\cnot{(A \vlor B)}  & \dfn  & \cnot A \vlan \cnot B
\\
\noalign{\smallskip}
\cnot {(A \vlan B)}  & \dfn  & \cnot A \vlor \cnot B 
\end{array}
\]
%


Due to De Morgan duality, we will work only with `one-sided' calculi for $\MALL$ and $\aMALL$, where all formulas occur to the right of the sequent arrow.
This means we will have fewer cases to consider for formal proofs, although later we will also informally adopt a two-sided notation when it is convenient, cf.~Rmk.~\ref{not:two-sided}.

\begin{definition}
	[$\MALL(\mathsf w)$]
	A \emph{cedent}, written $\Gamma, \Delta$ etc., is a multiset of formulas, delimited by commas `,', and a \emph{sequent} is an expression $\seqar \Gamma$.\footnote{We will often identify cedents and sequents, since we are in a one-sided setting.} 
	The system (cut-free) $\MALL$ is given in Fig.~\ref{fig:mall}.
	$\aMALL$, a.k.a.\ \emph{affine} $\MALL$, is defined in the same way, only with the $(\id)$ rule and $(1)$ rule replaced by the following analogues:
	\begin{equation}
	\label{eqn:wkid}
	\vlinf{\wkid}{}{\seqar \Gamma, x , \lneg x}{}
	\qquad
	\vlinf{w1}{}{\seqar \Gamma, 1}{}
	\end{equation}
\end{definition}
\begin{figure}[t]
	\[
	\vlinf{\id}{}{\seqar x , \lneg x}{}
	\qquad
	\begin{array}{cccc}
	\vlinf{\bot}{}{\seqar \Gamma, \bot}{\seqar \Gamma} 
	\quad &\quad \vlinf{1}{}{\seqar  1}{}
	\quad &\quad  \vlinf{\vlpa}{}{\seqar \Gamma, A \vlpa B}{\seqar \Gamma, A , B}
	\quad &\quad  \vliinf{\vlte}{}{\seqar \Gamma, \Delta, A \vlte B}{\seqar \Gamma, A}{\seqar \Delta, B} 
	\\
	\noalign{\smallskip}
	{\color{gray}
		\ \ \left(
		\begin{array}{c}
		\text{no rule}\\
		\text{for $0$}
		\end{array}
		\right)
	}
	\quad &\quad  \vlinf{\top}{}{\seqar \Gamma, \top}{}
	\quad &\quad  \vlinf{\vlor}{}{\seqar \Gamma, A_0 \vlor A_1}{\seqar \Gamma, A_i} 
	\quad &\quad  \vliinf{\vlan}{}{\seqar \Gamma, A \vlan B}{\seqar \Gamma , A}{\seqar \Gamma , B} 
	\end{array}
	\]
	\caption{The system (cut-free) $\MALL$, where $i \in \{0,1 \}$.}
	\label{fig:mall}
\end{figure}

\noindent
We have not included the `cut' rule, thanks to cut-elimination for linear logic \cite{Girard87}.
We will only study cut-free proofs in this paper.
Notice that, following the tradition in linear logic, we write `$\seqar$' for the sequent arrow, though we point out that the deduction theorem does not actually hold w.r.t.\ linear implication.
For the affine variant, we have simply built weakening into the initial steps, since it may always be permuted upwards in a proof:
\begin{proposition}
	[Weakening admissibility]
	\label{prop:wk-admiss-mall}
	The following rule, called 
	\emph{weakening}, is admissible in $\aMALL$:
	\[
	\vlinf{\wk}{}{\seqar \Gamma, A}{\seqar \Gamma}
	\]
\end{proposition}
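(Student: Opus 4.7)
The plan is to prove this by induction on the structure of a cut-free $\aMALL$ derivation of $\seqar \Gamma$, in each case showing that we can construct a derivation of $\seqar \Gamma, A$ of the same shape (possibly with the weakening absorbed into leaves). The key observation is that in $\aMALL$ the initial rules $\wkid$, $w1$, and $\top$ as in \eqref{eqn:wkid} and Fig.~\ref{fig:mall} already allow an arbitrary context $\Gamma$ on the right, so weakening is ``free'' at the leaves and we only need to push it past internal rules.

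For the base cases, if the last rule is $\wkid$ deriving $\seqar \Gamma', x, \lneg x$, then $\seqar \Gamma', A, x, \lneg x$ is itself an instance of $\wkid$; the cases for $w1$ and $\top$ are identical. For the one-premiss rules $\bot$, $\vlpa$, $\vlor$, I simply apply the inductive hypothesis to the single premiss to insert $A$ into its context, and then reapply the same rule, noting that $A$ sits in the side context and does not interact with the principal formula. For the additive branching rule $\vlan$ deriving $\seqar \Gamma', B \vlan C$ from $\seqar \Gamma', B$ and $\seqar \Gamma', C$, I apply the inductive hypothesis to both premisses to obtain $\seqar \Gamma', A, B$ and $\seqar \Gamma', A, C$, and then conclude by $\vlan$.

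The only case worth any care is the context-splitting multiplicative rule $\vlte$, since the ambient cedent $\Gamma', \Delta$ is partitioned between the two premisses and we must decide where $A$ goes. The point is that we may choose either side: applying the inductive hypothesis to, say, the left premiss $\seqar \Gamma', B$ yields $\seqar \Gamma', A, B$, and then $\vlte$ with the (unchanged) right premiss $\seqar \Delta, C$ concludes $\seqar \Gamma', A, \Delta, B \vlte C$, which is exactly the weakened endsequent. There is no genuine obstacle here; it is just the usual caveat that in one-sided multiplicative systems admissibility of weakening is what lets us treat contexts as if they were additive, and it is precisely this flexibility that we exploit later whenever we informally suppress side formulas in $\aMALL$ derivations.
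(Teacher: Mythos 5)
Your proof is correct and follows essentially the same argument as the paper: the paper also proceeds by permuting the weakening upward through the rules (induction on the subproof above a weakening step), with the initial sequents $\wkid$, $w1$, $\top$ absorbing the weakened formula, the $\vlan$ case duplicating it into both premisses, and the $\vlte$ case sending it to one chosen premiss. Your phrasing as induction on the derivation of $\seqar \Gamma$ rather than on a proof rooting an explicit weakening step is an inessential difference.
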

\begin{proof}
	This is a routine (and indeed well-known) argument by induction on the size of a subproof that roots a weakening step.
	The initial sequents of $\aMALL$ are already closed under weakening, and have the following inductive cases:
	\[
	\begin{array}{ccc}
	\noalign{\medskip}
	\vlderivation{
		\vlin{\wk}{}{\seqar \Gamma , A \vlpa B , C }{
			\vlin{\vlpa}{}{\seqar \Gamma , A\vlpa B}{\vlhy{\seqar \Gamma , A , B}}
		}
	}
	\quad &\quad  \leadsto \quad&\quad 
	\vlderivation{
		\vlin{\vlpa}{}{\seqar \Gamma , A \vlpa B , C }{
			\vlin{\wk}{}{\seqar \Gamma , A,  B, C}{\vlhy{\seqar \Gamma , A , B}}
		}
	}
	\\
	\noalign{\medskip}
	\vlderivation{
		\vlin{\wk}{}{\seqar \Gamma , A_0 \vlor A_1, C}{
			\vlin{\vlor}{}{\seqar \Gamma , A_0 \vlor A_1}{\vlhy{\seqar \Gamma , A_i}}
		}
	}
	\quad &\quad  \leadsto \quad&\quad 
	\vlderivation{
		\vlin{\vlor}{}{\seqar \Gamma , A_0 \vlor A_1, C}{
			\vlin{\wk}{}{\seqar \Gamma , A_0 , A_1,C}{\vlhy{\seqar \Gamma , A_i}}
		}
	}
	\\
	\noalign{\medskip}
	\vlderivation{
		\vlin{\wk}{}{\seqar \Gamma , \Delta, A\vlte B , C}{
			\vliin{\vlte}{}{\seqar \Gamma , \Delta, A\vlte B}{\vlhy{\seqar \Gamma , A}}{\vlhy{\Delta, B}}
		}
	}
	\quad &\quad  \leadsto \quad&\quad 
	\vlderivation{
		\vliin{\vlte}{}{\seqar \Gamma , \Delta, A\vlte B , C}{
			\vlin{\wk}{}{\seqar \Gamma , A, C}{\vlhy{\seqar \Gamma , A}}
		}{\vlhy{\seqar \Delta , B}}
	}
	\\
	\noalign{\medskip}
	\vlderivation{
		\vlin{\wk}{}{\seqar \Gamma , A\vlan B , C}{
			\vliin{\vlan}{}{\seqar \Gamma , A \vlan B}{\vlhy{\seqar \Gamma , A}}{\vlhy{\seqar \Gamma , B}}
		}
	}
	\quad &\quad  \leadsto \quad&\quad 
	\vlderivation{
		\vliin{\vlan}{}{\seqar \Gamma , A \vlan B, C}{
			\vlin{\wk}{}{\seqar \Gamma , A, C}{\vlhy{\seqar \Gamma , A}}
		}{
			\vlin{\wk}{}{\seqar \Gamma , B , C}{\vlhy{\seqar \Gamma  , B}}
		}
	}
	\end{array}
	\qedhere
	\]
\end{proof}


\subsection{(Multi-)focussed systems for proof search}
Focussed systems for $\MALL$ (and linear logic in general) have been widely studied \cite{Andreoli92,laurent:tel-00007884,DelandeMS10,ChaudhuriMS08}.
The idea is to associate polarities to the connectives based on whether their introduction rule is invertible (negative) or their dual's introduction rule is invertible (positive).
Now bottom-up proof search can be organised in a manner where, once we have chosen a positive principal formula to decompose (the `focus'), we may continue to decompose its auxiliary formulas until the focus becomes negative. The main result herein is the \emph{completeness} of such proof search strategies, known as the \emph{focussing theorem} (a.k.a.\ the `focalisation theorem').

It is known that `multi-focussed' variants, where one may have many foci in parallel, lead to certain `canonical' representations of proofs for $\MALL$ \cite{ChaudhuriMS08}.
Furthermore, the alternation behaviour of focussed proof search can be understood via a game theoretic approach \cite{DelandeMS10}.
However, such frameworks unfortunately fall short of characterising the alternating complexity of proof search in a faithful way.	
The issue is that the usual focussing methodology does not make any account for \emph{deterministic} computations, which correspond to invertible rules that do not branch.
Such rules are usually treated just like the other invertible rules, which in general comprise the `co-nondeterministic' stages of proof search.


For these reasons we introduce a bespoke presentation of (multi-)focussing for $\MALL$, with a designated \emph{deterministic} phase dedicated to invertible non-branching rules, in particular the $\vlpa$ rule.
To avoid conflicts with more traditional presentations, we call the other two phases \emph{nondeterministic} and \emph{co-nondeterministic} rather than `synchronous' and `asynchronous' respectively. This terminology also reinforces the intended connections to computational complexity.


Henceforth we use $a,b,$ etc.\ to vary over atomic formulas. 
We also use the following metavariables to vary over formulas with the indicated main connectives:
\[
\begin{array}{rcll}
M & : & \text{`negative and not deterministic'} &\quad \vlan \\
N & : & \text{`negative'} & \quad \vlan, \vlpa \\
O & : & \text{`deterministic'} &\quad \vlte , \vlpa , a
\\
P & : & \text{`positive'} & \quad \vlte, \vlor \\
Q & : & \text{`positive and not deterministic'} &  \quad \phantom{ \vlte , } \hspace{0.2em} \vlor
\end{array}
\]
`Vectors' are used to vary over multisets of associated formulas, e.g.\ $\vec P$ varies over multisets of $P$-formulas. We may sometimes view these as sequences or even sets for convenience.
Sequents may now contain a single delimiter $\Downarrow$ or $\Uparrow$.

\begin{definition}
	[Multi-focussed proof system]
	\label{dfn:multifoc-mallw}
	We define the (multi-focussed) system $\focMALL$ in Fig.~\ref{fig:focmall}.
	The system $\afocMALL$ is the same as $\focMALL$ but with the $(\id) $ and $(1)$ rules replaced by the rules $(\wkid)$ and $(w1)$ from \eqref{eqn:wkid}.
\end{definition}

\begin{figure}[t]
	\noindent
	\textbf{Deterministic phase:}
	\[
	\vlinf{\id}{}{\seqar x,\lneg{x}}{}
	\quad
	\vlinf{\bot}{}{\seqar \Gamma, \bot}{\seqar \Gamma}
	\quad
	\vlinf{1}{}{\seqar  1}{}
	\quad
	\vlinf{\top}{}{\Gamma, \top}{}
	\quad
	\vlinf{\vlpa}{}{\seqar \Gamma, A \vlpa B}{\seqar \Gamma, A , B}
	\quad
	\vlinf{\dec}{
	}{\seqar \vec a, \vec{P}, \vec P'}{\seqar \vec a , \vec{P}  \Downarrow \vec P'}
	\quad
	\vlinf{\codec}{
	}{\seqar \vec a , \vec{P}, \vec M}{\seqar \vec a , \vec{P} \Uparrow \vec M}
	\]

	\medskip
	
	\noindent
	\textbf{Nondeterministic phase:}
	\[
	\vlinf{\vlor}{}{\seqar \Gamma \Downarrow\Delta,  A_0 \vlor A_1}{\seqar \Gamma \Downarrow\Delta,  A_i}
	\qquad
	\vliinf{\vlte}{}{\seqar \Gamma, \Delta \Downarrow \Sigma, \Pi, A \vlte B}{\seqar \Gamma \Downarrow \Sigma,  A}{\seqar \Delta \Downarrow \Pi,  B}
	\qquad
	\vlinf{\rel}{}{\seqar \Gamma \Downarrow \vec a , \vec{N} }{\seqar \Gamma, \vec a , \vec{N} }
	\]

	\medskip
	
	\noindent
	\textbf{Co-nondeterministic phase:}
	\[
	\vliinf{\vlan}{}{\seqar \Gamma \Uparrow \Delta, A \vlan B}{\seqar \Gamma \Uparrow \Delta ,A}{\Gamma \Uparrow \Delta , B}
	\qquad
	\vlinf{\corel}{}{\seqar \Gamma \Uparrow  \vec P, \vec O }{\seqar \Gamma,  \vec P , \vec O }
	\]
	\caption{The system (cut-free) $\focMALL$, where $\vec P'$ and $\vec M$ must be nonempty and $i \in \{0,1\}$.}
	\label{fig:focmall}
\end{figure}

A proof of a formula $A$ is simply a proof of the sequent $\seqar A$, i.e.\ there is no need to pre-decorate with arrows, as opposed to usual presentations, thanks to the deterministic phase.
The rules $\dec$ and $\codec$ are called \emph{decide} and \emph{co-decide} respectively, while
$\rel$ and $\corel$ are called \emph{release} and \emph{co-release} respectively. 
%

Notice that the determinism of $\vlte$ plays no role in this one-sided calculus, but in a two-sided calculus we would have a deterministic left $\vlte$ rule that is analogous to the given $\vlpa$ rule (on the right).
This is the same as how the `negativity' (in the sense of non-invertibility on the left) of $\vlpa$ plays no role in this calculus.
One may argue that a $\vlte$ on the left is morally just a $\vlpa$, but such a simplification sacrifices the duality of connectives being reflected in terms of duality in computational complexity: if $\vlpa$ is deterministic, then so should be its dual, $\vlte$.
Indeed, in the above classification, $O$-formulas are dual to $O$-formulas, $P$-formulas dual to $N$-formulas, and $Q$-formulas dual to $M$-formulas.


\smallskip

As usual for multi-focussed systems, the analogous focussed system can be recovered by restricting rules to only one focussed formula in nondeterministic phases.
Moreover, in our presentation, we may also impose the \emph{dual} restriction, that there is only one formula in `co-focus' during a co-nondeterministic phase:

\begin{definition}
	[Simply (co-)focussed subsystems]
	\label{dfn:co-bi-focussed}
	A $\focMALL$ proof is \emph{focussed} if $\vec P'$ in $\dec$ is always a singleton.
	It is \emph{co-focussed} if $\vec M$ in $\codec$ is always a singleton.
	If a proof is both focussed and co-focussed then we say it is \emph{bi-focussed}.
\end{definition}

\noindent
The notion of `co-focussing' is not usually possible for (multi-)focussed systems since the invariant of being a singleton is not usually maintained in an asynchronous phase, due to the $\vlpa$ rule. 
However we treat $\vlpa$ as deterministic rather than co-nondeterministic, and we can see that the $\vlan$-rule indeed maintains the invariant of having just one formula on the right of $\Uparrow$.

\begin{theorem}
	[Focussing theorem]
	\label{thm:bi-foc-complete}
	We have the following:
	\begin{enumerate}
		\item The class of bi-focussed $\focMALL$-proofs is complete for $\MALL$.
		\item The class of bi-focussed $\afocMALL$-proofs is complete for $\aMALL$.
	\end{enumerate}
\end{theorem}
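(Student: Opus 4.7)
The plan is to rewrite an arbitrary cut-free $\MALL$ (resp.\ $\aMALL$) proof into bi-focussed $\focMALL$ (resp.\ $\afocMALL$) form via a sequence of proof permutations, following the strategy of Andreoli's focussing completeness~\cite{Andreoli92} adapted to the present three-phase discipline. Cut-elimination for $\MALL$~\cite{Girard87}, with a straightforward adaptation for $\aMALL$, means it suffices to start from cut-free proofs.

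The first step is a standard invertibility permutation lemma: each invertible rule ($\bot$, $\vlpa$, $\top$, $\vlan$) commutes downward past any rule whose principal formula is distinct. The only delicate case is permuting a $\vlpa$ or $\vlan$ past a $\vlte$, because $\vlte$ splits the context; but the principal of the invertible rule lives in exactly one of the split halves, so the rearrangement is well-defined. Iterating these permutations, any proof can be reorganised so that, at the root, all \emph{non-branching} invertible rules are applied eagerly, constituting the deterministic phase and terminating in a sequent of shape $\seqar \vec a, \vec P, \vec M$. The next step is the dual observation for branching invertibles: since $\vlan$ in the $\Uparrow$-phase copies its context (rather than splitting it), applications of $\vlan$ on distinct $M_i$'s commute freely. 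This lets us group all remaining $\vlan$ steps into a single co-nondeterministic phase, delimited by $\codec$ and $\corel$, and moreover to sequentialise them into single-co-focus form, decomposing one $M_i$ completely before releasing and re-deciding on the next.

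Finally, on the remaining sequent $\seqar \vec a, \vec P$ I would invoke an Andreoli-style permutation analysis to exhibit a nondeterministic phase in single-focus form: focus on a chosen $P_i$ is preserved through $\vlte$ and $\vlor$ down to a subformula where $\rel$ becomes applicable, and a multi-focus $\dec$ on $\vec P'$ can be sequentialised into iterated single decides. The main obstacle I anticipate is this last sequentialisation in the presence of $\vlte$'s context-splitting, which is the classical hurdle for $\MALL$ focussing; it is handled by the fine-grained permutation techniques of~\cite{ChaudhuriMS08}, whose multi-focussed completeness theorem can be invoked and then specialised by commuting independent $\vlte$-splits to reach singleton foci. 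For $\aMALL$, the same argument transfers essentially verbatim, with weakening admissibility (Prop.~\ref{prop:wk-admiss-mall}) ensuring that any spurious context arising during permutations can be absorbed at initial sequents via $\wkid$ or $w1$.
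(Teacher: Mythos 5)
Your proposal is correct and follows essentially the same route as the paper: the paper's own proof is precisely a deferral to the standard permutation-based focalisation arguments of \cite{Andreoli92,laurent:tel-00007884} (with \cite{ChaudhuriMS08} for the multi-focussed setting), observing that the only novelty is the reorganisation of phases, whereby bi-focussed proofs can be seen as annotated focussed proofs. Your sketch simply spells out that standard argument adapted to the three-phase discipline --- including the correct observations that the context-copying (rather than splitting) behaviour of $\vlan$ is what makes the singleton co-focus invariant and the sequentialisation of $\codec$ possible, and that weakening admissibility handles the affine case --- so it matches the paper's intended proof in substance.
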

Evidently, this immediately means that $\focMALL$ ($\afocMALL$), as well as its focussed and co-focussed subsystems, are also complete for $\MALL$ (resp.~$\aMALL$).
The proof of Thm.~\ref{thm:bi-foc-complete} follows routinely from any other completeness proof for focussed $\MALL$, e.g.~\cite{Andreoli92,laurent:tel-00007884}. 
The only change in our presentation is in the organisation of phases, for which we may think of bi-focussed proofs as certain annotated focussed proofs.

\medskip

To aid our exposition, we will sometimes use a `two-sided' notation and extra connectives so that the intended semantics of sequents are clearer.
Strictly speaking, this is just a shorthand for one-sided sequents: the calculi defined in Figs.~\ref{fig:mall} and \ref{fig:focmall} are the formal systems we are studying.

\begin{notation}[Two-sided notation]
	\label{not:two-sided}
	We write $\Gamma \seqar \Delta $ as shorthand for the sequent $\seqar \lneg{\Gamma}, \Delta$, where $\lneg \Gamma $ is $\{ \lneg A : A \in \Gamma \}$.
	We extend this notation to sequents with $\Uparrow$ or $\Downarrow$ symbols in the natural way, writing $\Gamma \Uparrow \Delta \seqar \Sigma \Uparrow \Pi$ for $\seqar \lneg \Gamma, \Sigma \Uparrow \lneg \Delta, \Pi$ and $\Gamma\Downarrow \Delta  \seqar \Sigma \Downarrow \Pi$ for $\seqar \lneg \Gamma, \Sigma \Downarrow \lneg \Delta, \Pi $.
	In all cases, (co-)foci are always written to the right of $\Downarrow$ or $\Uparrow$.
	
	\smallskip
	
	We write $A \limp B $ as shorthand for the formula $\lneg A \vlpa B$, and $A\plimp B$ as shorthand for the formula $\lneg A \vlor B$.
	Sometimes we will write, e.g., a step,
	\[
	\vliinf{\lefrul \limp}{}{\Gamma, \Gamma'\Downarrow A \limp B \seqar \Delta, \Delta'}{\Gamma \seqar \Delta\Downarrow A  }{\Gamma' \Downarrow B \seqar \Delta'}
	\]
	which, by
	definition, 
	corresponds to a correct application of ${\vlte}$ in $\focMALL(\mathsf w)$.
\end{notation}

\section{An encoding from $\qcpl$ to $\aMALL$}
\label{sect:qltrans}

%
%
%

In this section we present an encoding of true QBFs into $\aMALL$.
(We will later adapt this into an encoding into $\MALL$ in Sect.~\ref{sect:mallph}.)
The former were also used for the original proof that $\MALL$ is $\pspace$-complete \cite{LincolnMSS90,LincolnMSS92}, though our encoding differs from theirs and leads to a more refined result, cf.~Sect.~\ref{sect:lqtrans}. 

This section mostly follows Sect.~4 from \cite{Das18:ijcar}, except with some further details in proofs and the exposition.
Henceforth we assume that all QBFs are in prenex normal form and free of truth constants $\cfal$ and $ \ctru$ (e.g.\ by Eqn.~\ref{eqn:simplifying-qbf}).

\subsection{Positive and negative encodings of quantifier-free evaluation}

The base cases of our translation from QBFs to $\aMALL$ will be quantifier-free Boolean formula evaluation.
This is naturally a deterministic computation,
being polynomial-time computable.
(In fact, Boolean formula evaluation is known to be $\mathbf{ALOGTIME}$-complete \cite{Buss87:formula-eval-nc1}.)
However one issue is that this determinism cannot be seen from the point of view of $\aMALL$, since the only deterministic connective ($\vlpa$, on the right) is not expressive enough to encode evaluation.

Nonetheless we are able to circumvent this problem since $\aMALL$ is at least able to see that quantifier-free evaluation is in $\nptime \cap \conptime$, via a pair of corresponding encodings.
For non-base levels of $\pth$ this is morally the same as being deterministic, as we will see more formally over the course of this section.

\begin{definition}
	[Positive and negative encodings]
	Let $\phi$ be a quantifier-free Boolean formula. We define:
	\begin{itemize}
		\item $\negtrans{\phi}$ is the result of replacing every $\cor $ in $\phi$ by $\vlpa$ and every $\cand $ in $\phi$ by $\vlan$.
		\item $\postrans{\phi}$ is the result of replacing every $\cor $ in $\phi$ by $\vlor$ and every $\cand $ in $\phi$ by $\vlte$.
	\end{itemize}
	
	For an assignment $\alpha\subseteq \Var$ and a list of variables $\vec x = (x_1 , \dots , x_k)$, we write $\alpha(\vec x)$ for the cedent $\{ x_i :  x_i \in \alpha ,  i\leq k \} \cup \{ \lneg{x}_i : x_i \notin \alpha , i\leq k \}$.
	We write $\alpha^n(\vec x)$ for the cedent consisting of $n$ copies of each literal in $\alpha(\vec x)$.
\end{definition}

\begin{proposition}
	\label{prop:pos-neg-sat}
	Let $\phi$ be a quantifier-free Boolean formula with free variables $\vec x$ and let $\alpha $ be an assignment.
	For $n \geq |\phi|$, the following are equivalent:
	\begin{enumerate}
		\item\label{item:satisfaction} $\alpha \models \phi$.
		\item\label{item:negsat} $\aMALL$ proves $\alpha(\vec x) \seqar \negtrans \phi$.
		\item\label{item:possat} $\aMALL$ proves $\alpha^n(\vec x) \seqar \postrans \phi$.
	\end{enumerate}
\end{proposition}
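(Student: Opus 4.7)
The plan is a structural induction on $\phi$, proving the equivalences (1)$\Leftrightarrow$(2) and (1)$\Leftrightarrow$(3) in parallel. The two translations are dual in character: $\negtrans{\cdot}$ uses only invertible connectives ($\vlpa$, $\vlan$), so the equivalence with satisfaction can be read off rule-by-rule using invertibility and weakening; $\postrans{\cdot}$ uses the non-invertible connectives ($\vlor$, $\vlte$), which is why the positive case needs a reservoir of $n \geq |\phi|$ copies of each literal to allow context-splitting.

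For the base case $\phi = \ell$, both translations yield $\ell$ itself. Written in one-sided form the sequent to prove is $\seqar \lneg{\alpha(\vec x)}, \ell$ (or with $n$ copies of each literal for (3)). If $\alpha \models \ell$ then $\lneg \ell$ appears in $\lneg{\alpha(\vec x)}$, so $(\wkid)$ closes the sequent after discarding the remaining literals by Prop.~\ref{prop:wk-admiss-mall}. If $\alpha \not\models \ell$, then $\ell$ (not $\lneg \ell$) occurs in the cedent, so no identity axiom applies and the sequent cannot be closed; this is the first place where the soundness direction will later be justified globally.

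For the inductive step, the forward direction ($\alpha \models \phi \Rightarrow$ provability) goes as follows. In $\negtrans{\cdot}$, a $\cor$ becomes $\vlpa$, and an application of the (invertible) $\vlpa$ rule followed by weakening on the unused disjunct and IH suffices; a $\cand$ becomes $\vlan$, which splits into two branches on the same context $\alpha(\vec x)$ and is dispatched directly by IH. In $\postrans{\cdot}$, a $\cor$ becomes $\vlor$: pick the disjunct $\phi_i$ satisfied by $\alpha$, apply $\vlor$ and IH (with weakening to supply any extra copies needed beyond $|\phi_i|$). A $\cand$ becomes $\vlte$: since $|\phi_1 \cand \phi_2| = |\phi_1| + |\phi_2|$, the cedent $\alpha^n(\vec x)$ can be partitioned as $\alpha^{n_1}(\vec x), \alpha^{n_2}(\vec x)$ with $n_i \geq |\phi_i|$, after which both premisses are supplied by IH. This accounting is the main technical subtlety of the proof, and the precise reason the bound $n \geq |\phi|$ appears in the statement.

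For the backward direction (provability $\Rightarrow \alpha \models \phi$), the negative case is immediate by invertibility plus IH applied bottom-up. The positive case is the main obstacle, because $\vlor$ and $\vlte$ are non-invertible, so we cannot simply strip off the translated connective. My preferred route is a uniform semantic argument: the obvious collapse interpretation sending $\vlpa,\vlor \mapsto \cor$ and $\vlte,\vlan \mapsto \cand$ (and $\wkid \mapsto$ classical identity, weakening $\mapsto$ classical weakening) turns every $\aMALL$ proof into a classical derivation of $\alpha(\vec x) \seqar \phi$, and hence forces $\alpha \models \phi$. This uniformly handles both translations, both connectives, and all arities of branching, and bypasses any delicate proof-theoretic extraction.
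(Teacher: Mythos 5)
Your proof is correct, and its overall skeleton matches the paper's: induction on $\phi$ for the forward directions, with the satisfied-disjunct choice for $\vlor$ and the copy-count split $n_1 + n_2 = n$, $n_i \geq |\phi_i|$ for $\vlte$ (this is indeed where the hypothesis $n \geq |\phi|$ earns its keep), and the collapse interpretation $\vlpa,\vlor \mapsto \cor$, $\vlte,\vlan \mapsto \cand$ handling both backward directions at once, exactly as in the paper. The one genuine divergence is your treatment of the negative $\cor$ case: you keep a single formula on the right, apply the IH to the satisfied disjunct, and then weaken in the entire unused disjunct $\negtrans{\psi}$ via Prop.~\ref{prop:wk-admiss-mall} before applying $\vlpa$. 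The paper instead strengthens the induction hypothesis to multisets --- if $\alpha \models \bigvee \Lambda$ then $\aMALL$ proves $\alpha(\vec x) \seqar \negtrans{\Lambda}$ --- so that the $\vlpa$ case merely merges $\negtrans{\phi}, \negtrans{\psi}$ into the cedent, the $\vlan$ case duplicates the whole context, and weakening is confined to the atomic base case, where it is absorbed by the $\wkid$ axiom. For the proposition as stated, over $\aMALL$, your variant is perfectly sound and arguably more economical. But the paper's arrangement is deliberate and load-bearing later: Observation~\ref{obs:atomic-weakening} records that the entire encoding uses weakening \emph{only on literals}, and this is precisely what allows Sect.~\ref{sect:mallph} to transfer the result to pure $\MALL$ by replacing only literal occurrences $a$ with $\bot \vlor a$, rather than decorating every subformula (which would blow up the alternation complexity of proof search). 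Your compound weakening on $\negtrans{\psi}$ would break that observation, so if you want the non-affine half of the development to go through unchanged, adopt the multiset generalisation (or note that your compound weakenings can be pushed to the leaves through the invertible $\vlpa,\vlan$ rules, which amounts to the same thing). One cosmetic point: in the positive $\cor$ case no weakening is needed at all, since the IH in its ``$|\phi_i| \leq n$'' form applies directly at the same $n$.
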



\begin{proof}
	\ref{item:negsat}$\implies$\ref{item:satisfaction} and \ref{item:possat}$\implies$\ref{item:satisfaction} are immediate from the `soundness' of $\aMALL$ with respect to classical logic, by interpreting $\vlte $ or $\vlan $ as $\cand$ and $\vlor$ or $\vlpa$ as $\cor$.

	Intuitively, \ref{item:satisfaction}$\implies$\ref{item:negsat} follows directly from the invertibility of rules, while for \ref{item:satisfaction}$\implies$\ref{item:possat} we may appeal to the usual properties of satisfaction while controlling linearity appropriately.
	Formally we prove the following more general statements:
	\begin{itemize}
		\item For any multiset $\Lambda$ of quantifier-free Boolean formulas, if $\alpha \models \bigvee \Lambda$ then $\aMALL$ proves $\alpha (\vec x ) \seqar \negtrans \Lambda $, where $\negtrans \Lambda $ is the $\aMALL$ cedent $\{ \negtrans{\phi} : \phi \in \Lambda \}$.
		\item For any quantifier-free Boolean formula $\phi$ with $|\phi|\leq n$, if $\alpha \models \phi$ then $\aMALL$ proves $\alpha^n(\vec x) \seqar \postrans{\phi}$.
	\end{itemize}
	
	\noindent
	We proceed by induction on the number of connectives in $\Lambda $ or $\phi$.
	The bases case is simple (relying on affinity)
	and the inductive cases are as follows,
	\[
	\begin{array}{cc}
	\vlderivation{
		\vlin{\vlpa}{}{\alpha(\vec x) \seqar \Lambda, \negtrans \phi \vlpa \negtrans \psi }{
			\vltr{\IH}{\alpha (\vec x) \seqar \Lambda, \negtrans \phi, \negtrans \psi}{\vlhy{\ }}{\vlhy{\ }}{\vlhy{\ }}
		} 
	} 
	\quad &\quad 
	\vlderivation{
		\vliin{\vlan}{}{\alpha (\vec x) \seqar \Lambda, \negtrans \phi \vlan \negtrans \psi}{ 
			\vltr{\IH}{\alpha (\vec x) \seqar \Lambda, \negtrans \phi}{\vlhy{\ }}{\vlhy{\ }}{\vlhy{\ }}
		}{
			\vltr{\IH}{\alpha (\vec x) \seqar \Lambda, \negtrans \psi}{\vlhy{\ }}{\vlhy{\ }}{\vlhy{ \ }}
		} 
	}
	\\
	\noalign{\medskip}
	\vlderivation{
		\vlin{\vlor}{}{\alpha^n (\vec x) \seqar \postrans{\phi_0} \vlor \postrans{\phi_1}}{
			\vltr{\IH}{\alpha^n (\vec x) \seqar \postrans \phi_i }{\vlhy{\ }}{\vlhy{\ }}{\vlhy{\ }}
		} 
	}
	\quad &\quad 
	\vlderivation{
		\vliin{\vlte}{}{\alpha^n (\vec x) \seqar \postrans \phi \vlte \postrans \psi}{
			\vltr{\IH}{\alpha^l (\vec x) \seqar \postrans \phi}{\vlhy{\ }}{\vlhy{\ }}{\vlhy{\ }}
		}{
			\vltr{\IH}{\alpha^m (\vec x) \seqar \postrans \psi}{\vlhy{\ }}{\vlhy{\ }}{\vlhy{\ }}
		} 
	}
	\end{array}
	\]
	where: 
	\begin{itemize}
		\item $i = 0$ or $i=1$, depending on whether $\alpha \models \phi_0$ or $\alpha \models \phi_1$, respectively; and,
		\item $l$ and $m$ are chosen so that $l \geq |\phi|$ and $m\geq |\psi|$; and,
		\item the derivations marked $\IH$ are obtained from the inductive hypothesis.\qedhere
	\end{itemize}
	
\end{proof}

\subsection{Encoding quantifiers in $\aMALL$}
As we said before, we do not follow the `locks-and-keys' approach of \cite{LincolnMSS90,LincolnMSS92}. 
Instead we follow a similar approach to Statman's proof that intuitionistic propositional logic is $\pspace$-hard \cite{Statman79}, adapted to minimise proof search complexity.

The basic idea is that we would like to encode quantifiers as follows:
\begin{equation}
\label{eqn:quant-to-prop-idea}
\begin{array}{rcl}
\exists x . \phi \quad  &\leadsto &\quad (x\limp \phi ) \vlor (\cnot x \limp \phi) 
\\
\forall x . \phi \quad & \leadsto & \quad (x \limp \phi) \vlan (\cnot x\limp \phi)
\end{array}
\end{equation}
The issue is that such a naive approach would induce an exponential blowup, due to the two occurrences of $\phi$ in each line above.
This idea was considered by Statman in \cite{Statman79}, for intuitionistic propositional logic, where he avoided the blowup by using \emph{Tseitin extension variables}, essentially fresh variables used to abbreviate complex formulas, e.g.\ $(x \equiv \phi)$.
The issue is that this can considerably complicate the structure of proofs, since, in order to access the abbreviated formula, we must pass both a positive and negative phase induced by $\equiv$.

Instead, we use an observation from \cite{Das:17:focussing} that $\phi$ occurs only positively in \eqref{eqn:quant-to-prop-idea} above, and so we only need one direction of Tseitin extension.
Doing this carefully will allow us to control the structure proofs in a way that is consistent with the alternation complexity of the initial QBF, as we will see later.

\begin{definition}
	[$\qcpl$ to $\aMALL$]
	Given a QBF $\phi = Q_k x_k . \cdots . Q_1 x_1 . \phi_0$ with $|\phi_0|=n$ and all quantifiers indicated, we define $\qltrans{\phi}$ by induction on $k$ as follows,
	\[
	\begin{array}{rcl}
	\qltrans{\phi_0} \ &\  \dfn \ &\  \begin{cases}
	\postrans{\phi_0}\  & \ \text{if $Q_1$ is $\exists$} \\
	\negtrans{ \phi_0}\  &\  \text{if $Q_1$ is $\forall$}
	\end{cases}
	\\
	\noalign{\medskip}
	\qltrans{Q_{k} x_k . \phi'} \ &\  \dfn \ &\  
	\begin{cases}
	(\qltrans{ \phi'} \ \limp\  y_k) \limp ((x_k^n \limp y_k) \vlor (\lneg x_k^n \limp y_k) ) & \text{if $Q_{k}$ is $\exists$} \\
	(\qltrans {\phi'} \plimp y_k) \limp ((x_k^n \limp y_k) \vlan (\lneg x_k^n \limp y_k) ) & \text{if $Q_{k}$ is $\forall$}
	\end{cases}
	\end{array}
	\]
	where $y_k$ is always fresh. 
\end{definition}

\begin{lemma}
	\label{lem:qltrans-correct}
	Let $\phi(\vec x)$ be a QBF with all free variables displayed and matrix $\phi_0$, with $|\phi_0| =n $.
	Then $\alpha\models \phi$ if and only if $\aMALL$ proves $\alpha^n (\vec x) \seqar \vec y, \qltrans \phi $ for any assignment $\alpha$ and any $\vec y$ disjoint from $\vec x$.
\end{lemma}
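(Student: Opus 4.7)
The plan is to prove the biconditional by induction on the quantifier prefix length $k$ of $\phi$. For the base case $k=0$, $\phi = \phi_0$ is quantifier-free and $\qltrans{\phi_0}$ is either $\postrans{\phi_0}$ or $\negtrans{\phi_0}$; both directions then follow from Prop.~\ref{prop:pos-neg-sat} combined with weakening admissibility (Prop.~\ref{prop:wk-admiss-mall}), used to accommodate the extra cedent $\vec y$ and, in the $\negtrans$ case, the additional $n-1$ copies of each literal needed to promote $\alpha(\vec x)$ to $\alpha^n(\vec x)$.

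For the $\Rightarrow$ direction of the inductive step, fix $\phi = Q_k x_k . \phi'$ with $\alpha \models \phi$. If $Q_k = \exists$, I would pick a refinement $\alpha'$ of $\alpha$ (either $\alpha \cup \{x_k\}$ or $\alpha \setminus \{x_k\}$) with $\alpha' \models \phi'$, and build the target proof by invertibly decomposing the outer $\limp$ to expose the hypothesis $\qltrans{\phi'} \limp y_k$, picking the $\vlor$-disjunct corresponding to the chosen value of $x_k$, invertibly decomposing the inner $\limp$ to add $x_k^n$ or $\lneg x_k^n$ to the context, and finally applying $\limp$-left on the hypothesis with a context split that feeds the IH (giving the form $\alpha'^n(\vec x, x_k) \seqar \vec y, \qltrans{\phi'}$) on one premiss and closes via $\wkid$ on $y_k$ on the other. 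If $Q_k = \forall$, both refinements of $\alpha$ satisfy $\phi'$; here the proof is built by decomposing $\vlan$ into its two branches and handling each with the additive $\plimp$-left rule, which, not splitting the context, allows the respective IH proofs to be inserted directly.

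For the $\Leftarrow$ direction, assuming $\aMALL$ proves $\alpha^n(\vec x) \seqar \vec y, \qltrans{\phi}$, I would invoke the focussing theorem (Thm.~\ref{thm:bi-foc-complete}) to canonicalise to a bi-focussed $\afocMALL$-proof. The deterministic decompositions of the outer $\vlpa$s (arising from $\limp$) are then forced, after which the first branching step must be the principal $\vlor$ (when $Q_k = \exists$) or $\vlan$ (when $Q_k = \forall$) of the encoding. Reading off the chosen $\vlor$-disjunct yields the witness value of $x_k$; the context split at the subsequent $\vlte$, together with admissible weakening to restore the ``maximal'' form $\alpha'^n(\vec x, x_k) \seqar \vec y, \qltrans{\phi'}$, produces a sequent to which the IH applies. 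The $\forall$ case is dual: the two $\vlan$-branches each give IH-applicable proofs for the two refinements of $\alpha$.

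The main obstacle is the $\Leftarrow$ direction, where the linear use of the hypothesis $\qltrans{\phi'} \limp y_k$ must be tracked carefully. It is precisely the linearity constraint in the $\exists$ case that forces a single witness to be read off; a naive classical interpretation of $\aMALL$ collapses the encoding to a tautology and is of no use. Throughout, admissibility of weakening is the workhorse that lets context splits at $\vlte$ applications be reshaped into the form required by the IH.
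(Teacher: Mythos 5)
Your overall strategy coincides with the paper's: the same induction on the quantifier prefix, essentially the same constructed bi-focussed proofs for the left-to-right direction, and an appeal to the focussing theorem for the converse. However, there is a genuine gap in your $\Leftarrow$ argument for the $\exists$ case: you assert that after the forced deterministic $\vlpa$ steps ``the first branching step must be the principal $\vlor$'' of the encoding. This is false. After the deterministic phase the sequent contains \emph{two} positive formulas available to a $\dec$ step: the goal disjunction $(x_k^n \limp y_k) \vlor (\lneg x_k^n \limp y_k)$ \emph{and} the hypothesis $\qltrans{\phi'} \limp y_k$, whose one-sided form $\qltrans{\phi'} \vlte \lneg{y}_k$ is a positive $\vlte$-formula; a bi-focussed proof may perfectly well decide on the latter first, so restricting attention to proofs beginning with the principal $\vlor$ does not cover all the proofs that Thm.~\ref{thm:bi-foc-complete} supplies. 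The paper's proof addresses exactly this case: if the $\lefrul{\limp}$ step comes first, its context split must send the disjunction $(x_k^n \limp y_k) \vlor (\lneg x_k^n \limp y_k)$ to the premiss for $y_k$ --- otherwise every variable occurrence in that premiss is distinct (the $y$'s being fresh) and the branch cannot be closed --- after which, possibly after weakening, the inductive hypothesis applies to the $\qltrans{\phi'}$-premiss. Your closing remark that linearity ``forces a single witness to be read off'' gestures at this phenomenon, but without the explicit analysis of alternative decide orders the converse direction is incomplete. (Your treatment of the $\forall$ case is essentially fine, since there both available foci are negative and, as the paper notes via the $\vlan$-permutation, co-nondeterministic phases are confluent, so the order of co-decides is immaterial.)

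A smaller omission: in the base case, weakening admissibility only serves the $\Rightarrow$ direction. For $\Leftarrow$ you are handed a proof of $\alpha^n(\vec x) \seqar \vec y, \qltrans{\phi_0}$, whereas Prop.~\ref{prop:pos-neg-sat} concerns sequents without $\vec y$ (and with $\alpha(\vec x)$ rather than $\alpha^n(\vec x)$ in the $\negtrans{(\cdot)}$ case). Since $\vec y$ is disjoint from $\vec x$, the paper repairs this by deleting $\vec y$ together with its descendants from the given proof, preserving correctness; alternatively one can read the sequent classically and evaluate with all of $\vec y$ false. Some such step should be made explicit.
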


\begin{figure}[t]
	\[
	\vlderivation{
		\vlin{=}{}{\alpha^n (\vec x) \seqar \vec y, \qltrans{\exists x . \phi}  }{
			\vlin{\rigrul \limp}{}{\alpha^n (\vec x) \seqar \vec y, ( \qltrans \phi \limp y) \limp ((x^n \limp y) \vlor (\lneg x^n \limp y) ) }{
				\vlin{\rigrul{\dec}}{}{\alpha^n (\vec x) , \qltrans \phi \limp y \seqar  \vec y,(x^n \limp y) \vlor (\lneg x^n \limp y)}{
					\vlin{\rigrul{\vlor}}{}{\alpha^n (\vec x) , \qltrans \phi \limp y \seqar \vec y \Downarrow (x^n \limp y) \vlor (\lneg x^n \limp y) }{
						\vlin{\rigrul{\rel}}{}{\alpha^n (\vec x) , \qltrans \phi \limp y \seqar  \vec y \Downarrow \pm x^n \limp y }{
							\vlin{\rigrul{\limp}}{}{\alpha^n (\vec x) , \qltrans \phi \limp y \seqar \vec y , \pm x^n \limp y}{
								\vlin{\lefrul \dec }{}{\alpha^n (\vec x) , \pm x^n, \qltrans \phi \limp y \seqar \vec y , y}{
									\vliin{\lefrul \limp}{}{\alpha^n (\vec x) , \pm x^n \Downarrow \qltrans \phi \limp y \seqar \vec y, y }{
										\vlin{\rigrul \rel}{}{\alpha^n (\vec x) , \pm x^n \seqar \vec y \Downarrow \qltrans \phi  }{
											\vltr{\IH}{\alpha^n (\vec x) , \pm x^n \seqar  \vec y, \qltrans \phi }{\vlhy{\quad }}{\vlhy{}}{\vlhy{\quad }}
										}
									}{
										\vlin{\lefrul \rel}{}{\Downarrow y \seqar y}{
											\vlin{\id}{}{y \seqar y}{\vlhy{}}
										}
									}
								}
							}
						}	
					}
				}
			}
		}	
	}
	\]
	\caption{Proof of $\exists$ case for left-right direction of Lemma~\ref{lem:qltrans-correct}.}
	\label{fig:ql-der-exists}
\end{figure}
\begin{figure}[t]
	\[
	\vlderivation{
		\vlin{=}{}{\alpha^n (\vec x) \seqar \vec y, \qltrans{\forall x . \phi} }{
			\vlin{\rigrul{\limp}}{}{\alpha^n (\vec x) \seqar \vec y, (\qltrans \phi \plimp y ) \limp ((x^n \limp y) \vlan (\lneg x^n \limp y)) }{
				\vlin{\rigrul{\codec}}{}{\alpha^n (\vec x) , \qltrans \phi \plimp y \seqar  \vec y, (x^n \limp y) \vlan (\lneg x^n \limp y) }{
					\vliin{\rigrul{\vlan}}{}{\alpha^n (\vec x) , \qltrans \phi \plimp y \seqar \vec y\Uparrow (x^n \limp y) \vlan (\lneg x^n \limp y) }{
						\vlin{\rigrul{\corel}}{}{\alpha^n (\vec x) , \qltrans \phi \plimp y \seqar \vec y \Uparrow x^n \limp y }{
							\vlin{\rigrul{\limp}}{}{\alpha^n (\vec x) , \qltrans \phi \plimp y \seqar \vec y,  x^n \limp y}{
								\vlin{\lefrul \codec }{}{\alpha^n (\vec x) , x^n, \qltrans \phi \plimp y \seqar  \vec y, y }{
									\vliin{\lefrul \plimp }{}{\alpha^n (\vec x) , x^n \Uparrow\qltrans \phi \plimp y \seqar \vec y, y }{
										\vlin{\rigrul \corel}{}{ \alpha^n (\vec x) , x^n \seqar \vec y, y \Uparrow\qltrans \phi  }{
											\vltr{\IH}{\alpha^n (\vec x), x^n \seqar  \vec y, y, \qltrans\phi }{\vlhy{\quad }}{\vlhy{}}{\vlhy{\quad }}
										}
									}{
										\vlin{\lefrul \corel}{}{\alpha^n (\vec x), x^n \Uparrow y \seqar\vec y, y }{
											\vlin{\id}{}{\alpha^n (\vec x) , x^n ,y \seqar\vec y, y }{\vlhy{}}
										}
									}
								}
							}
						}
					}{
						\vlin{\rigrul{\corel}}{}{\alpha^n (\vec x) , \qltrans \phi \plimp y \seqar \vec y \Uparrow \lneg x^n \limp y }{\vlhy{\vdots}}
					}
				}
			}
		}
	}
	\]
	\caption{Proof of $\forall$ case for left-right direction of Lemma~\ref{lem:qltrans-correct}.}
	\label{fig:ql-der-forall}
\end{figure}

\begin{proof}
	We proceed by induction on the number of quantifiers in $\phi$.
	For the base case, when $\phi $ is quantifier-free, we appeal to Prop.~\ref{prop:pos-neg-sat}. The left-right direction follows directly by weakening (cf.~Prop.~\ref{prop:wk-admiss-mall}), while the right-left direction follows after observing that $\vec y$ does not occur in $\qltrans{\phi}$ or $\alpha^n(\vec x)$; thus $\vec y $ may be deleted from a proof (along with its descendants) while preserving correctness.
	
	For the inductive step, in the left-right direction we give appropriate bi-focussed proofs in Figs.~\ref{fig:ql-der-exists} and \ref{fig:ql-der-forall}, where: $\pm x$ in Fig.~\ref{fig:ql-der-exists} is chosen to be $x$ if $x\in \alpha$ and $\lneg x $ otherwise; the derivations marked $\IH$ are obtained by the inductive hypothesis; and the derivation marked $\dots $ in Fig.~\ref{fig:ql-der-forall} is analogous to the one on the left of it.\footnote{Note that, for the derivations for the innermost quantifier ($\exists$ or $\forall$), the topmost $\rel$ or $\corel$ step of Figs.~\ref{fig:ql-der-exists} or \ref{fig:ql-der-forall} (resp.) does not occur.}
	
	For the right-left direction, we need only consider the other possibilities that could occur during bi-focussed proof search, by the focussing theorem, Thm.~\ref{thm:bi-foc-complete}.
	For the $\exists $ case, bottom-up, one could have chosen to first decide on $\qltrans \phi \limp y$ in the antecedent.
	The associated $\lefrul \limp $ step would have to send the formula $(x^n\limp y) \vlor (\lneg x^n \limp y)$ to the right premiss (for $y$), since otherwise every variable occurrence in that premiss would be distinct and there would be no way to correctly finish proof search. Thus, possibly after weakening, we may apply the inductive hypothesis to the left premiss (for $\qltrans \phi$).
	A similar analysis of the upper $\lefrul \limp$ step in Fig.~\ref{fig:ql-der-exists} means that any other split will allow us to appeal to the inductive hypothesis after weakening.
	For the $\forall $ case the argument is much simpler, since no matter which order we `co-decide', we will end up with the same leaves. (This is actually exemplary of the more general phenomenon that invertible phases of rules are `confluent', cf.~\cite{Andreoli92,ChaudhuriMS08,liang2009focusing}.)
	In particular, $\vlan$-steps may be permuted as follows:
	\[
	\begin{array}{rc}
	& \vlderivation{
		\vliin{\vlan}{}{\seqar \Gamma, A\vlan B, C\vlan D}{
			\vliin{\vlan}{}{\seqar \Gamma, A \vlan B, C }{\vlhy{\seqar \Gamma, A, C}}{\vlhy{\seqar \Gamma, B, C}}
		}{
			\vliin{\vlan}{}{\seqar \Gamma , A \vlan B, D}{\vlhy{\seqar \Gamma, A, D}}{\vlhy{\seqar \Gamma, B, D}}
		}
	}
	\\
	\noalign{\medskip} 
	\leadsto\quad  & 
	\vlderivation{
		\vliin{\vlan}{}{\seqar \Gamma, A\vlan B, C\vlan D}{
			\vliin{\vlan}{}{\seqar \Gamma, A , C \vlan D}{\vlhy{\seqar \Gamma, A, C}}{\vlhy{\seqar \Gamma, A,D}}
		}{
			\vliin{\vlan}{}{\seqar \Gamma , B, C \vlan D}{\vlhy{\seqar \Gamma, B,C}}{\vlhy{\seqar \Gamma, B, D}}
		}
	} 
	\end{array}
	\qedhere
	\]
\end{proof}

\begin{theorem}
	\label{thm:qltrans-correct}
	A closed QBF $\phi$ is true if and only if $\aMALL$ proves $\qltrans{\phi}$.
\end{theorem}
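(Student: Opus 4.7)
The plan is to derive Thm.~\ref{thm:qltrans-correct} as an immediate consequence of Lemma~\ref{lem:qltrans-correct}, specialised to the case where $\phi$ has no free variables. Since $\phi$ is closed, there are no free variables to display, so we take $\vec x$ to be the empty tuple. In that case the empty assignment $\alpha = \emptyset$ is the unique assignment on $\vec x$, and the cedent $\alpha^n(\vec x)$ is empty for every $n$. We are also free to choose $\vec y = \emptyset$, which is trivially disjoint from the empty $\vec x$.

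Under these choices Lemma~\ref{lem:qltrans-correct} specialises to: $\emptyset \models \phi$ if and only if $\aMALL$ proves the sequent $\seqar \qltrans{\phi}$. By the definition of truth for closed formulas, $\emptyset \models \phi$ is equivalent to $\phi$ being true, and a proof of $\qltrans{\phi}$ is by convention a proof of $\seqar \qltrans{\phi}$. The two conditions in the theorem statement are therefore equivalent.

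There is no real obstacle here: the lemma was carefully formulated with free variables and auxiliary labels $\vec y$ precisely so that its statement carries through inductively, but the theorem only concerns the outermost, closed case. The only thing worth verifying in passing is that the definition of $\qltrans{\phi}$ (which introduces fresh labels $y_k$ at each quantifier) places no constraint on the outer sequent: the outermost $y_k$ introduced by the translation is bound internally by the encoded implication, and no labels appear free on the right of $\seqar$ once $\vec y$ is taken to be empty.
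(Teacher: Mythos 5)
Your proof is correct and takes essentially the same route as the paper, whose own proof is precisely the one-line observation that the theorem is Lemma~\ref{lem:qltrans-correct} instantiated with $\vec y = \emptyset$ (and, for closed $\phi$, the empty tuple $\vec x$ and empty cedent $\alpha^n(\vec x)$). One small imprecision in your closing remark: the fresh variables $y_k$ are not ``bound internally'' --- $\aMALL$ has no binders, so they simply occur as fresh propositional variables inside $\qltrans{\phi}$ --- but this does not affect the argument, since the lemma only requires $\vec y$ (the extra cedent, here empty) to be disjoint from $\vec x$.
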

\begin{proof}
	Follows immediately from Lemma~\ref{lem:qltrans-correct}, setting $\vec y = \emptyset$.
\end{proof}

%
%
%

\section{Focussed proof search as alternating time predicates}
\label{sect:proof-search-predicates}

%
%
%
%

In this section we show how to express focussed proof search as an alternating polynomial-time predicate that will later allow us to calibrate the complexity of proof search with levels of the QBF and polynomial hierarchies.
The notions we develop apply equally to either $\MALL$ or $\aMALL$.

This section mostly follows Sect.~5 of \cite{Das18:ijcar} except that, as well as further general details, we include a proof of Theorem~\ref{thm:prov-preds-qbf} (essentially Theorem~20 in \cite{Das18:ijcar}).

\smallskip

We will now introduce `provability predicates' that delineate the complexity of proof search in a similar way to the QBF and polynomial hierarchies we presented earlier.
Recall the notions of deterministic, nondeterministic and co-nondeterministic rules from Dfn.~\ref{dfn:multifoc-mallw}, cf.~Fig.~\ref{fig:focmall}.

\begin{definition}
	[Focussing hierarchy]
	\label{dfn:foc-hier}
	A cedent $\Gamma$ of $\MALL(\mathsf w)$ is:
	\begin{itemize}
		\item 
		$\Sigma^f_0$-provable, equivalently $\Pi^f_0$-provable, if $\seqar \Gamma$ is provable using only deterministic rules.
		\item 
		{$\Sigma^f_{k+1}$-provable} if there is a derivation of $\seqar \Gamma$, using only deterministic and nondeterministic rules, from sequents $\seqar \Gamma_i$ which are $\Pi^f_k$-provable.
		\item 
		{$\Pi^f_{k+1}$-provable} if every maximal path from $\seqar \Gamma$, bottom-up, through deterministic and co-nondeterministic rules ends at a $\Sigma^f_k$-provable sequent.
	\end{itemize}
	We sometimes simply say ``$\Gamma$ is $\Sigma^f_k$'' or even ``$\Gamma \in \Sigma^f_k$'' if $\Gamma$ is $\Sigma^f_k$-provable.
\end{definition}

\noindent
The definition above is robust under the choice of multi-focussed, (co-)focussed or bi-focussed proof systems: while the number of $\dec$ or $\codec$ steps may increase, the number of alternations of nondeterministic and co-nondeterministic phases is the same.
This robustness will also apply to other concepts introduced in this section.

From the definition it is not hard to see that we have a natural correspondence between the focussing hierarchy and the other hierarchies we have discussed:

\begin{theorem}
	\label{thm:prov-preds-qbf}
	For $k\geq 0$, we have the following:
	\begin{enumerate}
		\item\label{item:pi-f-k-in-pi-p-k} $\Pi^f_k$-provability is computable in $\Pi^p_k$.
		\item\label{item:sigma-f-k-in-sigma-p-k} $\Sigma^f_k$-provability is computable in $\Sigma^p_k$.
	\end{enumerate} 
\end{theorem}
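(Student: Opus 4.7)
The plan is to proceed by simultaneous induction on $k$, producing at each level an alternating polynomial-time procedure whose quantifier alternation matches Def.~\ref{dfn:foc-hier}. For the base case $k=0$, I first observe that, among the deterministic rules, only $\id$, $\bot$, $1$, $\top$ and $\vlpa$ can actually appear in a derivation that is forbidden from using nondeterministic or co-nondeterministic rules: both $\dec$ and $\codec$ introduce delimiters that can only be discharged by the forbidden phases, and hence lead to dead branches. Each of the usable rules is either a zero-premiss axiom or a one-premiss rule that strictly decreases cedent size, so $\Sigma^f_0 = \Pi^f_0$-provability is decided in $\ptime = \Sigma^p_0 = \Pi^p_0$ by greedily following the unique bottom-up decomposition down to axioms.

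For the inductive step on $\Sigma^f_{k+1}$, an alternating machine nondeterministically guesses a derivation $D$ of $\seqar\Gamma$ built purely from deterministic and nondeterministic rules, with leaves $\seqar\Gamma_i$; it then queries, for each $i$, an oracle deciding $\Pi^f_k$-provability (which exists in $\Pi^p_k$ by the induction hypothesis). Since an oracle may be freely complemented, this places the problem in $\nptime(\Pi^p_k) = \nptime(\Sigma^p_k) = \Sigma^p_{k+1}$. Dually, for $\Pi^f_{k+1}$ the machine universally guesses a maximal bottom-up path $\pi$ through deterministic and co-nondeterministic rules, resolving the binary branching of $\vlan$ and the focus choices in $\codec$ co-nondeterministically, and then queries a $\Sigma^p_k$-oracle on the endpoint to certify $\Sigma^f_k$-provability; this yields $\mathbf{co}\nptime(\Sigma^p_k) = \Pi^p_{k+1}$. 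Because the focussing hierarchy is insensitive to the choice between multi-focussed, focussed, co-focussed and bi-focussed systems, I may work with whichever variant makes the bookkeeping simplest.

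The main obstacle will be a polynomial size bound on the guessed object, without which the alternating quantifier count above does not translate into genuine $\Sigma^p_{k+1}$ / $\Pi^p_{k+1}$ membership. I plan to extract this bound by grouping each derivation into ``super-phases'' consisting of a maximal block of deterministic and nondeterministic (respectively, co-nondeterministic) rules between consecutive $\dec$--$\rel$ (respectively, $\codec$--$\corel$) transitions. A $\dec$ step must commit to a non-empty focus $\vec P'$ that is subsequently fully decomposed before the next $\rel$, so each super-phase strictly decreases total formula complexity; together with the context-splitting nature of $\vlte$ (which partitions rather than duplicates formula content across its two premisses), this caps the size of an entire $\Sigma^f_{k+1}$-derivation by a polynomial in $|\Gamma|$. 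The dual bound for $\Pi^f_{k+1}$-paths follows by the symmetric analysis on $\codec$, $\vlan$ and $\corel$. Once these size bounds are in hand, the inductive construction above delivers the claimed complexity classes.
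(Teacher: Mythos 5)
Your proposal is correct and takes essentially the same route as the paper's proof: induction on $k$, a polynomial-time base case obtained by maximally (and confluently) applying the deterministic rules, and guess-and-check inductive steps invoking the $\Pi^p_k$/$\Sigma^p_k$ induction hypotheses, with the polynomial size bound resting on the fact that deterministic and (co-)nondeterministic rules---in particular the context-splitting $\vlte$---keep the total size of premisses bounded by that of the conclusion. Your ``super-phase'' termination argument and the explicit observation that $\dec$/$\codec$ are unusable in a purely deterministic derivation merely spell out what the paper dispatches as ``by inspection of the rules''.
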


\noindent
An analogous result has been presented in previous work, \cite{Das:17:focussing}. An interesting point is that, for the $\vlte$ rule, even though there are two premisses, the rule is context-splitting, and so a nondeterministic machine may simply split into two parallel threads with no blowup in complexity.
\begin{proof}
	[of Thm.~\ref{thm:prov-preds-qbf}]
	We proceed by induction on $k$.
	In the base case, a cedent is $\Sigma^f_0$-provable (equivalently $\Pi^f_0$-provable) just if it has a proof using only deterministic rules. 
	Upon inspection of Fig.~\ref{fig:focmall}, we notice that it does not matter which order we apply deterministic rules, bottom-up, since maximal application will always lead to the same sequent at the top.
	This follows from a simple rule permutation argument:
	\[
	\vlderivation{
		\vlin{\vlpa}{}{\seqar\Gamma, A\vlpa B, C\vlpa D}{
			\vlin{\vlpa}{}{\seqar\Gamma, A \vlpa B,C,D}{\vlhy{\seqar\Gamma, A,B,C,D}}
		}
	}
	\qquad \leadsto \qquad
	\vlderivation{
		\vlin{\vlpa}{}{\seqar\Gamma, A\vlpa B, C\vlpa D}{
			\vlin{\vlpa}{}{\seqar\Gamma, A , B,C\vlpa D}{\vlhy{\seqar\Gamma, A,B,C,D}}
		}
	}
	\]
	Thus, since deterministic rules must terminate after a linear number of steps (bottom-up), we may verify deterministic-provability by simply applying deterministic steps maximally (in any order bottom-up) and verifying that the end result is a correct initial sequent. Thus we indeed have that $\Sigma^f_0$-provability (equivalently $\Pi^f_0$-provability) is computable in $\ptime = \Sigma^p_0 = \Pi^p_0$.

	For the inductive step for \ref{item:pi-f-k-in-pi-p-k}, a cedent $\Gamma$ is not $\Pi^f_{k+1}$-provable just if there is some maximal branch of co-nondeterministic steps applied to $\Gamma$, bottom-up, that terminates in a sequent $\Gamma'$ that is not $\Sigma^f_k$-provable. 
	Any such branch has polynomial-size (by inspection of the rules), and we have from the inductive hypothesis that $\Sigma^f_k$-provability is computable in $\Sigma^p_k$. 
	Thus we have that non-$\Pi^f_{k+1}$-provability is computable in $\Sigma^p_{k+1}$ and so, by Dfn.~\ref{dfn:ph}, $\Pi^f_{k+1}$-provability is computable in $\Pi^p_{k+1}$.
	

	For the inductive step for \ref{item:sigma-f-k-in-sigma-p-k}, a cedent $\Gamma$ is $\Sigma^f_{k+1}$-provable just if there is a derivation of the form,
	\[
	\vltreeder{\Phi}{\seqar \Gamma}
	{\seqar \Gamma_1}{\cdots}{\seqar \Gamma_n}
	\]
	using only deterministic and nondeterministic rules such that each $\Gamma_i$ is $\Pi^f_k$-provable.
	Notice that such $\Phi$ has polynomial-size in $|\Gamma|$ with $\sum_{i=1}^n |\Gamma_i| \leq |\Gamma|$, by inspection of the (non)determinstic rules in Fig.~\ref{fig:focmall}.
	Thus, to check that $\Gamma$ is $\Sigma^f_{k+1}$-provable we need only guess the appropriate derivation $\Phi$ and sequents $\Gamma_1, \dots , \Gamma_n$, and then check that each $\Gamma_i$ is $\Pi^f_k$-provable.
	By the inductive hypothesis, we have that $\Pi^f_k$-provability is a $\Pi^p_k$ property,
	so
	we may check on a $\Pi_k$-machine that \emph{all} these $\Gamma_i$s are actually $\Pi^f_k$-provable in time $\sum_{i=1}^n |\Gamma_i|^{O(1)} \leq |\Gamma|^{O(1)}$.
	Thus we have that $\Sigma^f_{k+1}$-provability is indeed computable in $\Sigma^p_k$.	
	%
\end{proof}

\begin{corollary}
	\label{cor:prov-pred-qbfs}
	For $k\geq 1$, we have the following:
	\begin{enumerate}
		\item There are $\Sigma^q_k$ formulas $ \Prov{\Sigma^f_k}_n$, constructible in time polynomial in $n\in \Nat$, computing $\Sigma^f_k$-provability on all formulas $A$ s.t.\ $|A|=n$.
		\item There are $\Pi^q_k$ formulas $\Prov{\Pi^f_k}_n$, constructible in time polynomial in $n\in \Nat$, computing $\Pi^f_k$-provability on formulas $A$ s.t.\ $|A|=n$.
	\end{enumerate} 
\end{corollary}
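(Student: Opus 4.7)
The plan is to derive this as a uniform version of Theorem~\ref{thm:prov-preds-qbf}, combining it with the polynomial-time uniformity of the hardness direction of Theorem~\ref{thm:qbf-complexity} (i.e.\ a Cook--Levin style construction for alternating machines). First I would fix a reasonable encoding of $\MALL(\mathsf{w})$-formulas of size $n$ as bit strings of some polynomial length $p(n)$, and associate with each $n$ a fresh vector $\vec z_n$ of $p(n)$ propositional variables so that an assignment $\alpha \subseteq \vec z_n$ represents a formula $A_\alpha$ of size $n$.

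Next, by Theorem~\ref{thm:prov-preds-qbf}(\ref{item:sigma-f-k-in-sigma-p-k}), $\Sigma^f_k$-provability is decided by some $\Sigma^p_k$ alternating Turing machine $M_\Sigma$ running in time $q(n)$ for a fixed polynomial $q$. The standard simulation of such an alternating machine by a $\Sigma^q_k$ sentence (used in the proof of Theorem~\ref{thm:qbf-complexity}) builds an outer block of $k$ alternating quantifier-blocks over the guess-bits of $M_\Sigma$, followed by a quantifier-free matrix asserting that the resulting deterministic tableau (of size $q(n)^{O(1)}$) is well-formed and accepts. The crucial observation is that this construction is polynomial-time uniform in $n$: the structure of the tableau, the guess-variables, and the local consistency conditions depend only on the fixed code of $M_\Sigma$ and on $n$, while the input formula appears only through the free variables $\vec z_n$ occurring (positively or negatively) in the matrix. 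Setting $\Prov{\Sigma^f_k}_n$ to be this $\Sigma^q_k$ formula with free variables $\vec z_n$, we obtain, for every $A$ with $|A|=n$ and corresponding encoding $\alpha_A \subseteq \vec z_n$, the equivalence $\alpha_A \models \Prov{\Sigma^f_k}_n$ iff $A$ is $\Sigma^f_k$-provable, as required. The $\Pi^q_k$ case is entirely symmetric, using Theorem~\ref{thm:prov-preds-qbf}(\ref{item:pi-f-k-in-pi-p-k}) and a $\Pi^p_k$ alternating machine for $\Pi^f_k$-provability.

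The main point to verify—really the only nontrivial one—is the uniformity of the Cook--Levin construction implicit in Theorem~\ref{thm:qbf-complexity}: that is, that the proof of $\Sigma^p_k$-hardness of $\Sigma^q_k$-evaluation in fact produces, given only $n$ (and the fixed machine), a $\Sigma^q_k$-formula template in time $n^{O(1)}$ whose free variables are the input bits. This is entirely standard for the non-alternating case and extends routinely to the polynomial hierarchy by adding one quantifier block per level of alternation; I would simply remark on this and cite the standard reduction rather than grinding through the tableau construction.
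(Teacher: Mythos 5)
Your proposal is correct and is exactly the argument the paper compresses into its one-line proof (``follows immediately from Theorem~\ref{thm:prov-preds-qbf} under Theorem~\ref{thm:qbf-complexity}''): you apply the machine characterisation from Theorem~\ref{thm:prov-preds-qbf} and then the polynomial-time uniform hardness construction behind Theorem~\ref{thm:qbf-complexity}. Your explicit remark on the uniformity of the Cook--Levin-style reduction is precisely the implicit content of the paper's citation, so there is nothing to add.
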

\begin{proof}
	Follows immediately from Theorem~\ref{thm:prov-preds-qbf} under Theorem~\ref{thm:qbf-complexity}.
\end{proof}

We now give a slightly different way to view the focussing hierarchy, based on a more directly calculable measure of cedents that is similar to the notions of `decide depth' and `release depth' found in other works, e.g.~\cite{nigaminvestigating}.
We will use (a variation of) this to eventually formulate our encoding from $\aMALL$ to $\qcpl$ in the next section.

\begin{definition}
	[(Co-)nondeterministic complexity]
	Let $\Phi$ be a $\focMALL(\mathsf w)$ proof.
	We define the following:
	\begin{itemize}
		\item The \emph{nondeterministic} complexity of $\Phi$, written $\sigma(\Phi)$, is the maximum number of alternations, bottom-up, between $\dec$ and $\codec$ steps in a branch through $\Phi$, setting $\sigma(\Phi) = 1 $ if $\Phi$ has only $\dec$ steps.
		\item The \emph{co-nondeterministic} complexity of $\Phi$, written $\pi(\Phi)$, maximum number of alternations, bottom-up, between $\dec$ and $\codec$ steps in a branch through $\Phi$, setting $\pi(\Phi) = 1$ if $\Phi$ has only $\codec$ steps.
	\end{itemize}
	
	%
	
	
	\noindent
	For a cedent $\Gamma$ we further define the following:
	\begin{itemize}
		\item $\sigma( \Gamma)$ is the least $k\in \Nat$ s.t.\ there is a $\focMALL(\mathsf w)$ proof $\Phi $ of $\seqar \Gamma$ with $\sigma(\Phi)=  k$.
		\item $\pi(\Gamma)$) is the least $k\in \Nat$ s.t.\ there is a $\focMALL(\mathsf w)$ proof $\Phi $ of $\seqar \Gamma$ with $\pi(\Phi) = k$.
	\end{itemize} 
\end{definition}

%


\noindent
Putting together the results and notions of this section, we have:

\begin{proposition}
	\label{prop:complexity-foc-hier}
	Let $\Gamma$ be a cedent and $k \geq 0$. We have the following:
	\begin{enumerate}
		\item $\Gamma $ is $\Sigma^f_k$-provable if and only if $\sigma (\Gamma)\leq k$.
		\item $\Gamma$ is $\Pi^f_k$-provable if and only if $\pi(\Gamma)\leq k$. 
	\end{enumerate}
\end{proposition}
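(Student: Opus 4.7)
The plan is to prove both parts by simultaneous induction on $k$, exploiting the fact that the recursive definition of the focussing hierarchy (Dfn.~\ref{dfn:foc-hier}) essentially peels off one nondeterministic or co-nondeterministic phase at a time, which is precisely what the measures $\sigma$ and $\pi$ count block-by-block along a branch.

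For the base case $k=0$, note that $\Sigma^f_0 = \Pi^f_0$, and that a proof $\Phi$ satisfies $\sigma(\Phi) = 0$ (equivalently $\pi(\Phi) = 0$) exactly when it contains no $\dec$ or $\codec$ steps, i.e.\ when it consists entirely of deterministic rules. This matches the definition of $\Sigma^f_0 = \Pi^f_0$-provability directly.

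For the inductive step of (1), in the forward direction I would take a $\Sigma^f_{k+1}$-derivation $\Psi$ of $\seqar\Gamma$ using only deterministic and nondeterministic rules, with upper sequents $\seqar\Gamma_i$ that are $\Pi^f_k$-provable; by the inductive hypothesis each $\Gamma_i$ has a proof $\Phi_i$ with $\pi(\Phi_i)\leq k$, and composing these on top of $\Psi$ yields a proof of $\Gamma$ whose $\sigma$-value is bounded by $k+1$, since the single bottom block of nondeterministic activity contributes exactly one extra unit of alternation. Conversely, given a proof $\Phi$ with $\sigma(\Phi) \leq k+1$, I would cut $\Phi$ along each branch at the first $\codec$ step encountered bottom-up (or at a leaf, or at a deterministic terminus, if no such step occurs); the bottom portion $\Psi$ uses only deterministic and nondeterministic rules, and its upper sequents $\Gamma_i$ root subproofs $\Phi_i$ with $\pi(\Phi_i) \leq k$, so the inductive hypothesis places each $\Gamma_i$ in $\Pi^f_k$ and hence $\Gamma$ in $\Sigma^f_{k+1}$ via $\Psi$. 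Part (2) is entirely dual, cutting at the first $\dec$ step instead.

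The main obstacle will be justifying the branch-wise decomposition in the backward direction: distinct branches may traverse differing numbers of alternations, so the ``first $\codec$'' may sit at different heights on different branches. This is resolved by observing that $\sigma$ and $\pi$ take a maximum over branches while the hierarchy definitions quantify over all upper sequents, so the branch-wise cuts are mutually coherent as long as the splits respect the tree structure of the proof. In particular, context-splitting rules such as $\vlte$ pose no problem, since they do not straddle phase boundaries, and the release rules $\rel$ and $\corel$ cleanly demarcate the top of each nondeterministic or co-nondeterministic phase, allowing the cut to be made uniformly.
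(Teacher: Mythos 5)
Your inductive skeleton is the right one, and the base case, both directions of part (1), and the forward direction of part (1)'s bookkeeping (composing $\Phi_i$ on top of $\Psi$ adds exactly one nondeterministic block) are all sound. But the claim that ``Part (2) is entirely dual'' hides a genuine gap: Dfn.~\ref{dfn:foc-hier} is \emph{not} self-dual. $\Sigma^f_{k+1}$-provability is existential (``\emph{there is} a derivation \dots from sequents $\seqar \Gamma_i$''), whereas $\Pi^f_{k+1}$-provability is universal: \emph{every} maximal path from $\seqar\Gamma$, bottom-up, through deterministic and co-nondeterministic rules must end at a $\Sigma^f_k$-provable sequent — a condition on the whole proof-search space, not on any single proof (this reading is confirmed by the proof of Thm.~\ref{thm:prov-preds-qbf}, which characterises non-$\Pi^f_{k+1}$-provability by the existence of \emph{some} bad maximal branch). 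Consequently, in the backward direction of (2), cutting your given proof $\Phi$ with $\pi(\Phi)\leq k+1$ at its first $\dec$ steps only certifies the maximal paths that $\Phi$ happens to realise; it says nothing about maximal paths that apply $\vlpa$ and $\vlan$ in a different order, choose different co-foci $\vec M$ in $\codec$, or time $\corel$ differently. Your closing paragraph senses a problem with the branch-wise decomposition but resolves the wrong one (differing cut heights, which is indeed harmless); the assertion that ``the hierarchy definitions quantify over all upper sequents'' is true only of the upper sequents of the derivation you exhibit, which suffices for $\Sigma^f_{k+1}$ but not for $\Pi^f_{k+1}$.

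To close the gap you need an invertibility/confluence lemma for the deterministic and co-nondeterministic rules: any two maximal phases of such rules applied to $\seqar\Gamma$ end in the same sequents path-by-path (up to the genuine branching choices at $\vlan$), so that $\Sigma^f_k$-provability of the end-sequents is independent of the order of application. This is exactly the rule-permutation content the paper displays for $\vlpa$ (in the base case of Thm.~\ref{thm:prov-preds-qbf}) and for $\vlan$ (in the proof of Lemma~\ref{lem:qltrans-correct}), extended to cover $\codec$ and $\corel$. A symmetric, though much milder, remark applies to the forward direction of (2): unlike in (1), you are not handed a derivation $\Psi$, only the universal path condition, so you must first \emph{construct} some maximal deterministic/co-nondeterministic tree from $\seqar\Gamma$ (which exists and is finite since these rules strictly decompose the sequent) and then paste the proofs supplied by the inductive hypothesis onto its leaves, checking that this yields $\pi \leq k+1$. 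With the permutation lemma stated and these two asymmetries handled explicitly, your induction goes through as intended.
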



%
%

\section{An `inverse' encoding from $\aMALL$ into $\qcpl$}
\label{sect:lqtrans}

In this section we will use the ideas of the previous section to give an explicit encoding from $\aMALL$ to $\qcpl$, i.e.\ a polynomial-time mapping from $\aMALL$-formulas to QBFs whose restriction to theorems has image in $\qcpl$.
Moreover, we will show that this encoding acts as an `inverse' to the one we gave in Sect.~\ref{sect:qltrans}, and finally identify natural fragments of $\aMALL$ complete for each level of $\pth$.
%

%
%
%
%
%

This section mostly follows Sect.~6 of \cite{Das18:ijcar}, except that we give significantly more proof details.

\subsection{Approximating (co-)nondeterministic complexity}

The nondeterministic and co-nondeterministic complexities $\sigma$ and $\pi$ we introduced in the previous section do not give us a bona fide encoding from $\aMALL$ to true QBFs since 
%
%
they are hard to compute. Instead we give an `overestimate' here that will suffice for the encodings we are after.
This overestimate will be parametrised by some enumeration of all formulas,\footnote{Strictly speaking, we really mean `formula occurrences' rather than just `formulas', but we will sweep this technicality under the carpet in the interest of a lighter exposition.} which will drive the possible choices during proof search.
However we will later show that the choice of this enumeration is irrelevant, meaning that the approximation can be flexibly calculated and is in fact polynomial-time computable.

Another option might have been, rather than taking the `least' formula in a sequent under some enumeration, that we may view the sequent as a list instead in a calculus with an exchange rule.
We avoided this in the interest of having a terminating proof system.

Throughout, we will identify enumerations with total orders in the natural way.


\begin{definition}
	[Approximating the complexity of a sequent]
	\label{dfn:ndcomp-condcomp}
	Let $\prec$ be a total order on all $\MALL(\mathsf w)$ formulas.
	We define the functions $\ndcomp_\prec$ and $\condcomp_\prec$ on sequents in Fig.~\ref{fig:approx-complex}.\footnote{In the conference version, \cite{Das18:ijcar}, there was an error in the base case, where $0$ was written instead of $1$.
	}
	\begin{figure}[t]
		\[
		\begin{array}{rcll}
		\ndcomp_\prec (\vec a) & \ \dfn\  & 1 & \\
		\ndcomp_\prec (\Gamma, A \vlpa B) & \ \dfn \ & \ndcomp_\prec (\Gamma, A , B) &  
		\\
		\ndcomp_\prec (\vec a , \vec P , P) & \ \dfn \ & \ndcomp_\prec (\vec a , \vec P \Downarrow P) &  \text{$P$ is $\prec$-least in $\vec P , P$} \\
		\ndcomp_\prec (\vec a , \vec P , \vec M ,M ) & \ \dfn \ & 1 + \condcomp_\prec (\vec a , \vec P , \vec M,M ) & 
		\\
		\noalign{\medskip}
		\condcomp_\prec (\vec a) & \ \dfn\  & 1 & \\
		\condcomp_\prec (\Gamma, A \vlpa B) & \ \dfn \ & \condcomp_\prec (\Gamma, A , B) &   
		\\
		\condcomp_\prec (\vec a , \vec P , P) & \ \dfn \ & 1 +\ndcomp_\prec (\vec a , \vec P , P) &  
		\\
		\condcomp_\prec (\vec a , \vec P , \vec M , M ) & \ \dfn \ &  \condcomp_\prec (\vec a , \vec P , \vec M  \Uparrow M) &  \text{$M$ is $\prec$-least in $\vec M , M$}
		\\
		\noalign{\bigskip}
		\ndcomp_\prec (\Gamma \Downarrow A \vlor B) & \ \dfn \ & \begin{cases}
		\ndcomp_\prec (\Gamma, A) & \ndcomp_\prec (A) \geq \ndcomp_\prec (B) \\
		\ndcomp_\prec (\Gamma, B ) & \text{otherwise}
		\end{cases} & \\
		\noalign{\smallskip}
		\ndcomp_\prec (\Gamma \Downarrow A \vlte B ) & \ \dfn \ & \begin{cases}
		\ndcomp_\prec (\Gamma, A) & \ndcomp_\prec (A) \geq \ndcomp_\prec (B) \\
		\ndcomp_\prec (\Gamma, B ) & \text{otherwise}
		\end{cases} & \\
		\noalign{\smallskip}
		\ndcomp_\prec (\Gamma \Downarrow X) & \ \dfn \ & \ndcomp_\prec (\Gamma, X) & \text{$X$ is $a$ or $N$} \\
		\noalign{\bigskip}
		\condcomp_\prec (\Gamma \Uparrow A \vlan B) &\ \dfn\ & \begin{cases}
		\condcomp_\prec (\Gamma, A) & \condcomp_\prec (A) \geq \condcomp_\prec (B) \\
		\condcomp_\prec (\Gamma, B ) & \text{otherwise}
		\end{cases} & \\
		\noalign{\smallskip}
		\condcomp_\prec (\Gamma \Uparrow X) & \ \dfn\ & \condcomp_\prec (\Gamma, X) & \text{$X$ is $O$ or $P$}
		\end{array}
		\]
		\caption{Approximating (co-)nondeterminstic complexities.}
		\label{fig:approx-complex}
	\end{figure}	
\end{definition}





\begin{proposition}
	[Confluence]
	\label{prop:confluence}
	For any two total orders $\prec$ and $\prec'$ on $\MALL(\mathsf w)$ formulas, we have that $\ndcomp_\prec (\Gamma) = \ndcomp_{\prec'} (\Gamma)$ and $\condcomp_\prec (\Gamma) = \condcomp_{\prec'} (\Gamma)$.
	
\end{proposition}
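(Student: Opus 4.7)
The plan is to prove both equalities simultaneously by induction on a wellfounded measure of the sequent, namely the lexicographic pair $(\mu(\Gamma), \delta(\Gamma))$ where $\mu(\Gamma)$ is the total number of connective occurrences in $\Gamma$ and $\delta(\Gamma)$ is a small secondary penalty that takes a larger value when no delimiter is present and decreases as a delimited focus is resolved. Under this measure, each defining clause of Figure~\ref{fig:approx-complex} rewrites to strictly smaller sequents: the connective count strictly drops whenever a compound focus is decomposed, while the clauses $\ndcomp_\prec(\vec a, \vec P, P) = \ndcomp_\prec(\vec a, \vec P \Downarrow P)$, $\ndcomp_\prec(\Gamma \Downarrow X) = \ndcomp_\prec(\Gamma, X)$ and their duals preserve $\mu$ but strictly decrease $\delta$.

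For every defining clause that does not involve a $\prec$-dependent choice --- the $\vlpa$ clause, the $\vlor/\vlte/\vlan$ clauses under a delimiter, the release-into-context clauses, and the phase switches --- the induction hypothesis applied to the strictly smaller recursive calls gives $\prec$-independence directly. The sub-comparisons $\ndcomp_\prec(A) \geq \ndcomp_\prec(B)$ and $\condcomp_\prec(A) \geq \condcomp_\prec(B)$ appearing in the $\vlor$, $\vlte$, and $\vlan$ clauses are themselves $\prec$-independent by the induction hypothesis applied to the strictly smaller one-formula cedents, so these clauses select the same branch regardless of the ordering.

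The substantive obstacle is the decide clause $\ndcomp_\prec(\vec a, \vec P, P)$ and its dual for $\condcomp_\prec$, where the $\prec$-least choice genuinely depends on the ordering. Here I would prove a local confluence lemma stating that for any two distinct positive formulas $P_1, P_2$ in the same cedent $\vec a, \vec P'$,
\[
\ndcomp_\prec(\vec a, \vec P', P_2 \Downarrow P_1) \;=\; \ndcomp_\prec(\vec a, \vec P', P_1 \Downarrow P_2),
\]
and similarly for two distinct $M$-formulas in the $\condcomp_\prec$ decide clause. Unfolding the focus $P_1$ on the left via the appropriate $\vlor/\vlte/$release clause produces a cedent of the form $\vec a, \vec P', P_2, A_1$ for some strict subformula $A_1$ of $P_1$, where $A_1$ is the $\prec$-independent choice dictated by the sub-comparisons. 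By the induction hypothesis on this strictly smaller cedent its value is $\prec$-independent, so it may be computed under any favourable ordering; in particular one that next decides on $P_2$, further reducing to $\vec a, \vec P', A_1, B_2$ for some strict subformula $B_2$ of $P_2$. A symmetric calculation on the right-hand side reaches the same multiset, so the induction hypothesis equates the two sides; the analogous argument, passing through an intermediate $\condcomp_\prec$ layer if $A_1$ or $B_2$ is negative, handles the mixed-phase situations by the simultaneous induction.

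Combining the critical-pair analysis at the decide step with the $\prec$-independence of all other clauses --- and appealing to Newman's lemma for the resulting terminating, locally confluent rewriting system --- gives both equalities, completing the induction.
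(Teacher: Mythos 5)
Your overall strategy coincides with the paper's own: the paper explicitly describes its proof as ``essentially a confluence argument for terminating relations'', proceeding by induction on the number of connectives in $\Gamma$ and resolving the one genuine critical pair --- two orders disagreeing on the least formula at a decide step --- by unfolding one bi-pole, invoking the inductive hypothesis on the strictly smaller resulting cedent to re-order, unfolding the other bi-pole, and observing that the two phase-switch increments (each $0$ or $2$, written $\delta_0,\delta_1$ in the paper) commute by commutativity of addition. Your local confluence lemma at the decide clause, including the mixed-phase case where a residual is negative and forces a detour through $\condcomp$, is exactly this computation, and your point that the comparisons $\ndcomp_\prec(A)\geq\ndcomp_\prec(B)$ in the branching clauses are order-independent by the inductive hypothesis on single formulas is likewise needed (and implicit) in the paper. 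The appeal to Newman's lemma is decorative: your induction already does the work, and you never define the rewrite relation, whose states would in any case have to carry the additive increments for confluence to even express equality of the computed values.

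The genuine gap is your termination measure. The lexicographic pair $(\mu,\delta)$, with $\delta$ depending only on delimiter presence, does not strictly decrease along all clauses. First, the phase-switch clauses $\ndcomp_\prec(\vec a,\vec P,\vec M,M)=1+\condcomp_\prec(\vec a,\vec P,\vec M,M)$ and $\condcomp_\prec(\vec a,\vec P,P)=1+\ndcomp_\prec(\vec a,\vec P,P)$ recurse on the \emph{identical} undelimited cedent, so both $\mu$ and $\delta$ are unchanged; your claim that every non-choice clause has strictly smaller recursive calls fails there, and your simultaneous induction cannot legitimately cross these clauses. Second, your specification of $\delta$ is self-contradictory: the decide clause $\ndcomp_\prec(\vec a,\vec P,P)=\ndcomp_\prec(\vec a,\vec P\Downarrow P)$ needs $\delta$ larger on undelimited sequents, while the release clauses $\ndcomp_\prec(\Gamma\Downarrow X)=\ndcomp_\prec(\Gamma,X)$ and $\condcomp_\prec(\Gamma\Uparrow X)=\condcomp_\prec(\Gamma,X)$, which you also list as $\delta$-decreasing, remove the delimiter without consuming a connective and so need the opposite inequality. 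The repair is routine but necessary: either let $\delta$ record the function being evaluated and the sequent's shape, or induct on $\mu$ alone and unfold the $\mu$-preserving steps by hand --- each phase switch is immediately followed by a (co-)decide, and each (co-)decide by a connective-consuming clause, so $\mu$ drops within a bounded number of unfoldings (this is what the paper's induction on connective count implicitly does). Finally, your phrase ``reaches the same multiset'' silently uses that the positive residuals produced by a negative residual's co-nondeterministic phase are themselves order-independent; this does follow from the inductive hypothesis, since the branch selections are fixed by order-independent comparisons, but it should be stated.
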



\noindent
To prove this, we give essentially a confluence argument for terminating relations, but we avoid using a formal rewriting argument for self-containedness.

\begin{proof}[of Prop.~\ref{prop:confluence}]
	%
	We proceed by induction on the number of connectives in $\Gamma$.
	The base case, when $\Gamma$ consists of only atomic formulas, is trivial, so we consider the inductive steps.
	When invoking the inductive hypothesis, we may freely suppress the subscripts $\prec$ or $\prec'$. 
	
	Suppose that, at some point along the definition of the approximations, $\prec $ and $\prec'$ disagree on what the least formula is. Namely, the $\prec$-least formula is $P_0$ and the $\prec'$-least formula is $P_1$.
	In this case we have the following situation:
	\[
	\begin{array}{rcll}
	\ndcomp_\prec (\vec a, \vec P, P_0, P_1) & = & \ndcomp_\prec (\vec a, \vec P , P_0 \Downarrow P_1) & \\
	& \vdots & &  \\
	& = & \delta_1 + \ndcomp_\prec (\vec a, \vec P, P_0, \vec P_1)  &  \\
	& = & \delta_1 + \ndcomp (\vec a, \vec P,  \vec P_1 \Downarrow P_0) & \text{by inductive hypothesis} \\
	& \vdots & &  \\
	& = &  \delta_0 + \delta_1 + \ndcomp (\vec a, \vec P, \vec P_1, \vec P_0) & 
	\end{array}
	\]
	where each $\delta_i$ is either $2$ or $0 $ depending on whether a co-nondeterministic phase is entered or not during the bi-pole induced by $P_i$.
	We will have a similar derivation for $\prec'$, with only $\delta_0$ and $\delta_1$ swapped, whence we conclude by commutativity and associativity of addition.
	
	The case where a $M$-formula is chosen in the definition of $\condcomp$ is similar.

\end{proof}


\noindent
From now on, we may suppress the subscript $\prec$ for the notions $\ndcomp$ and $\condcomp$.

\begin{corollary}
	[Overapproximation]
	\label{cor:over-estimate}
	$\sigma(\Gamma)\leq \ndcomp (\Gamma)$ and $\pi(\Gamma)\leq \condcomp(\Gamma)$.
\end{corollary}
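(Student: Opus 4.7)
The plan is to prove both inequalities by a single structural induction on the cedent $\Gamma$, simultaneously constructing, whenever $\Gamma$ is provable, a bi-focussed proof $\Phi$ of $\seqar \Gamma$ whose alternation counts are bounded by $\ndcomp(\Gamma)$ and $\condcomp(\Gamma)$ respectively. The construction will mirror the recursive definition in Fig.~\ref{fig:approx-complex} directly: each clause corresponds, bottom-up, to the application of one inference rule in $\focMALL(\mathsf w)$, and the well-founded measure is any standard lexicographic combination (e.g., number of connective occurrences, then presence of a focus arrow) that decreases at every recursive step.

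The easy cases are: the atomic bases, discharged by $\id$ or $\wkid$ with no decide step at all, hence both $\sigma$ and $\pi$ bounded by $0 \leq 1$; the deterministic connectives ($\vlpa$, $\bot$, $1$, $\top$), handled by the corresponding rules without changing alternation counts; and the transition clauses ``$1+\condcomp(\cdots)$'' and ``$1+\ndcomp(\cdots)$'', which capture precisely the one new $\dec$-to-$\codec$ (or vice versa) alternation introduced. In the $\dec$ and $\codec$ clauses, where the rule focusses on the $\prec$-least (co-)positive formula, the independence of this specific choice follows from Prop.~\ref{prop:confluence}, while existence of a matching bi-focussed continuation is guaranteed by Thm.~\ref{thm:bi-foc-complete}, which lets us permute the chosen formula to the bottom of the corresponding (co-)nondeterministic bipole.

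Inside a focussed or co-focussed phase, the $\vlor$, $\vlte$, and $\vlan$ clauses each commit to a single side based on the relative complexity of $A$ vs $B$. For $\vlan$, which has two premisses, the definition upper-bounds the maximum over both by picking the heavier one; I would first establish a short monotonicity lemma stating that both $\ndcomp$ and $\condcomp$ are monotone under subformula replacement by a structurally larger formula, from which the approximation of the max by a single side is immediate.

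The main obstacle will be in the focussed $\vlor$ and $\vlte$ cases: here the $\focMALL$ rule commits to a single branch, but the approximation always selects the ``heavier'' branch, which need not be the branch used in any actual proof. To bridge this gap I would use the same monotonicity lemma to show that $\ndcomp(\Gamma, A) \geq \ndcomp(\Gamma, B)$ whenever $\ndcomp(A) \geq \ndcomp(B)$, so that the value selected by the definition dominates the alternation cost of whichever branch the proof actually commits to. With this lemma in hand, together with the release rule to realise the transition from the focussed recursion on $\Gamma \Downarrow A \vlor B$ to the unfocussed $\Gamma, A$ (and symmetrically for $\vlan$ via $\corel$), the induction closes uniformly.
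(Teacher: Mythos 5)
Your plan inverts the direction of the paper's argument, and the lemma you introduce to make your version work is false. The corollary only needs the approximation to dominate the cost of \emph{some} proof, and the paper's proof goes top-down: take a proof $\Phi$ of $\seqar\Gamma$ realising $\sigma(\Gamma)$ (resp.\ $\pi(\Gamma)$), let $\Phi$ itself induce the order $\prec$ of (co-)decisions, evaluate $\ndcomp_\prec$/$\condcomp_\prec$ along $\Phi$, and conclude by Prop.~\ref{prop:confluence} that this order-specific value is the canonical one. Your bottom-up plan --- ``mirror the recursion of Fig.~\ref{fig:approx-complex}, constructing a bi-focussed proof whose rules match its clauses'' --- cannot work as stated, because the recursion does not track provability: at $\Gamma \Downarrow A \vlor B$ (and at $\vlte$, $\vlan$) it unconditionally descends into the $\ndcomp$-heavier side, which may be unprovable even when the conclusion is provable, so no proof can be built along the recursion's trace. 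You notice this and retreat to a domination argument, which is the right shape, but you also never address the most extreme source of over-approximation, which the paper flags explicitly: in the $\vlte$ clause the recursion keeps the \emph{entire} context $\Gamma$ in the selected branch, whereas the rule splits $\Gamma$ between the premisses; bridging that requires a (true) monotonicity of $\ndcomp,\condcomp$ under adding formulas to a sequent, which is a different statement from the one you formulate.

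The bridge you do formulate --- $\ndcomp(A) \geq \ndcomp(B)$ implies $\ndcomp(\Gamma,A) \geq \ndcomp(\Gamma,B)$ --- is false, because the standalone value of a formula does not determine its in-context cost: what matters is how its root polarity aligns with the phase structure of the ambient sequent. Take $\Gamma = \{w \vlan v\}$, $A = (x \vlor x')\vlan y$ and $B = (p \vlan q)\vlor z$. Then $\ndcomp(A) = 1 + \condcomp(x \vlor x') = 3$ while $\ndcomp(B) = \ndcomp(p\vlan q) = 2$, so $\ndcomp(A) > \ndcomp(B)$; yet $\ndcomp(\Gamma,A) = 3$, since the negative roots of $w\vlan v$ and $A$ are absorbed into a single co-nondeterministic phase (the recursion computes $1 + \condcomp(w\vlan v, A)$ with $\condcomp(w\vlan v,A)=2$), whereas $\ndcomp(\Gamma,B) = 4$, because the positive root of $B$ forces an extra phase boundary after the phase for $w \vlan v$ (one computes $\condcomp(B,w) = 1 + \ndcomp(B,w) = 3$, hence $\ndcomp(\Gamma,B) = 1+3$). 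So the side selected by the definition need not dominate the other side in context, and your claim that ``the value selected by the definition dominates the alternation cost of whichever branch the proof actually commits to'' does not follow; the same polarity phenomenon refutes any purely size-based reading of your ``monotone under replacement by a structurally larger formula'' lemma. The sound route is the paper's: let the minimal proof dictate both the traversal order and the branch data, and compare the recursion against that proof directly, using confluence to remove the dependence on the induced order.
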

\begin{proof}
	A proof with the minimal number of alternations between nondeterministic and co-nondeterministic phases will induce a strategy $\prec$ on which we may evaluate $\ndcomp$ and $\condcomp$.
	These will be bounded below by their actual $\sigma $ and $\pi$ values.
\end{proof}
Notice that the over-estimation for the $\vlte$ case is particularly extreme: in the worst case we have that the entire context is copied to one branch. 
In fact we could optimise this somewhat, by only considering `plausible' splittings, but it will not be necessary for our purposes.




\begin{corollary}
	[Feasibility]
	\label{cor:approx-ptime-computable}
	$\ndcomp$ and $\condcomp$ are polynomial-time computable.
\end{corollary}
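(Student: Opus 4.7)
By Proposition~\ref{prop:confluence}, fix any polynomial-time computable total order $\prec$ (say, a lexicographic order extending a numbering of formulas); the recursion of Fig.~\ref{fig:approx-complex} then becomes a deterministic algorithm. The apparent obstacle is that the $\vlor$, $\vlte$ and $\vlan$ clauses each issue two ``comparison'' subcalls $\ndcomp(A)$, $\ndcomp(B)$ (resp.~$\condcomp(A)$, $\condcomp(B)$) on the \emph{singleton} sequents $\{A\}$ and $\{B\}$, plus one further recursive call on the main cedent, so that a naive implementation gives branching factor three and hence exponential running time.

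The remedy is memoisation restricted to singleton sequents. Observe that every comparison subcall encountered has the shape $\ndcomp(\{C\})$ or $\condcomp(\{C\})$ with $C$ a proper subformula of the formula being decomposed. The algorithm thus first tabulates $\ndcomp(\{C\})$ and $\condcomp(\{C\})$ for each of the $O(n)$ subformulas $C$ of the input, processed in order of increasing size; when filling the entry for $C$, every comparison subcall lies on a strictly smaller subformula and is a constant-time table lookup. Second, it runs the same recursion on the input cedent, again using the completed tables.

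Both phases reduce to the following claim: with singleton comparison values precomputed, the main descent of the recursion from any cedent $\Gamma$ runs in time polynomial in $|\Gamma|$. The descent now has unit branching factor, and each step either (i)~strictly decreases the total number of connectives in the current cedent (cases $\vlpa$, $\vlor$, $\vlte$, $\vlan$), or (ii)~adjusts the focus state or transitions between $\ndcomp$ and $\condcomp$. Since no new formulas are ever introduced---every cedent arising is a submultiset of subformulas of the input---the number of type~(i) steps is $O(n)$; and each $\dec$ or $\codec$ is followed within $O(1)$ steps by either a type~(i) step or a release, so the type~(ii) steps are linearly bounded by the type~(i) ones. This yields a descent of depth $O(n)$ and hence an overall polynomial-time algorithm.

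The principal obstacle is precisely the three-way branching in the $\vlor$, $\vlte$, $\vlan$ clauses; singleton memoisation collapses each of these branchings into a single constant-time lookup, leaving a routine polynomial bound on the rest.
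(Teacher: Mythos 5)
Your proposal is correct and takes essentially the same route as the paper: fix any polynomial-time computable total order $\prec$ and appeal to the confluence result (Prop.~\ref{prop:confluence}) to conclude that the order-independent functions $\ndcomp$ and $\condcomp$ are computed. The paper's own proof dismisses the remaining computation as ``clearly'' polynomial-time, whereas you additionally justify that step by memoising the singleton comparison values $\ndcomp(\{C\})$, $\condcomp(\{C\})$ over subformulas $C$ of the input --- a sound and in fact welcome elaboration, since a naive evaluation of the three-way branching clauses for $\vlor$, $\vlte$ and $\vlan$ would indeed be exponential.
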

\begin{proof}
	Clearly, $\ndcomp_\prec$ and $\condcomp_\prec$ are polynomial-time computable for any polynomial-time enumeration $\prec$.
	So we may simply pick any polynomial-time enumeration of the formulas and appeal to Prop.~\ref{prop:confluence}. 
\end{proof}

\subsection{Tightness of approximations in the image of $\qltrans{\cdot}$}
Since $\qltrans{\phi}$ is always a relatively `balanced' formula, we have that the overestimation just defined is in fact tight in the image of $\qltrans{\cdot}$ from $\aMALL$:
\begin{proposition}
	[Tightness]
	\label{prop:tight-approx}
	For $k \geq 1$ we have the following:
	\begin{enumerate}
		\item If $\phi \in \Sigma^q_k \setminus \Pi^q_k$ then $\ndcomp (\qltrans{\phi}) = \sigma (\qltrans \phi) = k$ and $\condcomp (\qltrans \phi) = \pi(\qltrans \phi) = 1+k$.
		\item If $\psi \in \Pi^q_k \setminus \Sigma^q_{k}$ then $\condcomp (\qltrans{\psi}) = \pi (\qltrans \phi) = k$
		and $\ndcomp (\qltrans \psi) = \pi(\qltrans \psi) = 1+k$.
	\end{enumerate}

\end{proposition}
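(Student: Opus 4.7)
The plan is to prove both statements simultaneously by induction on $k \geq 1$, exploiting the symmetric structure of the $\exists$ and $\forall$ cases. For the base case $k=1$, a direct computation suffices: $\qltrans{\exists x . \phi_0}$ unfolds (via the top-level $\vlpa$) to the cedent $\postrans{\phi_0}\vlte \lneg y, (x^n \limp y) \vlor (\lneg x^n \limp y)$, which is purely $P$-positive (no $M$-formulas); walking through the $\ndcomp$ recurrences via $\vlor$, $\vlpa$-release and $\vlte$ never triggers the $\vec a, \vec P, \vec M, M$ clause that adds $1$, so $\ndcomp = 1$; and $\condcomp = 1 + \ndcomp = 2$ by the $\condcomp$ clause on pure-positive cedents. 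The $\forall$ base case is entirely symmetric, with the dual analysis on $\vlan$-formulas.

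For the inductive step, the core is a pair of propagation identities for the $\exists$ case, (A) $\ndcomp(\qltrans{\exists x . \phi'}) = \ndcomp(\qltrans{\phi'})$ and (B) $\condcomp(\qltrans{\exists x . \phi'}) = 1 + \ndcomp(\qltrans{\exists x . \phi'})$, with evident $\forall$-duals. Identity (B) is immediate: after the outer $\vlpa$-reduction the sequent is purely $P$-positive. Identity (A) is obtained by tracing the $\ndcomp$ computation using confluence (Prop.~\ref{prop:confluence}) to pick a convenient $\prec$: first decide on the $\vlor$-formula, reducing (via $\vlor$, release, $\vlpa$) to a cedent with atoms and a $\vlte$-formula; then decide on the $\vlte$, whose max reduces to $\ndcomp(\text{atoms}, \qltrans{\phi'})$ versus a trivial atomic branch. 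An \emph{atom-stability} sub-lemma --- that extra atoms in a non-atomic cedent never change $\ndcomp$ or $\condcomp$, proved by induction on the structure of the non-atomic formulas since adding atoms never alters which recurrence case applies --- then yields (A). Combined with (A), (B) and their $\forall$-duals, the inductive step propagates the claimed $\ndcomp$ and $\condcomp$ values up one quantifier level, completing the induction for the approximations.

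The upper bounds $\sigma \leq \ndcomp$ and $\pi \leq \condcomp$ come for free from Corollary~\ref{cor:over-estimate}, and the derivations of Figs.~\ref{fig:ql-der-exists} and \ref{fig:ql-der-forall}, by direct inspection of their $\dec/\codec$ phase structure, realise $\sigma$ and $\pi$ at the claimed values. The matching \emph{lower bounds} $\sigma \geq k$ and $\pi \geq 1 + k$ constitute the main obstacle, since they must rule out any cleverer proof. My approach is an induction on $k$ showing that, for $\phi \in \Sigma^q_k \setminus \Pi^q_k$, every bi-focussed $\afocMALL$ proof of $\qltrans{\phi}$ contains along some branch at least $k$ alternating $\dec/\codec$ phases: informally, each quantifier alternation in $\phi$ is realised as a deeply nested $\vlor$- or $\vlan$-layer in $\qltrans{\phi}$ that cannot be dispatched without a corresponding $\dec$ or $\codec$ step during bi-focussed proof search. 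Completeness of bi-focussing (Thm.~\ref{thm:bi-foc-complete}) transfers this to arbitrary proofs, closing the equalities.
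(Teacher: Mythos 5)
Your computation of the approximations coincides with the paper's own argument up to packaging: the paper proves, by induction on the number of quantifiers (rather than on $k$), the strengthened statements $\ndcomp(\qltrans{\phi}, \vec a) = k$ and $\condcomp(\qltrans{\psi}, \vec a) = k$ for an arbitrary multiset $\vec a$ of atoms --- which is exactly your atom-stability sub-lemma folded into the induction hypothesis --- and it likewise invokes confluence (Prop.~\ref{prop:confluence}) to trace the recurrences in a convenient order; your identities (A) and (B) are precisely the two displayed equation chains in the paper's proof, including the observation that the $\vlte$-clause sends the whole context to the $\qltrans{\phi'}$-branch since $\ndcomp(\lneg y) = 1$. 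Your handling of the upper bounds via Cor.~\ref{cor:over-estimate} and the derivations of Figs.~\ref{fig:ql-der-exists} and \ref{fig:ql-der-forall} also matches the paper.

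The one step that does not work as stated is your transfer of the lower bounds $\sigma \geq k$ and $\pi \geq 1+k$ from bi-focussed proofs to arbitrary ones. Completeness of bi-focussing (Thm.~\ref{thm:bi-foc-complete}) goes the wrong way for this purpose: it produces \emph{some} bi-focussed proof from an arbitrary proof, but says nothing about its alternation count, whereas $\sigma$ and $\pi$ are defined as minima over \emph{all} $\focMALL(\mathsf w)$ proofs, including multi-focussed proofs that are not bi-focussed; a priori one of these could have fewer $\dec$/$\codec$ alternations than every bi-focussed proof. What you need is the alternation-preserving form of focalisation, namely the robustness remark following Dfn.~\ref{dfn:foc-hier}: passing between multi-focussed, (co-)focussed and bi-focussed presentations may change the number of $\dec$ and $\codec$ steps, but not the number of alternations between nondeterministic and co-nondeterministic phases. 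With that citation repaired, your forced-phase analysis is exactly how the paper treats the lower bound --- it defers to the right-to-left analysis in the proof of Lemma~\ref{lem:qltrans-correct}, where bi-focussed proof search on $\qltrans{\phi}$ is shown to be forced into one $\dec$ or $\codec$ phase per quantifier, alternating as the quantifiers do --- and your sketch of that analysis is at the same level of detail as the paper's.
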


\noindent
Intuitively, the tightness of the approximation follows from the following two properties of the derivations in Figs.~\ref{fig:ql-der-exists} and \ref{fig:ql-der-forall}:
\begin{itemize}
	\item There is only one non-atomic formula per sequent, so the $\vlte$-overapproximation is not significant.
	\item Weakening is only required on atomic formulas, so initial sequents need not be further broken down in the definitions of $\ndcomp$ and $\condcomp$.
\end{itemize} 

\begin{proof}[of Prop.~\ref{prop:tight-approx}]
	We have that $\sigma (\phi) = k $ and $\pi(\psi)= k$ already from the proof of Lemma~\ref{lem:qltrans-correct}, so it remains to show that the approximations are tight.
	For this we show by induction on the number of quantifiers in $\phi$ or $\psi$ that, more generally:
	\begin{itemize}
		\item $\ndcomp (\qltrans \phi, \vec a) = k$ for any sequence $\vec a$ of atomic formulas.
		\item $\condcomp (\qltrans \psi, \vec a) = k$ for any sequence $\vec a$ of atomic formulas.
	\end{itemize}
	
	When $\phi $ and $\psi $ are quantifier-free, notice that $\ndcomp (\phi^+, \vec a) = 1 = \condcomp (\phi^-, \vec a)$, since $\phi^+$ has only positive connectives and $\phi^-$ has only negative connectives.

	If $\phi$ is $\exists x. \phi'$ then $\ndcomp (\qltrans {\phi'},\vec b) = k $, for any $\vec b$, by the inductive hypothesis so:
	\[
	\begin{array}{rcll}
	\ndcomp (\qltrans \phi , \vec a) & = & \ndcomp ((\qltrans{\phi'} \limp y ) \limp ( (x^n \limp y ) \vlor (\cnot x^n \limp y ) ) ,\vec a)  & \text{by definition of $\qltrans{\cdot}$} \\
	& = & \ndcomp (\qltrans{\phi'} \vlte \cnot y ,  (\cnot x^n \vlpa y) \vlor (x^n \vlpa  y) ,\vec a) & \text{by definition of $\limp$} \\
	& = & \ndcomp (\qltrans{\phi'} \vlte \cnot y , x^n \vlpa y ,\vec a) & \text{by Prop.~\ref{prop:confluence}} \\
	& = & \ndcomp (\qltrans{\phi'} \vlte y , x^n , y , \vec a) & \\
	& = & \ndcomp (\qltrans{\phi'}, x^n, y, \vec a) & \text{since $\ndcomp (y)  = 1$} \\
	& = & k & \text{by inductive hypothesis}
	\end{array}
	\]
	We also have that $\condcomp (\qltrans{\phi}, \vec a) = 1+ \ndcomp (\qltrans{\phi'}, x^n, y, \vec a) = 1+k$, by a similar analysis.
	
	If $\psi $ is $\forall x . \psi'$ then $\condcomp (\qltrans{\psi'}, \vec b) = k$, for any $\vec b$, by the inductive hypothesis so:
	\[
	\begin{array}{rcll}
	\condcomp (\qltrans \psi , \vec a) & = & \condcomp ((\qltrans{\psi'} \plimp y ) \limp ( (x^n \limp y ) \vlan (\cnot x^n \limp y ) ) ,\vec a)  & \text{by definition of $\qltrans{\cdot}$} \\
	& = & \condcomp (\qltrans{\psi'} \vlan \cnot y ,  (\cnot x^n \vlpa y) \vlan (x^n \vlpa  y) ,\vec a) & \text{by definition of $\limp, \plimp$} \\
	& = & \condcomp (\qltrans{\psi'} \vlan \cnot y , x^n \vlpa y ,\vec a) & \text{by Prop.~\ref{prop:confluence}} \\
	& = & \condcomp (\qltrans{\psi'} \vlan y , x^n , y , \vec a) & \\
	& = & \condcomp (\qltrans{\psi'}, x^n, y, \vec a) & \text{since $\condcomp (y) = 1$} \\
	& = & k & \text{by inductive hypothesis}
	\end{array}
	\]
	We also have that $\ndcomp (\qltrans{\psi}, \vec a) = 1+\condcomp (\qltrans{\psi'}, x^n, y, \vec a) =  1 +k$, by a similar analysis. 
\end{proof}


\subsection{An encoding from $\aMALL$ to QBFs and main results}
From Cor.~\ref{cor:prov-pred-qbfs}, let us henceforth fix appropriate QBFs $\Prov{\Sigma^f_k}_n$ and $\Prov{\Pi^f_k}_n$, for $k\geq 1$, computing $\Sigma^f_k$-provability and $\Pi^f_k$-provability in $\afocMALL$, respectively, for formulas of size $n$.
We are now ready to define our `inverse' encoding of $\qltrans{\cdot}$:

\begin{definition}
	[$\aMALL$ to $\qcpl$]
	For a $\aMALL$ formula $A$, we define:
	\[
	\lqtrans{A}
	\ \dfn  \ 
	\begin{cases}
	\Prov{\Sigma^f_{k}}_{|A|}(A) & \text{if }k=\ndcomp (A)\leq \condcomp (A) \\
	\Prov{\Pi^f_{k}}_{|A|}(A) & \text{if }k=\condcomp(A) < \ndcomp(A)
	\end{cases}
	\]
\end{definition}

\noindent
Finally, we are able to present our main result:

\begin{theorem}
	\label{thm:main}
	We have the following:
	\begin{enumerate}
		\item\label{item:qltrans-correct} $\qltrans\cdot$ is a polynomial-time encoding from $\qcpl$ to $\aMALL$.
		\item\label{item:lqtrans-correct} $\lqtrans{\cdot}$ is a polynomial-time encoding from $\aMALL$ to $\qcpl$.
		\item\label{item:inverse-encodings} The composition $ \lqtrans{\cdot} \circ \qltrans{\cdot} : \qcpl \to \qcpl$ preserves quantifier complexity, i.e., for $k \geq 1$, it maps true $\Sigma^q_k$ ($\Pi^q_k$) sentences to true $\Sigma^q_k$ (resp.~$\Pi^q_k$) sentences.
	\end{enumerate}
	
\end{theorem}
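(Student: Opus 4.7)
The three parts will be assembled from the constructions and lemmas of the preceding sections, while carefully tracking formula sizes and levels of the quantifier/focussing hierarchies.

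Part~\ref{item:qltrans-correct} is essentially immediate. Correctness, i.e.\ that true closed QBFs are mapped to provable $\aMALL$ formulas and false ones to non-provable ones, is exactly Theorem~\ref{thm:qltrans-correct}. For the polynomial bound I would inspect the inductive definition of $\qltrans{\cdot}$: each quantifier layer adds only $O(n)$ symbols to $\qltrans{\phi'}$ (from the literal copies $x_k^n, \lneg x_k^n$ and a bounded number of connectives), where $n = |\phi_0|$. Hence $|\qltrans{\phi}| = O(n k) = O(|\phi|^2)$ for $\phi$ with $k$ quantifiers, and $\qltrans{\phi}$ is plainly computable in time polynomial in $|\phi|$.

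Part~\ref{item:lqtrans-correct}: feasibility of $\lqtrans{\cdot}$ follows by combining Cor.~\ref{cor:approx-ptime-computable} (which says $\ndcomp$ and $\condcomp$ are polynomial-time computable) with Cor.~\ref{cor:prov-pred-qbfs} (which says the predicates $\Prov{\Sigma^f_k}_n$ and $\Prov{\Pi^f_k}_n$ are constructible in polynomial time). For correctness, if $A$ is provable in $\aMALL$, then by Prop.~\ref{prop:complexity-foc-hier} it is both $\Sigma^f_{\sigma(A)}$-provable and $\Pi^f_{\pi(A)}$-provable; by Cor.~\ref{cor:over-estimate} and monotonicity of the focussing hierarchy, $A$ is therefore also $\Sigma^f_{\ndcomp(A)}$-provable and $\Pi^f_{\condcomp(A)}$-provable. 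Either way, the $\Prov{}$ formula selected by the two cases of the definition of $\lqtrans{A}$ holds on $A$. Conversely, the truth of either $\Prov{\Sigma^f_k}_{|A|}(A)$ or $\Prov{\Pi^f_k}_{|A|}(A)$ directly exhibits a proof of $A$.

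Part~\ref{item:inverse-encodings} is the heart of the theorem and is driven by the tightness statement Prop.~\ref{prop:tight-approx}. Given a true $\phi \in \Sigma^q_k$, I first treat the generic case $\phi \in \Sigma^q_k \setminus \Pi^q_k$: Prop.~\ref{prop:tight-approx} yields $\ndcomp(\qltrans{\phi}) = k$ and $\condcomp(\qltrans{\phi}) = k+1$, so $\lqtrans{\qltrans{\phi}}$ falls into the first case of the definition and equals $\Prov{\Sigma^f_k}_{|\qltrans{\phi}|}(\qltrans{\phi})$, which is a $\Sigma^q_k$-formula by Cor.~\ref{cor:prov-pred-qbfs}. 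Truth is preserved by the composition via parts~\ref{item:qltrans-correct} and \ref{item:lqtrans-correct}. The symmetric argument handles the $\Pi^q_k$ case. The degenerate case $\phi \in \Sigma^q_k \cap \Pi^q_k$ reduces to some strictly lower level $j < k$, where the above analysis places $\lqtrans{\qltrans{\phi}}$ in $\Sigma^q_j$ or $\Pi^q_j$, both included in $\Sigma^q_k$. The main conceptual obstacle here would have been that the overapproximations $\ndcomp, \condcomp$ might push the image to a strictly higher hierarchy level; Prop.~\ref{prop:tight-approx} precisely forestalls this, exploiting the balanced structure of the derivations of Figs.~\ref{fig:ql-der-exists}--\ref{fig:ql-der-forall}, whose sequents contain only one non-atomic formula apart from atomic weakening material.
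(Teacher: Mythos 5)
Your proposal is correct and follows essentially the same route as the paper: part~\ref{item:qltrans-correct} from Thm.~\ref{thm:qltrans-correct}, part~\ref{item:lqtrans-correct} from Cor.~\ref{cor:prov-pred-qbfs} together with Prop.~\ref{prop:complexity-foc-hier} and Cors.~\ref{cor:over-estimate} and \ref{cor:approx-ptime-computable}, and part~\ref{item:inverse-encodings} from the tightness of $\ndcomp$ and $\condcomp$ in the image of $\qltrans{\cdot}$ (Prop.~\ref{prop:tight-approx}). The extra details you supply — the size bound on $\qltrans{\phi}$ and the explicit treatment of the degenerate case $\phi \in \Sigma^q_k \cap \Pi^q_k$ (which Prop.~\ref{prop:tight-approx} does not directly cover) — are correct elaborations of steps the paper's terse proof leaves implicit.
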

\begin{proof}
	We already proved \ref{item:qltrans-correct} in Thm.~\ref{thm:qltrans-correct}.
	\ref{item:lqtrans-correct} follows from the definitions of $\Prov{\Sigma^f_k}$ and $\Prov{\Pi^f_k}$ (cf.~Cor.~\ref{cor:prov-pred-qbfs}), under Prop.~\ref{prop:complexity-foc-hier} and Cors.~\ref{cor:over-estimate} and \ref{cor:approx-ptime-computable}.
	Finally \ref{item:inverse-encodings} then follows by tightness of the approximations $\ndcomp$, $\condcomp$ in the image of $\qltrans{\cdot}$, Prop.~\ref{prop:tight-approx}.
\end{proof}

\noindent
Consequently, we may identify polynomial-time recognisable subsets of $\aMALL$-formulas whose theorems are complete for levels of the polynomial hierarchy:

\begin{corollary}
	We have the following, for $k\geq 1$:
	\begin{enumerate}
		\item $\{ A :  \text{$\ndcomp (A) \leq k$ and $\aMALL$ proves $A$} \}$ is $\Sigma^p_k$-complete.
		\item $\{ A : \text{$\condcomp (A) \leq k $ and $\aMALL$ proves $A$} \}$ is $\Pi^p_k$-complete.
	\end{enumerate}
\end{corollary}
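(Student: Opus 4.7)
The plan is to combine the polynomial-time encodings $\qltrans{\cdot}$ and $\lqtrans{\cdot}$ of Thm.~\ref{thm:main} with the feasibility of the approximations $\ndcomp$, $\condcomp$ (Cor.~\ref{cor:approx-ptime-computable}) and their tightness in the image of $\qltrans{\cdot}$ (Prop.~\ref{prop:tight-approx}). I describe item (1); item (2) is entirely dual, swapping $\ndcomp$ with $\condcomp$, the $\exists$-case of $\qltrans{\cdot}$ with the $\forall$-case, and $\Prov{\Sigma^f_k}$ with $\Prov{\Pi^f_k}$.

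For $\Sigma^p_k$-membership, my algorithm on input $A$ first verifies $\ndcomp(A)\leq k$ in polynomial time (Cor.~\ref{cor:approx-ptime-computable}); assuming that succeeds, it then decides $\Sigma^f_k$-provability of $A$, which is a $\Sigma^p_k$-predicate by Thm.~\ref{thm:prov-preds-qbf}(2). This correctly decides the set because, under $\ndcomp(A)\leq k$, Cor.~\ref{cor:over-estimate} gives $\sigma(A)\leq k$, and then Prop.~\ref{prop:complexity-foc-hier}(1) equates provability of $A$ in $\aMALL$ with its $\Sigma^f_k$-provability. Equivalently, one can express the same $\Sigma^p_k$-algorithm via the $\Sigma^q_k$-formula $\lqtrans{A} = \Prov{\Sigma^f_k}_{|A|}(A)$ supplied by the definition of $\lqtrans{\cdot}$ when $\ndcomp(A)\leq\condcomp(A)$.

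For $\Sigma^p_k$-hardness I reduce from true closed $\Sigma^q_k$-sentences, which are $\Sigma^p_k$-complete by Cor.~\ref{cor:qbf-complete}(1). Given such a $\phi$, if it does not already have exactly $k$ alternating quantifiers starting with $\exists$, I first pad its quantifier prefix by prepending fresh vacuously bound alternating quantifiers so as to place it in $\Sigma^q_k\setminus\Pi^q_k$; since the padded variables do not occur in the matrix, this preserves truth, and the size only increases by $O(k)$. Composing with $\qltrans{\cdot}$, Thm.~\ref{thm:qltrans-correct} gives that $\aMALL$ proves $\qltrans{\phi}$ iff $\phi$ is true, while Prop.~\ref{prop:tight-approx}(1) gives $\ndcomp(\qltrans{\phi})=k$. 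Thus $\phi\mapsto\qltrans{\phi}$ is a polynomial-time reduction from true closed $\Sigma^q_k$-sentences to the set in question.

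The main technical subtlety, rather than a genuine obstacle, is the padding step: Prop.~\ref{prop:tight-approx} is stated only for formulas strictly in $\Sigma^q_k\setminus\Pi^q_k$ (resp.~$\Pi^q_k\setminus\Sigma^q_k$), so the naive map $\phi\mapsto\qltrans{\phi}$ on an arbitrary $\Sigma^q_k$-sentence $\phi$ with fewer alternations would instead land with $\ndcomp(\qltrans{\phi})$ strictly less than $k$ (still inside the target set) or, if $\phi$ accidentally lies in $\Pi^q_j\setminus\Sigma^q_j$ for some smaller $j$, potentially exceed $k$; the vacuous-prefix padding normalises the alternation count and repairs this at polynomial cost. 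Everything else is a direct composition of results already established in Sects.~\ref{sect:qltrans}--\ref{sect:lqtrans}.
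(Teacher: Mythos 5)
Your proposal is correct and follows essentially the paper's own route: the corollary is stated there as a direct consequence of Thm.~\ref{thm:main}, with membership unfolding exactly as you do (Cor.~\ref{cor:approx-ptime-computable} plus Cor.~\ref{cor:over-estimate}, Prop.~\ref{prop:complexity-foc-hier} and Thm.~\ref{thm:prov-preds-qbf}, equivalently via $\lqtrans{\cdot}$) and hardness via $\qltrans{\cdot}$ together with tightness (Prop.~\ref{prop:tight-approx}). One small remark: your padding step is harmless and is indeed the cleanest way to invoke Prop.~\ref{prop:tight-approx}(1) verbatim, but your worry that the unpadded map could ``potentially exceed $k$'' is unfounded --- if $\phi \in \Pi^q_j \setminus \Sigma^q_j$ then $\phi \in \Sigma^q_k$ forces $j \leq k-1$, so Prop.~\ref{prop:tight-approx}(2) gives $\ndcomp(\qltrans{\phi}) = 1+j \leq k$ and the image stays in the target set in every case; this care point is in any case a detail that the paper's appeal to Thm.~\ref{thm:main}(3) glosses over.
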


\section{Extending the approach to (non-affine) $\MALL$}
\label{sect:mallph}

It is natural to wonder whether a similar result to Thm.~\ref{thm:main} could be obtained for $\MALL$, i.e.\ without weakening.
The reason we chose $\aMALL$ is that it allows for a robust and uniform approach that highlights the capacity of focussed systems to obtain tight alternating time bounds for logics, without too many extraneous technicalities.
However, the same approach does indeed extend to $\MALL$ with only local adaptations.
We give the argument in this section.

This section is comprised of new material not present in \cite{Das18:ijcar}.


\subsection{Encoding weakening in $\MALL$}

There is a well-known embedding of $\aMALL$ into $\MALL$ by recursively replacing every subformula $A$ by $\bot \vlor A$.
However, doing this might considerably increase the alternation complexity of proof search, adding up to one alternation per subformula. 
Instead, we notice that we need only conduct this replacement on literals, since those are the only ones that are weakened in the proofs of Sect.~\ref{sect:qltrans}.
From here we realise that the consideration of formulae of the form $\bot \vlor a$ (or variants thereof) may be delayed to the end of proof search.
To formalise this appropriately, we first need the following notion:
\begin{definition}
	[Weakened formulas]
	Let $\Phi$ be a $\aMALL$ proof whose initial $\id$-sequents are $\{ \Gamma_i, a_i, \cnot a_i \}_{i<m}$
	and initial $1$-sequents are $\{\Gamma_i, 1 \}_{m\leq i<n} $
	The \emph{weakened formulas} of $\Phi$ is the set $\Omega \dfn \bigcup_{i<n}  \Gamma_i$. 
\end{definition}

\noindent
We identify elements of $\Omega $ in the above definition with subformula occurrences of the conclusion of $\Phi$ in the natural way.
We have the following folklore result:

\begin{lemma}
	[Weakening lemma]
	\label{lem:weakening-lemma}
	We have the following:
	\begin{enumerate}
		\item\label{item:mall-proves-A-imp-botorA} $\MALL$ proves $A \seqar \bot \vlor A$.
		\item\label{item:amall-proves-botorA-imp-A} $\aMALL$ proves $\bot \vlor A \seqar A$, with weakening only on $A$. 
		\item\label{item:weakened-formulas-bot} Let $A $ be a $\MALL(\mathsf w)$ formula and $\Omega$ a set of subformula occurrences of $A$. 
		There is a $\aMALL$ proof $\Phi$ of $A$ with weakened formulas among $\Omega$ if and only if $\MALL$ proves $A[\bot \vlor B/B]_{B \in \Omega}$.
	\end{enumerate}
\end{lemma}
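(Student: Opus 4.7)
My plan is to handle parts 1 and 2 by short direct derivations, and to reduce part 3 to a bidirectional proof transformation driven by the single observation that $\bot \vlor B$ is precisely the ``weakening-ready wrapper'' for $B$: it can be discarded in $\MALL$ by the $\vlor$-rule choosing the $\bot$-disjunct followed by the $\bot$-rule.

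For part 1, the one-sided form of $A \seqar \bot \vlor A$ is $\seqar \cnot A, \bot \vlor A$. I would apply the $\vlor$-rule choosing the second disjunct to reduce to $\seqar \cnot A, A$, which is the standard identity-expansion of $\id$ to arbitrary formulas (proved by routine induction on $A$ using only non-weakening rules). For part 2, computing the dual gives $\cnot(\bot \vlor A) = 1 \vlan \cnot A$, so the one-sided form of $\bot \vlor A \seqar A$ becomes $\seqar 1 \vlan \cnot A, A$. Applying $\vlan$ splits into $\seqar 1, A$ and $\seqar \cnot A, A$: the former is the $w1$ axiom, which is the unique place where $A$ is weakened in; the latter is identity expansion and uses no weakening. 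Hence weakening is applied only on (the indicated occurrence of) $A$.

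For the $(\Rightarrow)$ direction of part 3, given an $\aMALL$-proof $\Phi$ of $A$ whose weakened formulas all lie in $\Omega$, I plan to induct on $\Phi$ to produce a $\MALL$-proof of $A^* \dfn A[\bot \vlor B / B]_{B \in \Omega}$. Each $\wkid$ axiom $\seqar \Gamma, a, \cnot a$ (with $\Gamma \subseteq \Omega$) translates to $\seqar \Gamma', a, \cnot a$, where each $B \in \Gamma$ becomes $\bot \vlor B^*$ in $\Gamma'$; this is derived in $\MALL$ by starting from the $\id$ axiom $\seqar a, \cnot a$ and alternately applying the $\bot$-rule and the $\vlor$-rule (choosing the $\bot$-disjunct) to reintroduce each $\bot \vlor B^*$. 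The $w1$ axioms are handled identically starting from $\seqar 1$. For any non-axiom rule whose principal formula $B$ is in $\Omega$, I would prefix the step with a $\vlor$-rule choosing the $B^*$-disjunct in order to unwrap $\bot \vlor B^*$ down to $B^*$, then apply the original rule on $B^*$'s immediate subformulas (which may themselves be in $\Omega$ and are handled by the recursion). Rules not touching $\Omega$-formulas transfer verbatim.

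For the $(\Leftarrow)$ direction, I would run this transformation in reverse on a $\MALL$-proof $\Psi$ of $A^*$. Each decomposition of $\bot \vlor B^*$ via the $\vlor$-choose-$\bot$ step followed by the $\bot$-rule is collapsed to a single weakening step on $B$ (admissible by Prop.~\ref{prop:wk-admiss-mall}), which contributes a weakened occurrence only within $\Omega$; each $\vlor$-choose-$B^*$ is simply erased, continuing the translation on $B$ in place of $B^*$; and all other rules transfer verbatim. The main obstacle is bookkeeping: since $\Omega$ is a set of subformula occurrences, may contain nested occurrences (so the substitution becomes effectively recursive), and since formulas in $\Omega$ may be duplicated by context-copying rules such as $\vlan$, making the notion of ``subformula occurrence'' sufficiently robust throughout the induction is the only non-routine aspect of the argument.
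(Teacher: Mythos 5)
Your parts 1 and 2 coincide with the paper's proof (the paper's two-sided derivations are exactly your one-sided ones: $\vlor$ then identity expansion for part 1, and the $\vlan$-split into the $w1$ axiom and identity expansion for part 2). For part 3, however, you take a genuinely different route. The paper does not induct on the proof $\Phi$ at all: it derives part 3 ``immediately'' from parts 1 and 2 via a contextual closure (``deep inference'') property --- \emph{if $\MALL(\mathsf w)$ proves $A(B)$ and $B \seqar C$ then it proves $A(C)$} --- which is itself proved by induction on the structure of the context $A(\cdot)$ and, crucially, appeals to cut-elimination for $\MALL(\mathsf w)$ to splice in the implications $B \seqar \bot \vlor B$ and $\bot \vlor B \seqar A$. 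That argument is shorter and more modular, but it leans on cut admissibility and leaves the tracking of which formulas get weakened (needed for the ``weakened formulas among $\Omega$'' clause) largely implicit. Your direct bidirectional proof transformation avoids cut entirely and makes the occurrence-tracking explicit, at the price of the bookkeeping you yourself flag. Two details of that bookkeeping deserve care. First, since $\vlan$ duplicates context occurrences, a single occurrence $B \in \Omega$ can be weakened at one leaf and \emph{principal} (or an active identity literal) at another; your ``unwrap with $\vlor$-choose-$B^*$'' clause handles the principal case, but it must also be applied at $\id$/$\wkid$ leaves whose active literal occurrence lies in $\Omega$, where the translated axiom is $\seqar \Gamma', \bot \vlor a, \cnot a$ rather than $\seqar \Gamma', a, \cnot a$. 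Second, in the reverse direction the $\bot$ released by a $\vlor$-choose-$\bot$ step need not be consumed by an immediately following $\bot$-rule: it can float in the context through several rules, be sent to one premiss of a $\vlte$, or be absorbed by a $\top$ axiom; so rather than collapsing an adjacent pair of steps you should track that $\bot$ occurrence, erase the $\vlor$ step together with whichever rule eventually disposes of it, and insert an admissible weakening (Prop.~\ref{prop:wk-admiss-mall}) on the corresponding $B$ --- which, once permuted to the leaves, lands the weakened formulas within $\Omega$ as required. With these repairs your argument is sound, and arguably more self-contained than the paper's, since it never invokes cut-elimination.
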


\begin{proof}
	\ref{item:mall-proves-A-imp-botorA} and \ref{item:amall-proves-botorA-imp-A} are given by the following derivations:
	\[
	\vlderivation{
		\vlin{}{}{A \seqar \bot \vlor A}{
			\vliq{}{}{A \seqar A}{\vlhy{}}
		}
	}
	\qquad
	\vlderivation{
		\vliin{}{}{\bot \vlor A \seqar A}{
			\vlin{}{}{\bot \seqar A}{\vlhy{}}
		}{
			\vliq{}{}{A \seqar A}{\vlhy{}}
		}
	}
	\]
	\ref{item:weakened-formulas-bot} now follows immediately from \ref{item:mall-proves-A-imp-botorA} and \ref{item:amall-proves-botorA-imp-A} under the `deep inference' property:
	\begin{equation}
	\text{\emph{If $\MALL(\mathsf w)$ proves $A(B)$ and $B \seqar C$ then it proves $A(C)$.}}
	\end{equation}
	This property is well known (see, e.g., \cite{Stra02:ll-di}) and follows by a routine induction on the structure of $A$, in particular appealing to the cut-elimination property of $\MALL(\mathsf w)$.
	For instance here are the cases when $A$ is a $\vlpa$ or $\vlor$ formula,
	\[
	\vlderivation{
		\vliin{\cut}{}{\seqar A_0(C) \vlpa A_1(C)}{
			\vliq{}{}{{\seqar A_0 (B) \vlpa A_1 (B)}}{\vlhy{}}
		}{
			\vlin{\rigrul{\vlpa}}{}{A_0 (B) \vlpa A_1(B) \seqar A_0(C) \vlpa A_1(C) }{
				\vliin{\lefrul \vlpa}{}{A_0 (B) \vlpa A_1(B) \seqar A_0(C) , A_1(C)}{
					\vltr{\IH}{A_0 (B)  \seqar A_0(C)}{\vlhy{\ }}{\vlhy{}}{\vlhy{\ }}
				}{
					\vltr{\IH}{A_1(B) \seqar A_1(C)}{\vlhy{\ }}{\vlhy{}}{\vlhy{\ }}
				}
			}
		}
	}
	\]
	\[
	\vlderivation{
		\vliin{\cut}{}{\seqar A_0 (C) \vlor A_1 (C) }{
			\vliq{}{}{\seqar A_0 (B) \vlor A_1 (B)}{\vlhy{}}
		}{
			\vliin{\lefrul \vlor}{}{ A_0 (B) \vlor A_1 (B)  \seqar A_0 (C) \vlor A_1 (C)  }{
				\vlin{\rigrul \vlor}{}{ A_0 (B) \seqar A_0 (C) \vlor A_1 (C)  }{
					\vltr{\IH}{A_0 (B)  \seqar A_0(C)}{\vlhy{\ }}{\vlhy{}}{\vlhy{\ }}
				}
			}{
				\vlin{\rigrul \vlor}{}{A_1 (B) \seqar A_0 (C) \vlor A_1 (C) }{
					\vltr{\IH}{A_1(B) \seqar A_1(C)}{\vlhy{\ }}{\vlhy{}}{\vlhy{\ }}
				}
			}
		}
	}
	\]
	where the derivations marked $\IH$ are obtained from the inductive hypothesis.
	The cases when $A$ is a $\vlte$ or $\vlan$ formula are similar to the two cases above.
\end{proof}

\subsection{Adapting the translation $\qltrans{\cdot}$ for $\MALL$}


%
It turns out that, in the arguments of Sect.~\ref{sect:qltrans}, we only used weakenings on atomic formulae: notice that, in the proofs of Prop.~\ref{prop:pos-neg-sat} and of Lemma~\ref{lem:qltrans-correct}, the only weakenings were applied on literals, and all initial sequents had the form $\vec a$. 

\begin{observation}
	\label{obs:atomic-weakening}
	The proof of Thm.~\ref{thm:qltrans-correct} requires weakening only on literals.
\end{observation}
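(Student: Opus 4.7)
The plan is to inspect the proofs of Prop.~\ref{prop:pos-neg-sat} and Lemma~\ref{lem:qltrans-correct} and verify directly that every invocation of weakening (equivalently, every side-context of a $\wkid$ leaf, via Prop.~\ref{prop:wk-admiss-mall}) consists solely of literals. Since Thm.~\ref{thm:qltrans-correct} arises from Lemma~\ref{lem:qltrans-correct} by instantiating $\vec y = \emptyset$, no new weakening is introduced at that final step, so it suffices to handle Prop.~\ref{prop:pos-neg-sat} and Lemma~\ref{lem:qltrans-correct}.

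For Prop.~\ref{prop:pos-neg-sat} I would argue by induction on the formula. The base case (literals) closes via $\wkid$ with side-context contained in $\alpha(\vec x)$ or $\alpha^n(\vec x)$, both of which are \emph{by definition} multisets of literals, so the weakenings are atomic. In the inductive cases for $\vlpa$, $\vlor$, and $\vlan$ the same context is passed upwards without splitting, and thus no new weakening is introduced. The only delicate case is the $\vlte$ step in the positive encoding, where $\alpha^n(\vec x)$ is split into $\alpha^l(\vec x)$ and $\alpha^m(\vec x)$; here the key point is that $l+m=n$ is always achievable, since $n \geq |\phi \cand \psi| = |\phi|+|\psi|$ and one is free to choose any $l \geq |\phi|$, $m \geq |\psi|$ with $l+m=n$. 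Thus the split itself introduces no weakening, and the invariant that all side-contexts remain atomic propagates upwards.

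For Lemma~\ref{lem:qltrans-correct}, the base case reduces to Prop.~\ref{prop:pos-neg-sat} with the extra cedent $\vec y$ of fresh variables, which is removed via weakening; since $\vec y$ consists of propositional variables, these are again atomic weakenings. The inductive step is handled by inspection of the explicit bi-focussed derivations in Figs.~\ref{fig:ql-der-exists} and \ref{fig:ql-der-forall}. Each $\lefrul{\limp}$ or $\lefrul{\plimp}$ step distributes the existing (atomic) context between its premises, sending all of $\alpha^n(\vec x), \pm x^n, \vec y$ to the left branch and leaving the right branch of the shape $\Downarrow y \seqar y$ or $\Uparrow y \seqar y$, which closes via $\wkid$ against a literal context. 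Combining with the inductive hypothesis, the invariant that all weakened formulas are literals is preserved throughout.

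The main obstacle is essentially bookkeeping: one has to verify that at every $\vlte$, $\lefrul{\limp}$, or $\lefrul{\plimp}$ step the multiset of literal copies is distributed exactly, and that each $\wkid$ leaf has a side-context of literals only. This reduces to maintaining a single invariant, namely that non-literal formulas never appear in the ``weakenable'' portion of any sequent occurring in the constructed derivations, which is transparent from the figures and the explicit construction in Prop.~\ref{prop:pos-neg-sat}.
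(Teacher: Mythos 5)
Correct, and essentially the paper's own argument: the paper justifies this observation by exactly the inspection you carry out, noting that in the proofs of Prop.~\ref{prop:pos-neg-sat} and Lemma~\ref{lem:qltrans-correct} all weakenings act on literals and all initial sequents have the form $\vec a$, with Thm.~\ref{thm:qltrans-correct} adding nothing new since it is just the instance $\vec y = \emptyset$. One harmless slip in your bookkeeping: the $\lefrul{\plimp}$ step in Fig.~\ref{fig:ql-der-forall} is additive (it is a $\vlan$ on the negated side), so its right premiss is $\alpha^n(\vec x), x^n \Uparrow y \seqar \vec y, y$ rather than the bare $\Uparrow y \seqar y$ you describe, but since that retained context consists only of literals, the $\wkid$ leaf still weakens literals only and your invariant goes through unchanged.
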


\noindent
This motivates the following definition:

\begin{definition}
	For a $\MALL(\mathsf{w})$ formula $A$, write $A' $ for the result of replacing every literal occurrence $a$ with $\bot \vlor a$.
\end{definition}

\noindent
We now have the following immediately from Lemma~\ref{lem:weakening-lemma}:
\begin{proposition}
	$\aMALL$ proves $\qltrans \phi$ if and only if $\MALL$ proves $\qltrans \phi' $.
\end{proposition}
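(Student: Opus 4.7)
The plan is to deduce this proposition by combining Observation~\ref{obs:atomic-weakening} with part~\ref{item:weakened-formulas-bot} of the Weakening Lemma (Lemma~\ref{lem:weakening-lemma}), instantiated with $A = \qltrans\phi$ and $\Omega$ chosen to be the set of all literal occurrences in $\qltrans\phi$. Under this choice, $\qltrans\phi [\bot \vlor B / B]_{B \in \Omega}$ is exactly $\qltrans\phi'$ by the definition of the primed translation, so the equivalence furnished by Lemma~\ref{lem:weakening-lemma}\ref{item:weakened-formulas-bot} will reduce to the claimed biconditional once we know that weakening can be restricted to literals.

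For the right-to-left direction I would first observe that it is essentially immediate: if $\MALL$ proves $\qltrans\phi'$, then Lemma~\ref{lem:weakening-lemma}\ref{item:weakened-formulas-bot} directly supplies a $\aMALL$ proof of $\qltrans\phi$ (whose weakened formulas happen to lie among the literal occurrences, though we only need that some $\aMALL$ proof exists).

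For the left-to-right direction, assume $\aMALL$ proves $\qltrans\phi$. By Theorem~\ref{thm:qltrans-correct} the QBF $\phi$ is true, so by Theorem~\ref{thm:qltrans-correct} again one can exhibit a $\aMALL$ proof of $\qltrans\phi$ constructed as in the proof of Lemma~\ref{lem:qltrans-correct}. Observation~\ref{obs:atomic-weakening} then guarantees that this proof uses weakening only on literals, i.e., its weakened formulas form a subset of $\Omega$. Applying Lemma~\ref{lem:weakening-lemma}\ref{item:weakened-formulas-bot} in the other direction delivers a $\MALL$-proof of $\qltrans\phi'$.

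There is no real obstacle here since the substantive work has already been carried out: the key conceptual point is simply matching up the set $\Omega$ from the Weakening Lemma with the literal occurrences targeted by the primed translation, and the key technical point is Observation~\ref{obs:atomic-weakening} — the fact that the canonical proofs of $\qltrans\phi$ built in Section~\ref{sect:qltrans} never invoke weakening on a non-atomic formula. The only thing that needs a sentence of care is to note that $\aMALL \vdash \qltrans\phi$ implies the existence of \emph{some} proof with weakening only on literals (via truth of $\phi$ and re-invocation of the Lemma~\ref{lem:qltrans-correct} construction), rather than pretending that every $\aMALL$ proof of $\qltrans\phi$ has this property.
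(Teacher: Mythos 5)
Your proposal is correct and takes essentially the same route as the paper, which obtains the proposition from Observation~\ref{obs:atomic-weakening} combined with Lemma~\ref{lem:weakening-lemma}.\ref{item:weakened-formulas-bot} instantiated with $\Omega$ the literal occurrences of $\qltrans\phi$. The detour you flag in the left-to-right direction---passing through the truth of $\phi$ via Theorem~\ref{thm:qltrans-correct} to recover a canonical proof weakening only on literals, since an arbitrary $\aMALL$ proof need not have this property---is exactly the step the paper leaves implicit in calling the result immediate.
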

%
%
%
%
%

\subsection{Dealing with $\bot \vlor a$ formulas deterministically}

Even though we have restricted our treatment of weakened formulas to only literals, these may still a priori increase the alternation complexity of proof search linearly under the encoding $A'$.
To avoid this, we will work in a certain normal form of proofs that delays consideration of formulas of the form $\bot \vlor a$ until the end of bottom-up proof search.
To enforce this we must slightly `hack' the proof systems and complexity approximations previously introduced.

For generality, let us introduce new metavariables $c,d$ etc.\ varying over $\vlor$-clauses containing at least one $\bot$:
\[
c \quad ::= \quad \bot \quad | \quad c \vlor x \quad | \quad x \vlor c \quad | \quad \cnot x \vlor c \quad | \quad c \vlor \cnot x \quad | \quad c \vlor c
\]
\noindent
Our intention is to treat $c$-formulas much like atoms in proof search, in particular not decomposing them until the end.
To this end, we will introduce a new focussed system $\focMALL'$ that enforces this within the rules.

\begin{definition}
	[$\focMALL'$]
	\label{dfn:focmall'}
	We temporarily redefine the metavariables $M,N,O,P,Q$ so that only $O$ is permitted to vary over $c$-formulas,i.e.\ if $P$ or $Q$ is a $\vlor$-formula it must not be a $c$-formula.
	The proof system $\focMALL'$ is hence defined just as $\focMALL$ in Fig.~\ref{fig:focmall}, under this revision of metavariables, with the following exceptions:
	\begin{itemize}
		\item $\vec a$ in Fig.~\ref{fig:focmall} is everywhere replaced by $\vec a, \vec c$, i.e.\ $\focMALL'$ has the following rules,
		\[
		\vlinf{\dec}{}{\seqar \vec a , \vec c , \vec P , \vec P'}{\seqar \vec a , \vec c , \vec P \Downarrow \vec P'}
		\qquad
		\vlinf{\codec}{}{\seqar \vec a , \vec c , \vec P , \vec M}{\seqar \vec a , \vec c , \vec P \Uparrow \vec M}
		\qquad
		\vlinf{\rel}{}{\seqar \Gamma \Uparrow \vec a, \vec c, \vec N}{\seqar \Gamma, \vec a , \vec c, \vec N}
		\]
		instead of their analogous versions written in Fig.~\ref{fig:focmall}.
		\item $\focMALL'$ has the following additional initial sequents:
		\[
		\vlinf{}{}{\seqar \vec c , c(x), d(\cnot x)}{}
		\qquad
		\vlinf{}{}{\seqar \vec c, c(x), \cnot x}{}
		\qquad 
		\vlinf{}{}{\seqar \vec c, c(\cnot x), x}{}
		\qquad 
		\vlinf{}{}{\seqar \vec c , x , \cnot x}{}
		\qquad
		\vlinf{}{}{\seqar \vec c , 1}{}
		\]
		where literals in parentheses must occur in their respective formulas. 
		These new initial sequents are deterministic.
	\end{itemize} 
	(Co-)focussed and bi-focussed proofs of $\focMALL'$ are defined analogously to Dfn.~\ref{dfn:co-bi-focussed}.
\end{definition}


\begin{proposition}
	(Bi-focussed) $\focMALL'$ is sound and complete for $\MALL$.
\end{proposition}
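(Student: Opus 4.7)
The plan is to establish soundness and completeness separately, with the latter requiring a transformation argument that delays the decomposition of $c$-formulas.

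Soundness follows essentially directly: each rule of $\focMALL'$ is either a standard $\focMALL$-rule (possibly restricted in applicability by the modified metavariable conventions, which is harmless for soundness) or one of the new initial sequents. The $\focMALL$-rules are already sound for $\MALL$ by Thm.~\ref{thm:bi-foc-complete}. For each new initial sequent I would exhibit a short $\MALL$-derivation: for instance, $\seqar \vec c, c(x), \cnot x$ is derived by applying $\vlor$ to $c(x)$ to extract $x$, then pairing it against $\cnot x$ via $\id$, while each formula in $\vec c$ is decomposed by picking its $\bot$-disjunct and consuming that $\bot$ via the $\bot$-rule against the remaining context. The other three axiom patterns are handled analogously, and $\seqar \vec c, 1$ is reduced to $\seqar 1$ after all $\bot$s have been eliminated.

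For completeness, I would start from a bi-focussed $\focMALL$-proof $\Phi$ of $\seqar \Gamma$ (available by Thm.~\ref{thm:bi-foc-complete}) and transform it into a bi-focussed $\focMALL'$-proof. The central idea is that decompositions of $c$-formulas (via $\vlor$ in focus, and $\bot$ deterministically) can always be delayed upward in the proof until they abut initial sequents, where the residual partially-decomposed $c$-formulas together with the remaining literals (or $1$) match exactly the shape of the new initial sequents of $\focMALL'$. More precisely, I would prove by induction on the size of $\Phi$ that any bi-focussed $\focMALL$-proof of $\seqar \Gamma, \vec c$, with $\vec c$ a multiset of $c$-formulas, can be rewritten as a bi-focussed $\focMALL'$-proof of the same sequent.

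The main obstacle will be verifying that the required permutations are always available despite the context-splitting $\vlte$-rule and the bi-focussed discipline. I would address this by noting that $\vlor$-steps on a $c$-formula commute with every other $\focMALL$-rule: since $c$-formulas consist purely of $\bot$s and literals, they share no principal subformula with any other inference, and their decomposition never branches except multiplicatively (at which point the $c$-formula travels wholly into one premiss). The $\bot$-rule is deterministic and invertible, so it permutes freely as well. Preservation of the bi-focussed invariant is immediate because the transformation only removes $c$-formulas from foci, never adds them, and the new initial sequents impose no further branching. At the leaves, the accumulated $c$-formulas together with the standard axiomatic structure ($\id$ or $1$) are precisely the patterns catalogued in Dfn.~\ref{dfn:focmall'}, completing the conversion.
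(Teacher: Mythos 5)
Your proposal is correct and follows essentially the same route as the paper: soundness by expanding each new initial sequent into short derivations using $\vlor$ and $\bot$ steps, and completeness by induction on the size of a (bi-)focussed $\focMALL$-proof with a rule-permutation argument whose effect is that $c$-formulas are never decomposed but survive intact to the leaves, where the new initial sequents of $\focMALL'$ absorb them. The only difference is presentational: where you permute the decide-on-$c$ blocks upward step by step, the paper performs the delay in one shot --- either weaving $c$ through the subproof above a $\bot$-branch (adding $c$ to every premiss bottom-up, choosing a single premiss at each $\vlte$ step) or replacing the focused literal $a$ by $c$ throughout via the substitution $\Phi[c/a]$, with your observations about multiplicative splitting and preservation of the bi-focussed discipline appearing there implicitly.
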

\begin{proof}
	Soundness is routine, with the new initial sequents proved using simple $\vlor$ and $\bot $ steps, along with the other initial rules. 
	%
	
	For completeness, we proceed by induction on the size of a $\focMALL$ proof, by essentially a rule permutation argument.
	The critical cases are when a $\focMALL$ proof focusses on a $c$-formula, which we adapt as follows:
	\[
	\vlderivation{
		\vlin{}{}{\seqar \vec a,\vec c,  \vec P, c}{
			\vlin{}{}{
				\begin{array}{c}
				\seqar \vec a,\vec c, \vec P \Downarrow \bot \\
				\vdots \\
				\seqar \vec a,\vec c, \vec P \Downarrow c
				\end{array}
			}{
				\vlin{}{}{\seqar\vec a,\vec c,  \vec P , \bot}{
					\vltr{\Phi}{\seqar\vec a,\vec c,  \vec P}{\vlhy{\quad}}{\vlhy{}}{\vlhy{\quad}}
				}
			}
		}
	}
	\quad\leadsto\quad
	\vlderivation{
		\vltr{c,\Phi}{\seqar \vec a,\vec c, c,\vec P}{\vlhy{\quad}}{\vlhy{\quad}}{\vlhy{\quad}}
	}
	\qquad 
	\qquad 
	\vlderivation{
		\vlin{}{}{\seqar \vec a,\vec c, \vec P, c}{
			\vlin{}{}{
				\begin{array}{c}
				\seqar\vec a, \vec c, \vec P \Downarrow a \\
				\vdots \\
				\seqar\vec a,\vec c,  \vec P \Downarrow c
				\end{array}
			}{
				\vltr{\Phi}{\seqar\vec a, \vec c, \vec P, a}{\vlhy{\quad}}{\vlhy{}}{\vlhy{\quad}}
			}
		}
	}
	\quad \leadsto \quad
	\vlderivation{
		\vltr{\Phi[c/a]}{\seqar \vec a,\vec c, \vec P, c}{\vlhy{\quad}}{\vlhy{\quad}}{\vlhy{\quad}}
	}
	\]
	where,
	\begin{itemize}
		\item $c,\Phi$ is obtained from $\Phi$ by inductively adding $c$ to each premiss of an inference step in $\Phi$, bottom-up, except at $\vlte$ steps where we must only add $c$ to one premiss, say the left one. This transformation preserves the local correctness of the proof, in particular since initial sequents are closed under addition of $c$-formulas.
		\item $ a $ is a literal and $\Phi[c/a]$ is obtained from $\Phi$ by replacing every (indicated) occurrence of $a $ by $c$. Since $c$ contains $a$ as a subformula, each initial sequent transformed in this way will again be an initial sequent.
	\end{itemize}
	All other cases are routine, simply mimicking the given $\focMALL$ proof.
\end{proof}

%

%
%
%
%



\subsection{Adapting the translation $\lqtrans{\cdot}$ and main results}

Finally we adapt the approximations of (co-)nondeterministic complexity to reflect the proof search dynamics of $\focMALL'$:

\begin{definition}
	[Approximating alternating complexity in $\focMALL'$]
	$\ndcomp_\prec'$ and $\condcomp_\prec'$ are defined exactly as $\ndcomp_\prec$ and $\condcomp_\prec$ in Fig.~\ref{fig:approx-complex}, under the metavariable conventions of Dfn.~\ref{dfn:focmall'}, with the following exception:
	$\vec a$ and $a$ are replaced everywhere by $\vec a, \vec c$ and ``$a$ or $c$'', respectively. I.e.\ we have the following clauses,
	\[
	\begin{array}{rcll}
	\ndcomp_\prec' (\vec a, \vec c) & \dfn & 1 & \\
	\ndcomp_\prec' ( \vec a , \vec c, \vec P, P) & \dfn & \ndcomp_\prec'(\vec a , \vec c , \vec P \Downarrow P) & \text{$P$ is $\prec$-least in $\vec P,P$} \\
	\ndcomp_\prec' (\vec a , \vec c, \vec P, \vec M, M) & \dfn & 1 + \condcomp_\prec' (\vec a, \vec c, \vec P, \vec M, M) & \\
	\noalign{\medskip}
	\condcomp_\prec' (\vec a , \vec c) & \dfn & 1 & \\
	\condcomp_\prec' (\vec a , \vec c, \vec P, P) & \dfn & 1 + \ndcomp_\prec' (\vec a , \vec c , \vec P) \\
	\condcomp_\prec' (\vec a, \vec c, \vec P , \vec M , M) & \dfn & \condcomp_\prec' (\vec a, \vec c, \vec P , \vec M, \Uparrow M) & \text{$M$ is $\prec$-least in $\vec M, M$}\\
	\noalign{\medskip}
	\ndcomp_\prec' (\Gamma \Downarrow X ) & \dfn & \ndcomp_\prec' (\Gamma, X)  & \text{$X$ is $a$ or $c$ or $N$}
	\end{array}
	\]
	instead of their analogous versions written in Fig.~\ref{fig:approx-complex}.
\end{definition}

It is not hard to see that the results of Sect.~\ref{sect:proof-search-predicates} are applicable also to the notions $\ndcomp'$ and $\condcomp'$ developed here.
In particular Prop.~\ref{prop:confluence} holds also for $\ndcomp'$ and $\condcomp'$ and we may similarly omit the $\prec$-subscript henceforth.
All together, this allows us to define a similar encoding from $\MALL$ formulas to QBFs.

First, appealing to Cor.~\ref{cor:prov-pred-qbfs},\footnote{Notice that this holds also for the system $\focMALL'$.} let us henceforth fix appropriate QBFs $\Prov{\Sigma^f_k}_n'$ and $\Prov{\Pi^f_k}_n'$, for $k\geq 1$, computing $\Sigma^f_k$-provability and $\Pi^f_k$-provability in $\focMALL'$, respectively, for formulas of size $n$.

\begin{definition}
	[$\MALL$ to $\qcpl$]
	For a $\MALL$ formula $A$, we define:
	\[
	\lqtrans{A}'
	\ \dfn  \ 
	\begin{cases}
	\Prov{\Sigma^f_{k}}_{|A|}'(A) & \text{if }k=\ndcomp' (A)\leq \condcomp' (A) \\
	\Prov{\Pi^f_{k}}_{|A|}'(A) & \text{if }k=\condcomp'(A) < \ndcomp'(A)
	\end{cases}
	\]
\end{definition}

\noindent
We now have the following analogues of the main results of the previous section, proved by essentially the same arguments:

\begin{theorem}
	\label{thm:main2}
	We have the following:
	\begin{enumerate}
		\item\label{item:qltrans-nonaffine-correct} $\qltrans\cdot'$ is a polynomial-time encoding from $\qcpl$ to $\MALL$.
		\item\label{item:lqtrans-nonaffine-correct} $\lqtrans{\cdot}'$ is a polynomial-time encoding from $\MALL$ to $\qcpl$.
		\item\label{item:inverse-encodings-nonaffine} The composition $ \lqtrans{\cdot}' \circ \qltrans{\cdot}' : \qcpl \to \qcpl$ preserves quantifier complexity, i.e.\ for $k\geq 1$, it maps true $\Sigma^q_k$ ($\Pi^q_k$) sentences  to true $\Sigma^q_k$ (resp.~$\Pi^q_k$) sentences.
	\end{enumerate}
	
\end{theorem}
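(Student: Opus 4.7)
The plan is to establish each of the three parts by mirroring the arguments for Theorem~\ref{thm:main}, invoking the extra technical machinery built up in Sect.~\ref{sect:mallph}. For part~\ref{item:qltrans-nonaffine-correct}, I would proceed by a chain of equivalences: $\phi$ is true iff $\aMALL$ proves $\qltrans{\phi}$ by Thm.~\ref{thm:qltrans-correct}; by Obs.~\ref{obs:atomic-weakening}, such a proof needs weakening only on literals; and so, taking $\Omega$ to be the set of literal occurrences in $\qltrans{\phi}$, Lemma~\ref{lem:weakening-lemma}(\ref{item:weakened-formulas-bot}) converts this to a $\MALL$ proof of $\qltrans{\phi}[\bot\vlor a / a]_{a\in \Omega} = \qltrans{\phi}'$. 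Polynomial-time computability of $\qltrans{\cdot}'$ is immediate since $(\cdot)'$ only locally modifies literal occurrences, at most tripling the formula size.

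For part~\ref{item:lqtrans-nonaffine-correct}, the strategy recapitulates the proof of Thm.~\ref{thm:main}(\ref{item:lqtrans-correct}) with the primed notions. I would first verify the analogue of Prop.~\ref{prop:confluence} for $\ndcomp'_\prec$ and $\condcomp'_\prec$, which goes through verbatim once $c$-formulas are carried alongside atoms in the base context $\vec a, \vec c$. This legitimises dropping the subscript $\prec$. The analogues of Cor.~\ref{cor:over-estimate} (overapproximation of $\sigma,\pi$ by $\ndcomp',\condcomp'$ in $\focMALL'$), Cor.~\ref{cor:approx-ptime-computable} (polynomial-time computability), and Prop.~\ref{prop:complexity-foc-hier} (the focussing-hierarchy characterisation) all follow by the same arguments as in Sects.~\ref{sect:proof-search-predicates}--\ref{sect:lqtrans} applied to $\focMALL'$ in place of $\focMALL$. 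Given these, the definition of $\lqtrans{\cdot}'$ in terms of $\Prov{\Sigma^f_k}_n'$ and $\Prov{\Pi^f_k}_n'$ directly delivers soundness and completeness of the encoding.

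For part~\ref{item:inverse-encodings-nonaffine}, the core work is the analogue of the tightness lemma, Prop.~\ref{prop:tight-approx}: for $\phi \in \Sigma^q_k \setminus \Pi^q_k$ we need $\ndcomp'(\qltrans{\phi}') = k$ and $\condcomp'(\qltrans{\phi}') = k+1$, and symmetrically for $\psi \in \Pi^q_k \setminus \Sigma^q_k$. The expected obstacle — and the reason for the careful gerrymandering in Dfn.~\ref{dfn:focmall'} and the revised approximations — is to verify that the $\bot\vlor a$ substructures introduced by $(\cdot)'$ contribute \emph{no} extra alternations. This is precisely ensured by forbidding $P$ and $Q$ from ranging over $c$-formulas, so that any $c$-formula waits, inert, in the base context $\vec a, \vec c$ throughout the synchronous/asynchronous phases and is discharged only via the new $c$-initial sequents. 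The inductive computation of $\ndcomp'(\qltrans{\phi}')$ then parallels that of $\ndcomp(\qltrans{\phi})$ in the proof of Prop.~\ref{prop:tight-approx} step-for-step, with $\vec c$-formulas carried along like atoms, producing the same final value; and likewise for $\condcomp'$. Once tightness is in place, the preservation of quantifier complexity under $\lqtrans{\cdot}' \circ \qltrans{\cdot}'$ follows exactly as in the proof of Thm.~\ref{thm:main}.
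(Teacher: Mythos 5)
Your proposal is correct and takes essentially the same route as the paper, which proves Thm.~\ref{thm:main2} ``by essentially the same arguments'' as Thm.~\ref{thm:main}, relying on exactly the ingredients you assemble: Thm.~\ref{thm:qltrans-correct} with Obs.~\ref{obs:atomic-weakening} and Lemma~\ref{lem:weakening-lemma}(\ref{item:weakened-formulas-bot}) for part~(\ref{item:qltrans-nonaffine-correct}), the primed analogues of the Sects.~\ref{sect:proof-search-predicates}--\ref{sect:lqtrans} results for $\focMALL'$ for part~(\ref{item:lqtrans-nonaffine-correct}), and the tightness computation of Prop.~\ref{prop:tight-approx} with $c$-formulas carried inertly alongside atoms for part~(\ref{item:inverse-encodings-nonaffine}). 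Your fleshing-out of the tightness analogue, in particular why the revised metavariable conventions keep $\bot \vlor a$ subformulas from contributing alternations, correctly identifies the one point the paper leaves implicit.
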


\begin{corollary}
	We have the following, for $k\geq 1$:
	\begin{enumerate}
		\item $\{ A : \text{$\ndcomp' (A) \leq k$ and $\MALL$ proves $A$} \}$ is $\Sigma^p_k$-complete.
		\item $\{ A  : \text{$\condcomp' (A) \leq k$ and $\MALL$ proves $A$}  \}$ is $\Pi^p_k$-complete.
	\end{enumerate}
\end{corollary}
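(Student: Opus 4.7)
My plan is to mimic the argument for the analogous corollary in the affine case, leveraging Thm.~\ref{thm:main2} and the fact that the approximation $\ndcomp'$ (respectively $\condcomp'$) was engineered precisely so that $\lqtrans{A}'$ lies in $\Sigma^q_k$ (resp.~$\Pi^q_k$) whenever $\ndcomp'(A)\leq k$ (resp.~$\condcomp'(A)\leq k$).

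For membership, I would start from the observation that, exactly as in Cor.~\ref{cor:approx-ptime-computable}, the approximations $\ndcomp'$ and $\condcomp'$ are polynomial-time computable via a confluence argument for $\focMALL'$, so membership in the purported class is decidable in polynomial time in the size of $A$ together with a single oracle call. By Thm.~\ref{thm:main2}.\ref{item:lqtrans-nonaffine-correct}, $\MALL$ proves $A$ iff $\lqtrans{A}'$ is true, and by construction $\lqtrans{A}'$ is a $\Sigma^q_k$ sentence (resp.~$\Pi^q_k$) when the assumed bound on $\ndcomp'$ (resp.~$\condcomp'$) holds. Combining with Cor.~\ref{cor:qbf-complete} places the set in $\Sigma^p_k$ (resp.~$\Pi^p_k$).

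For hardness, I would reduce from the set of true closed $\Sigma^q_k$ sentences (resp.~$\Pi^q_k$) via the encoding $\qltrans{\cdot}'$. By Thm.~\ref{thm:main2}.\ref{item:qltrans-nonaffine-correct}, $\phi\in\qcpl$ iff $\MALL\vdash\qltrans{\phi}'$, so the only thing left to check is that $\qltrans{\phi}'$ satisfies the required bound $\ndcomp'(\qltrans{\phi}')\leq k$ (resp.~$\condcomp'(\qltrans{\phi}')\leq k$) when $\phi\in\Sigma^q_k$ (resp.~$\Pi^q_k$). This is the non-affine analogue of Prop.~\ref{prop:tight-approx}, which I would prove by induction on the quantifier prefix, exactly as in the affine case, observing that literals are only replaced by $c$-formulas under $\qltrans{\cdot}'$ and that $c$-formulas are treated as atoms by $\ndcomp'$ and $\condcomp'$; thus neither the base case nor any recursive step inflates the approximation, and the same combinatorial calculation as before yields the exact values $k$ and $1+k$.

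The main obstacle, and the only place where the argument departs from the affine case, is verifying that replacing literals $a$ by $\bot\vlor a$ under $\qltrans{\cdot}'$ does not secretly increase $\ndcomp'$ or $\condcomp'$ on the image. This is precisely why $\focMALL'$ and the new approximations were set up so that $c$-formulas live in the `atomic' base case: a $c$-formula never triggers a $\dec$ or $\codec$ in $\focMALL'$, and the clauses for $\ndcomp'$, $\condcomp'$ on the base cases $\vec a,\vec c$ assign value $1$ with no alternation. Combined with Obs.~\ref{obs:atomic-weakening} (weakenings in the proofs of Sect.~\ref{sect:qltrans} only appear on literals), this guarantees that the tightness calculation from Prop.~\ref{prop:tight-approx} transfers verbatim, closing the loop and yielding both halves of the corollary for each $k\geq 1$.
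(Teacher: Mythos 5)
Your proposal is correct and follows essentially the paper's own route: the corollary is just the instantiation of Thm.~\ref{thm:main2} mirroring the affine case, with membership via $\lqtrans{\cdot}'$ and the primed provability predicates, and hardness via $\qltrans{\cdot}'$ together with the primed analogue of Prop.~\ref{prop:tight-approx}, which the paper subsumes in Thm.~\ref{thm:main2}.\ref{item:inverse-encodings-nonaffine} and proves ``by essentially the same arguments'' as the affine case, exactly as you sketch. One pedantic point: when $\condcomp'(A) < \ndcomp'(A) \leq k$ the formula $\lqtrans{A}'$ is by definition a $\Pi^q_{k'}$ sentence for some $k' < k$ rather than a $\Sigma^q_k$ one, but since $\Pi^p_{k'} \subseteq \Sigma^p_{k'+1} \subseteq \Sigma^p_k$ this does not affect your membership bound.
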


\section{Conclusions and further remarks}
\label{sect:concs}

We gave a refined presentation of (multi-)focussed systems for multiplicative-additive linear logic, and its affine variant, that accounts for deterministic computations in proof search, cf.~Sect.~\ref{sect:prelims-ll-proof-search}.
We showed that it admits rather controlled normal forms in the form of \emph{bi-focussed} proofs, and highlighted a duality between focussing and `co-focussing' that emerges thanks to this presentation.
The main reason for using focussed systems such as ours was to better reflect the alternating time complexity of bottom-up proof search, cf.~Sect.~\ref{sect:proof-search-predicates}.
We justified the accuracy of these bounds by showing that natural measures of proof search complexity for $\afocMALL$ tightly delineate the theorems of $\aMALL$ according to associated levels of the polynomial hierarchy, cf.~Sects.~\ref{sect:qltrans} and \ref{sect:lqtrans}.
We were also able to obtain a similar delineation for $\MALL$ too, cf.~Sect.~\ref{sect:mallph}.
These results exemplify how the capacity of proof search to provide optimal decision procedures for logics extends to important subclasses of $\pspace$.
As far as we know, this is the first time such an investigation has been carried out.

\smallskip

Our presentation of $\focMALL(\mathsf w)$ should extend to logics with quantifiers and exponentials, following traditional approaches to focussed linear logic, cf.~\cite{Andreoli92,laurent:tel-00007884}.
It would be interesting to see what could be said about the complexity of proof search for such logics.
For instance, the usual $\forall$ rule becomes deterministic in our analysis, since it does not branch:
\[
\vlinf{\forall}{\text{$y$ is fresh}}{\Gamma, \forall x. A(x)}{\Gamma, A(y)}
\]
As a result, the alternation complexity of proof search is not affected by the $\forall$-rule, but rather interactions between positive connectives, including $\exists$, and negative connectives such as $\vlan$.
Interpreting this over a classical setting could even give us new ways to delineate true QBFs according to the polynomial hierarchy, determined by the alternation of $\exists$ and propositional connectives rather than $\forall$.
One issue here is that witnessing $\exists$ steps seems to significantly impact the complexity proof search.
Nonetheless this would be an interesting line of future research.

\smallskip

Much of the literature on \emph{logical frameworks} via focussed systems is based around the idea that an inference rule may be simulated by a `bi-pole', i.e.\ a single alternation between an invertible and non-invertible phase of inference steps.
However accounting for determinism might yield more refined simulations where, say, non-invertible rules are simulated by phases of deterministic and nondeterministic rules, but not co-nondeterministic ones.
In particular this should be possible for standard translations between modal logic and first-order logic, cf.~\cite{MillerV15,MarinMV16}.

\subsection*{Acknowledgements}
	I would like to thank Taus Brock-Nannestad, Kaustuv Chaudhuri, Sonia Marin and Dale Miller for many fruitful discussions about focussing, in particular on the presentation of it herein.
	I am particularly grateful to Sonia Marin and the anonymous reviewers of both this work and the previous conference version, \cite{Das18:ijcar}, for their careful and helpful proofreading.

\bibliographystyle{alpha}      
	\bibliography{focus-complex-bib}

\end{document}